\definecolor{BrickRed}{rgb}{0.8,0.25,0.33}
    \let\Cref\crtCref
    \let\cref\crtcref
\theoremstyle{plain}
\newtheorem{thm}{Theorem}[section]
\newtheorem{cor}[thm]{Corollary}
\newtheorem{prop}[thm]{Proposition}
\newtheorem{fact}[thm]{Fact}
\newtheorem{lemma}[thm]{Lemma}
\newtheorem{obs}[thm]{Observation}
\newtheorem{exm}[thm]{Example}
\theoremstyle{definition}
\crefname{thm}{Theorem}{theorems}
\crefname{cla}{Claim}{claims}
\crefname{lem}{Lemma}{lemmas}
\crefname{fact}{Fact}{facts}
\newcommand{\approxconstant}{0.0115}
\newcommand{\e}{\mathrm{e}}
\newcommand{\E}{\mathbb{E}}
\newcommand{\gammac}{\gamma_\kappa}
\newcommand{\etac}{\eta_{\kappa}}
\newcommand{\Deltac}{\Delta_\kappa}
\newcommand{\growingmid}{\mathrel{}\middle|\mathrel{}}
\newcommand{\OPT}{\mathrm{OPT}}
\newcommand{\SW}{\mathrm{SW}}
\newcommand{\opton}{\mathrm{OPT}_\mathrm{on}}
\newcommand{\FPt}{\mathsf{FP}_t}
\newcommand{\SPt}{\mathsf{SP}_t}
\newcommand{\FPtj}{\mathsf{FP}_{t,j}}
\newcommand{\SPtj}{\mathsf{SP}_{t,j}}
\newenvironment{wrapper}[1]
{
	\smallskip
	\begin{center}
		\begin{minipage}{\linewidth}
			\begin{mdframed}[hidealllines=true, backgroundcolor=gray!20, leftmargin=0cm,innerleftmargin=0.35cm,innerrightmargin=0.35cm,innertopmargin=0.375cm,innerbottommargin=0.375cm,roundcorner=10pt]
				#1}
			{\end{mdframed}
		\end{minipage}
	\end{center}
	\smallskip
}
\newenvironment{wrapperEmpty}[1]
{
	\smallskip
	\begin{center}
		\begin{minipage}{\linewidth}
			\begin{mdframed}[hidealllines=false, backgroundcolor=white, leftmargin=0cm,innerleftmargin=0.35cm,innerrightmargin=0.35cm,innertopmargin=0.375cm,innerbottommargin=0.375cm,roundcorner=10pt]
				#1}
			{\end{mdframed}
		\end{minipage}
	\end{center}
	\smallskip
}
\begin{document}

\title{Approximating Optimum Online for Capacitated Resource Allocation\footnote{This work was done in part while the authors were visiting the Simons Institute for the Theory of Computing. Research supported in part in by Deutsche Forschungsgemeinschaft (DFG, German Research Foundation), Project No. 437739576, NSF Awards CCF2209520, CCF2312156, and a gift from CISCO. }}

\author{Alexander Braun\footnote{University of Bonn. alexander.braun@uni-bonn.de}  \quad Thomas Kesselheim\footnote{University of Bonn. thomas.kesselheim@uni-bonn.de} \quad Tristan Pollner\footnote{Stanford University. tpollner@stanford.edu} \quad Amin Saberi\footnote{Stanford University. saberi@stanford.edu}}

\date{\vspace{-1.3cm}}

\maketitle
\begin{abstract}
    We study online capacitated resource allocation, a natural generalization of online stochastic max-weight bipartite matching. This problem is motivated by ride-sharing and Internet advertising applications, where online arrivals may have the capacity to serve multiple offline users. 

Our main result is a polynomial-time online algorithm which is $(\nicefrac{1}{2} + \kappa)$-approximate to the optimal online algorithm for $\kappa = \approxconstant$. This can be contrasted to the (tight) $\nicefrac{1}{2}$-competitive algorithms to the optimum offline benchmark from the prophet inequality literature. Optimum online is a recently popular benchmark for online Bayesian problems which can use unbounded computation, but not ``prophetic'' knowledge of future inputs. 

Our algorithm (which also works for the case of stochastic rewards) rounds a generalized LP relaxation from the unit-capacity case via a two-proposal algorithm, as in previous works in the online matching literature. 
A key technical challenge in deriving our guarantee is bounding the positive correlation among users introduced when rounding our LP relaxation online. Unlike in the case of unit capacities, this positive correlation is unavoidable for guarantees beyond $\nicefrac{1}{2}$. Conceptually, our results show that the study of optimum online as a benchmark can reveal problem-specific insights that are irrelevant to competitive analysis.
\end{abstract}

\newpage
\setcounter{tocdepth}{3}
\tableofcontents

\newpage
\section{Introduction}

We study an \emph{online capacitated allocation problem}, in which users $\{1, 2, \ldots, n\}$ should be assigned to resources arriving online. Specifically, at each timestep $t \in \{1, 2, \ldots, T\}$, a new resource $t$ arrives and its capacity $c_t$ and the values $v_{i,t} \ge 0$ for every user $i \in [n]$ are sampled from 
a known distribution $\mathcal{D}_t$.
Upon the arrival of a resource, we observe its realized capacity and values, and must irrevocably decide which users to allocate to it. Our goal is to maximize social welfare, i.e., the sum of the values of assigned user-resource pairs. 

This problem naturally arises in a number of settings, for example in the context of ride-sharing: after a spike in demand (e.g. at the arrival of a flight, or at the end of a large concert), waiting passengers need to be assigned to cabs who become available online. Another example is online advertising, which initiated the vast literature on online Bayesian matching \cite{feldman2009online}, where ads should be assigned to search queries arriving online. Further examples are abundant: the assignment of orders to trucks by a shipping fulfillment center,  the procurement of goods for stores with limited inventories, etc. Our formulation goes beyond the intensely-studied setting where each online resource can be matched to at most one offline node (e.g. \cite{bahmani2010improved, haeupler2011online, manshadi2012online, jaillet2013online, ezra2020online, brubach2020online, huang2021online, tang2022fractional, huang2022power}). In many cases resources have capacity larger than one; multiple passengers can share a cab and multiple ads can be displayed under a search query. 

The literature has studied this problem from the ``prophet inequality'' perspective, designing algorithms which compare favorably to the \emph{optimum offline} algorithm which sees all realizations upfront. In particular, for online capacitated allocation, it is possible to obtain $\nicefrac{1}{2}$ of the optimum offline benchmark \cite{feldman2015combinatorial, dutting2020prophet}, and that is the best possible ~\cite{krengel1978semiamarts}. 

Still, comparing to the optimum offline algorithm as a benchmark might be too pessimistic in Bayesian settings. Its ``prophetic'' access to future realizations is unattainable for online algorithms (see \cite{papadimitriou2021online} for further discussion). Therefore, a recent line of work (also including, e.g., \cite{anari2019nearly, braverman2022max, dutting2023prophet}) has shifted attention towards the following question:  
\emph{how well can we approximate the optimal (computationally unbounded) online algorithm in polynomial time?}

In other words, how much must we lose when restricting to efficient algorithms instead of solving the optimal dynamic program? On the one hand, even for unit capacities it is PSPACE-hard to approximate the optimum online algorithm within some absolute constant $1-\epsilon$ \cite{papadimitriou2021online}.

Luckily, approximations strictly better than $\nicefrac{1}{2}$ exist for unit capacities: \cite{papadimitriou2021online} gave a $0.51$-approximate algorithm, later improved to 0.52 \cite{saberi2021greedy}, $1-\nicefrac{1}{\e} \approx 0.632$ \cite{braverman2022max}, and $0.652$ \cite{naor2023dependentroundingarxiv}. 
Motivated by this, we ask:

\begin{wrapperEmpty}
    \emph{Can we obtain a better than $\nicefrac{1}{2}$-approximate algorithm to the optimal (computationally unbounded) online algorithm beyond unit-capacity allocations?} 
\end{wrapperEmpty}

Our main result is to answer this question in the affirmative. In particular, we show that for online capacitated resource allocation problems, we can beat $\nicefrac{1}{2}$ by a constant. 

\begin{wrapper}
\begin{thm}\label{thm:beat_half_general_statement}
	For online capacitated allocation, there exists a polynomial time $\left(\nicefrac{1}{2} + \kappa\right)$-approximation to the social welfare of the optimal online algorithm, for a constant ${\kappa \ge \approxconstant}$. 
\end{thm}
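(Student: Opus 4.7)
The plan is to approximate an LP relaxation $\LPon$ that upper-bounds $\E[\opton]$. A natural choice introduces variables for the probability that user $i$ is served by resource $t$ in each realization $\theta \sim \mathcal{D}_t$, subject to per-realization capacity constraints $\sum_i x_{i,t}(\theta) \le c_t(\theta)$ and per-user allocation constraints $\sum_t \E_\theta[x_{i,t}(\theta)] \le 1$. A standard inductive argument on the optimal dynamic program, tracking each user's remaining assignment probability, shows $\E[\opton] \le \LPon$; this is the benchmark we in fact approximate.

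The rounding generalizes the two-proposal scheme from the unit-capacity online matching literature. For each arriving resource $t$ with realization $\theta_t$, the algorithm draws a pair of disjoint sets $(\FPt, \SPt) \subseteq [n]$, each of size at most $c_t(\theta_t)$, with marginals $\bbP[i \in \FPt] = \bbP[i \in \SPt] = x_{i,t}(\theta_t)/2$; these are the first and second proposals. Each user $i$ accepts a first proposal greedily whenever they are still available, and accepts a second proposal only when a carefully designed activation event $\mathcal{A}_{i,t}$ occurs (e.g., conditioning on having been a first-proposal recipient at some prior time without yet being matched). This activation event is what pushes the analysis past the $\nicefrac{1}{2}$ barrier, since it effectively correlates second-proposal acceptances with evidence of remaining LP slack on $i$.

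The analysis then lower-bounds, for each pair $(i,t)$ and each realization, the probability that $i$ is still available conditional on being proposed. The main obstacle is the correlation structure. In the unit-capacity case, if $i$ is unavailable because it was matched to some past resource $s$, then no other user could have taken $s$ — yielding a helpful \emph{negative} correlation among availability events that powers constants like $1-\nicefrac{1}{\e}$. In our capacitated setting this negative correlation disappears: multiple users may share a past resource, so the availability events are instead \emph{positively} correlated, which a priori could erase the $\kappa$-gain. The plan is to bound this positive correlation quantitatively by decomposing the excess joint-unavailability probability over past arrivals and using the per-user LP constraint $\sum_t \E[x_{i,t}] \le 1$ together with the per-realization capacity constraint to show that each past resource contributes only a small, LP-controlled amount to the correlation error.

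Putting this together, I would optimize the activation probability of the second-proposal step against the positive-correlation penalty: since the penalty scales with the joint LP mass assigned simultaneously to multiple users by a single past resource while the gain scales linearly with the activation probability, one obtains a strict improvement over $\nicefrac{1}{2}$ for a carefully chosen (and small) activation rate, giving $\kappa \ge \approxconstant$. The hardest step is setting up the correlation decomposition cleanly enough that the positive-correlation penalty can be matched term-by-term against the second-proposal gain; I expect this to require a conditioning argument that treats each past resource's contribution separately and exploits the fact that within a resource, proposals land in a set of size at most $c_t$, not a single slot.
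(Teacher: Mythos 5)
You have the right diagnosis of the central obstacle --- positive correlation among offline availabilities where the unit-capacity analyses exploited negative correlation --- but both pillars of your plan fail as written.

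First, your LP is the \emph{offline} (prophet) relaxation: capacity constraints $\sum_i x_{i,t}(\theta) \le c_t(\theta)$ plus $\sum_t \E_\theta[x_{i,t}(\theta)] \le 1$. This upper-bounds the offline optimum, and the classical prophet lower bound already forces $\opton \le \tfrac12 \cdot \optoff$ on some instances: take a single user, one deterministic resource of value $1$, and one resource of value $\nicefrac{1}{\epsilon}$ arriving with probability $\epsilon$. On that instance your LP has value $2-\epsilon$ while $\opton = 1$, so \emph{no} online algorithm is $(\tfrac12 + \kappa)$-approximate against your LP. The paper's benchmark is approximable beyond $\tfrac12$ only because of an additional online-only constraint $x_{i,t} \le p_t \bigl(1 - \sum_{t'<t} x_{i,t'} q_{i,t'}\bigr)$, expressing that for an online algorithm the events ``$t$ arrives'' and ``$i$ is still free at $t$'' are independent. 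This constraint both collapses the LP to $\opton$ on the example above and is what the correlation analysis actually needs (it yields $\nicefrac{1}{p_t} \le (1-y_{i,t})/x_{i,t}$, which tames the $\nicefrac{1}{p_t}$ blow-up in the cross term of the joint-availability bound). Your ``inductive argument tracking remaining assignment probability'' is gesturing at exactly this constraint; as stated, you have dropped it.

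Second, the rounding does not respect capacity and has no mechanism to soak up slack. Drawing disjoint sets $\FPt, \SPt$, each of size up to $c_t$ with marginals $x_{i,t}/2$, can allocate up to $2c_t$ users to $t$. More importantly, a fixed $50/50$ split cannot beat $\tfrac12$: the improvement must come from re-using the capacity left over after round one. The paper's first proposal uses the \emph{full} conditional marginal $x_{i,t}/p_t$ via pivotal sampling, thinned by $\alpha_{i,t}$; the second proposal then uses marginals scaled down \emph{dynamically} by $(1 - A_t/c_t)$ where $A_t$ is the number allocated in round one, which by construction caps the total at $c_t$ while channeling precisely the unused capacity to still-free users. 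The second-round acceptance probability $\beta_{i,t}$ is then calibrated to $\rho_{i,t} = \E\bigl[\mathbbm{1}[i\text{ still free}]\cdot(1 - A_t/c_t)\,\big|\, t\text{ arrives}\bigr]$, and the heart of the analysis is an inductive bound $\Pr[F_{i,t}\wedge F_{j,t}] \le f(y_{i,t})\Pr[F_{i,t}]\Pr[F_{j,t}]$ with an explicit $f$, not a per-resource decomposition. Your ``activation event'' intuition does correspond loosely to the paper's restriction of second proposals to late pairs with $\alpha_{i,t}=1$, but neither the LP nor the rounding you describe will carry that instinct to a proof.
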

\end{wrapper}

Interestingly, through the lens of prophet inequalities, the unit-capacity and the general capacity variant of the problem behave nearly identically. These variants (and more general ones) can all be handled by the same algorithmic template and techniques for the unit-capacity case directly carry over (for example, applying a $\nicefrac{1}{2}$-balanced online contention resolution scheme (OCRS) to each offline user). As we will discuss, studying capacitated resource allocation with the optimum online benchmark leads to technical challenges distinct from the unit-capacity case, and reveals differences that do not arise in competitive analysis. Our work hence gives evidence for the richness of studying optimum online as a benchmark. 

We also provide an extension to where allocations are probabilistically successful, motivated by our initial example of Internet advertising. As in the literature on online matching with stochastic rewards \cite{mehta2012online, huang2020onlinestochastic, goyal2023onlinestochastic}, after displaying ads under a search request, typically the advertiser is only charged if the ad is eventually clicked. This is typically modeled as happening with known probability called the click-through-rate. We hence update our setting so that after allocating at most $c_t$ users to the resource $t$, each user $i$ is \emph{successfully allocated} with known probability $q_{i,t}$. If an offline user $i$ is not successfully allocated, it remains available to be matched in future rounds; however, online arrivals do not get to adaptively pick new allocations. 

\begin{wrapper}
\begin{thm}\label{thm:beat_half_stochastic rewards}
	For online capacitated allocation with stochastic rewards, there exists a polynomial time $\left(\nicefrac{1}{2} + \kappa\right)$-approximation to the social welfare of the optimal online algorithm, for a constant ${\kappa \ge \approxconstant}$. 
\end{thm}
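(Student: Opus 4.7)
My plan is to extend the machinery used to prove Theorem~\ref{thm:beat_half_general_statement} by threading the success probabilities $q_{i,t}$ through every stage of the argument. First, I would formulate a stochastic-rewards analogue of the generalized LP relaxation used in the base theorem. Letting $x_{i,t}$ denote the probability that a policy \emph{attempts} to allocate user $i$ at resource $t$ (conditional on the realized type of $t$), the capacity constraints $\sum_i x_{i,t} \le c_t$ stay as before, but the availability constraint becomes $\sum_t q_{i,t}\, x_{i,t} \le 1$ (since user $i$ is consumed only when an attempt succeeds), and the objective becomes $\sum_{i,t} q_{i,t}\, v_{i,t}\, x_{i,t}$. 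Any auxiliary constraints used in the base proof to tighten the relaxation against $\opton$ should be adapted by inserting the corresponding $q_{i,t}$ factors; one then verifies that the optimal online policy induces a feasible solution to this modified LP, so its optimum continues to upper-bound $\opton$.

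Next, I would lift the two-proposal rounding scheme of Theorem~\ref{thm:beat_half_general_statement} to the stochastic-rewards setting. Upon the arrival of resource $t$, the algorithm samples proposals based on the $x_{i,t}$ values, selects up to $c_t$ available proposed users, commits to allocating them, and then nature resolves each allocation via an independent $\Ber(q_{i,t})$. The key structural observation is that, from the algorithm's perspective, the event ``$i$ is consumed at time $s$'' factors as the product of the (possibly correlated) event ``$i$ is proposed to and available at time $s$'' and an \emph{independent} Bernoulli with parameter $q_{i,s}$. Consequently, the indicator ``$i$ is still available at time $t$'' is a non-increasing function of the attempt-indicators and success-indicators from times $s < t$, where the success indicators are independent of everything else in the system.

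The main obstacle, just as in the base theorem, is controlling the positive correlation among users' availability events that is introduced by online LP rounding. Stochastic rewards add an extra layer of randomness, but the independence of the success Bernoullis lets me essentially reduce to the correlation bound of Theorem~\ref{thm:beat_half_general_statement} with effective consumption rates $q_{i,t}\, x_{i,t}$: after integrating out the (independent) success coin-flips, the residual joint distribution of the attempt events is precisely what arises in the base model, so the same correlation inequality applies. Combining this with the value decomposition $\ALG = \sum_{i,t} v_{i,t}\, q_{i,t}\, \bbP[i \text{ proposed to at } t \text{ and available}]$ then yields $\ALG \ge (\nicefrac{1}{2}+\kappa)\cdot\mathrm{LP} \ge (\nicefrac{1}{2}+\kappa)\cdot\opton$ with the same constant $\kappa \ge \approxconstant$. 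I expect the hardest step will be verifying that the auxiliary tightening constraints in the base LP, which encode subtle properties of the optimum online dynamic program, translate cleanly after inserting $q_{i,t}$ factors and still deliver the same slackness needed to drive the correlation bound beyond $\nicefrac{1}{2}$.
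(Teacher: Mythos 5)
Your proposal matches the paper's own treatment (Appendix F): adapt \eqref{LP} by inserting $q_{i,t}$ into the objective and online constraint, run the same two-proposal rounding with $y_{i,t} := \sum_{t'<t} x_{i,t'} q_{i,t'}$, and use independence of the success Bernoullis to extend the correlation bound after substituting $\tilde{x}_{i,t} := x_{i,t} q_{i,t}$. One small imprecision: ``the residual joint distribution of the attempt events is precisely what arises in the base model'' is not literally true (an attempt can now fail, leaving the user available), but the one-step recursion in \Cref{claim:claim_for_corrbound} depends on $(x_{i,t},q_{i,t})$ only through the product $\tilde{x}_{i,t}$, which is what lets the induction carry over verbatim.
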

\end{wrapper}

Note \Cref{thm:beat_half_general_statement} is the special case of \Cref{thm:beat_half_stochastic rewards} in which all successes are deterministic, i.e. success probabilities $q_{i,t} = 1$ for every $i$, $t$.

\subsection{Our Techniques}
\label{sec:techniques}

Our algorithm rounds an LP relaxation online while introducing a controllable amount of positive correlation among offline users. For each online arrival $t$, we apply two rounds of pivotal sampling to the unallocated offline nodes, to guarantee never ``over-allocating'' $t$ beyond its remaining capacity. In each round, we only randomly allocate a subset of this sampled group to avoid large positive correlation between users. 

Throughout, in the main body of the paper, we focus on the special case where resource arrivals are ``Bernoulli'' (i.e., in step $t$, resource $t$ with known capacity $c_t$ and known values $v_{i,t} \ge 0$ arrives with probability $p_t$ and does not show up with probability $1-p_t$.)

\paragraph{LP relaxation.} In order to bound the social welfare achieved by the optimum online algorithm, we will use a linear program (LP) relaxation with variables $\{x_{i,t}\}$. The variables can be interpreted as the probabilities we would like an online algorithm to assign each user $i$ to resource $t$ (see \Cref{sec:preliminaries}). We require that at most $p_t \cdot c_t$ users are allocated to resource $t$ in expectation, and also make use of an ``online constraint'' which does not hold for offline algorithms, as in \cite{papadimitriou2021online, torrico2022dynamic}. In particular, for online algorithms, the arrival of resource $t$ and the event that user $i$ is unallocated at $t$ are independent. We account for stochastically successful allocations in this constraint and the LP's objective via the independence of success along edges from an online algorithm's allocation decisions. 

\paragraph{A two-proposal algorithmic approach.} Our algorithm rounds an optimal LP solution online such that (i) for every resource $t$, we do not allocate more users than its capacity, and (ii) every pair $(i,t)$ is successfully allocated with probability $(\nicefrac{1}{2} + \kappa) \cdot x_{i,t} \cdot q_{i,t}$ for a constant $\kappa := \approxconstant$. To achieve guarantees (i) and (ii), we use a two-proposal algorithm inspired by the algorithm used for matching by \cite{papadimitriou2021online}. For every resource $t$, we run up to two rounds; in each we propose to a subset of users whose size does not exceed the remaining capacity and a random subset of users accepts. 

While our LP relaxation and high-level framework are similar to \cite{papadimitriou2021online}, new ideas are needed for the specifics of the algorithm and (more importantly) its analysis. For example, when a new resource $t$ arrives, we would like to sample a subset of users such that $i$ is included with probability proportional to $x_{i,t}$. In the matching case, summing $x_{i,t}/p_t$ over all users $i$ never exceeds one, and hence, the vector $\left( x_{i,t}/p_t \right)_i$ naturally forms a probability distribution over users. This is no longer true in our more general capacitated allocation problem. Naïvely sampling users \emph{independently} with the given marginals has the issue that we might exceed the capacity of resource $t$. We instead rely on the technique of \emph{pivotal sampling} (also known as \emph{dependent rounding}) to ensure that the sampled set of users never exceeds the capacity of resource $t$ and that sampled users are negatively correlated. 
Via the pivotal sampling subroutine we get a \emph{first proposal set} of users. We note that in order to obtain this first proposal set, we apply the pivotal sampling in a history-agnostic way. That is, we may include previously successfully allocated users in the proposal set for resource $t$ at first. From the proposal set, we then randomly allocate each available user $i$ with some probability. 
This is essentially done according to a $(\nicefrac{1}{2} + \kappa )$-balanced online contention resolution scheme (OCRS). 

After this allocation process, there might remain a gap between the capacity of resource $t$ and the number of allocated users. We crucially exploit this gap by drawing a \emph{second proposal set} of users by another call of the pivotal sampling subroutine with reduced marginal probabilities. The reduction is precisely to ensure that the capacity of resource $t$ is not exceeded. Afterwards, we probabilistically assign a subset of these users, with a carefully chosen downsampling function. 

\paragraph{Analyzing the algorithm.}

For the analysis, we distinguish for each pair $(i,t)$ whether it is assigned with probability at least $(\nicefrac{1}{2} + \kappa) \cdot x_{i,t} \cdot q_{i,t}$ already from the first proposal, or requires the second proposal to reach this threshold. In the first case, the analysis proceeds in a straightforward way via the calculations originally from the OCRS literature (see e.g. \cite{ezra2020online}). 

For the remaining pairs $(i,t)$, bounding their contribution to the social welfare requires analyzing the second proposal, and doing so is our main technical contribution. 
For the second proposal along $(i,t)$ to contribute to social welfare, clearly user $i$ needs to be unallocated just before $t$ arrives. Furthermore, we are required to reduce the marginal probability that $i$ is sampled in the second proposal depending on the number of allocated users in $t$'s first proposal. This number in turn depends on the availability of \emph{other} users $j \neq i$. In particular, if a user $j$ is already assigned before the arrival of resource $t$, even when sampled as a first proposal, we cannot allocate the user. This increases the remaining capacity of resource $t$, which is beneficial for the marginal reduction required in our second proposal. 

Conversely, if we condition on other users $j \neq i$ being free before the arrival of resource $t$, it leads to a \emph{larger} decrease of the marginal probabilities in our second proposal. Still, this implies a decrease of the social welfare contribution of $(i,t)$. The relevant technical question, then, is if we condition on $i$ being free before $t$ arrives (necessary for $(i,t)$ to contribute to social welfare), how much can the conditional probabilities of users $j \neq i$ being free increase? Equivalently, how significantly can the availabilities of offline users be correlated? 

In the matching case this challenge was readily handled by showing negative correlation. While this is not possible for our problem (\Cref{subsec:poscorr}), we show that our algorithm obtains a good approximation if it can just avoid introducing ``large'' amounts of positive correlation. In the most technical part of our paper, we show our two-proposal algorithm achieves this by inductively tracking the availability of users over multiple rounds. In particular, we show the probability of both users being free at time $t$ is at most the product of the users' individual probabilities of being free, multiplied by $f\left(\sum_{t' < t}  x_{i,t'} \cdot q_{i,t'} \right)$, for $f(z) := 1 + z \cdot \left( \frac{(0.5+\kappa)^2}{1 - z \cdot (0.5 + \kappa)} \right)$. Interestingly, the point at which we evaluate the function $f$ does not depend on user $j$ at all.

\subsection{Capacitated Allocation Lacks Negative Correlation} \label{subsec:poscorr}

Even in the case where every success probability $q_{i,t}$ equals $1$, the potential positive correlation among offline users underlies the challenge for capacities exceeding 1. For example, a tempting naïve approach for general capacities is to directly reduce to the unit-capacity case: upon the arrival of a resource with capacity $c_t$, model this as $c_t$ resources with unit capacities, and simply run the algorithms from prior work. Unfortunately, this fails; a crucial assumption of the relevant literature is that arrivals are independent across different rounds, and introducing positive correlation across arrivals can be extremely problematic for existing algorithms. For example, consider the natural generalization of the algorithm by \cite{braverman2022max}: in round $t$, let users propose to the arriving resource and allocate the $c_t$ proposing users with the highest values. Here the positive correlation introduced can create severe problems. 

\begin{restatable}{obs}{bdmzeroapprox}
\label{example:bdm0approx}
For any $\epsilon > 0$, there exists an online capacitated allocation instance where the (generalized) algorithm of \cite{braverman2022max} is no more than $\epsilon$-approximate with respect to the welfare achieved by the optimal (computationally unbounded) online algorithm. 
\end{restatable}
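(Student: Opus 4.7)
The plan is to exhibit a parametric family of instances on which the generalized BDM algorithm achieves arbitrarily small welfare relative to the optimum online benchmark. The mechanism I aim to exploit is the positive correlation among offline user allocations at a high-capacity arrival: since generalized BDM has each user propose independently with probability $x_{i,t}/p_t$ and then accepts all of the top-$c_t$ proposers, the realized set of allocated users at round $t$ is sharply bimodal (either many proposers and many accepted, or few of each), which in turn leaves highly correlated patterns of availabilities for subsequent rounds.

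I would take $n$ offline users and combine a ``bulk'' arrival of capacity $c_1 \approx n$ and moderate per-user value with a sequence of high-value unit-capacity ``premium'' arrivals, each targeted at a distinct user. Parameters would be chosen so that (i) the LP prescribes nontrivial proposal probabilities at the bulk arrival, (ii) independent proposals there consume roughly $c_1$ users en masse, and (iii) $\OPT_{\text{online}}$, being fully adaptive, can sidestep the bulk arrival (or use it only for users not needed elsewhere) and harvest essentially all of the premium value. A natural path is to tune the premium arrival probabilities so that the LP treats the bulk round as competitive on an expected per-unit basis, while still leaving $\OPT_{\text{online}}$ with enough flexibility to recover the premium welfare.

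Given the instance, the proof has three pieces: compute (or bound) the online LP value and verify it prescribes $x_{i,1}>0$ for the bulk arrival; analyze $\OPT_{\text{online}}$ directly by exhibiting an adaptive policy that skips or carefully controls the bulk round and captures the bulk of the premium welfare; and bound generalized BDM's expected welfare via a binomial concentration argument on the number of proposers at the bulk arrival. Taking the ratio, one aims to show it vanishes with $n$.

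The hard part is forcing the LP to genuinely prescribe nontrivial proposal mass at the bulk round: if the LP optimally sets $x_{i,1}=0$, BDM simply skips the bulk round and matches $\OPT_{\text{online}}$, so the construction collapses. Likely remedies are to use multiple bulk rounds with small arrival probabilities, so that the online constraint $x_{i,t}\le p_t(1-\sum_{s<t}x_{i,s})$ forces a fractional spread of proposal mass across bulks; or to cascade a logarithmic number of ``level'' rounds with doubling value and halving capacity so that a constant-factor loss incurred by BDM at each level (due to independent rather than availability-conditional proposals) compounds multiplicatively to a ratio of $O(1/\log n)$ after $\Theta(\log n)$ levels. Verifying the precise arithmetic of the compounded positive-correlation loss, and in particular ruling out that the LP has an integer optimum that BDM can realize exactly, is the remaining technical step.
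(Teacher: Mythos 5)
Your overall direction (a bulk large-capacity arrival on which generalized BDM over-commits, followed by a high-value low-capacity arrival that optimum online can protect) matches the paper, and you correctly flag the key risk: the LP might simply set $x_{i,1}=0$ on the bulk round, in which case BDM skips it and the construction collapses. However, the workarounds you propose --- multiple bulk rounds with small arrival probabilities, or a cascade of $\Theta(\log n)$ doubling levels --- are not needed and overcomplicate the argument, and your specific suggestion of a separate premium round \emph{per user} would in fact realize your feared failure mode: with $n$ unit-capacity premium rounds each relevant to one distinct user, the constraint $\sum_i x_{i,t}\le p_t c_t$ becomes non-binding on the premium rounds, and the LP optimum sets $x_{i,1}=0$ and puts all mass on the premiums, which BDM realizes exactly.

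The paper's construction is much simpler and gets the LP to cooperate through two small design choices you did not consider. Use exactly \emph{one} premium round, of capacity $1$, value $n^2$ per user, arriving with probability $1$; then $\sum_i x_{i,2}\le 1$ is binding, so only one unit of mass can go to the premium round in total, and packing the bulk round is strictly better for the LP objective. Second, have the bulk round (capacity $n$, value $1$ per user) arrive with probability $p_1 = 1-1/n$ rather than $1$. This caps each $x_{i,1}\le p_1=1-1/n$ (by Constraint~\eqref{eqn:PPSWConstraint} with empty history), and the online constraint at $t=2$ then gives $x_{i,2}\le 1-x_{i,1}=1/n$, which is exactly tight under $\sum_i x_{i,2}=1$. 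The unique LP optimum is therefore $x_{i,1}=1-1/n$ and $x_{i,2}=1/n$. With this solution, generalized BDM proposes each user to the bulk round with probability $\frac{x_{i,1}}{p_1\,(1-0)}=1$, so conditional on the bulk round arriving \emph{every} user proposes and all $n$ are matched --- no binomial concentration is needed, the outcome is deterministic. The algorithm's welfare is $(1-1/n)\cdot n + (1/n)\cdot n^2 = 2n-1$, while $\opton\ge n^2$ (just reserve one user for round $2$), giving ratio $\le 2/n < \epsilon$. So the essential missing ingredient in your sketch is the interplay between the cap $x_{i,1}\le p_1<1$ and the aggregate constraint $\sum_i x_{i,2}\le c_2$ at a \emph{single} scarce premium round, which together pin down the LP and make BDM's behavior on the bulk round all-or-nothing.
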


The formal proof can be found in \Cref{app:informative-examples}. The approach of \cite{braverman2022max} is LP-based and one of the crucial steps is an upper bound on the probability that a subset of users is matched simultaneously. Intuitively speaking, their bound can be interpreted as a form of negative correlation among the offline users with respect to the LP variables.\footnote{It is possible to extend the algorithm of \cite{braverman2022max} to one with the same approximation ratio which furthermore has full negative correlation between offline nodes.} Unfortunately, simple examples show that in our case positive correlation is \emph{required} to go beyond an approximation ratio of $\nicefrac{1}{2}$.

\begin{restatable}{obs}{positivecorrelationrequired}
\label{observation:positivecorrelationrequired}
    Any algorithm for online capacitated allocation which has an approximation ratio better than $\nicefrac{1}{2}$ with respect to \textup{\eqref{LP}} must create positive correlation between the events of offline users being available. 
\end{restatable}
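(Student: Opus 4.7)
My plan is to exhibit a concrete instance on which any algorithm maintaining pairwise non-positive correlation between the availability events $\{i \text{ free at } t\}$ can achieve at most a $\nicefrac{9}{19} < \nicefrac{1}{2}$ fraction of the LP value. The contrapositive then yields the observation.

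I would take $n$ offline users (with $n$ large) and $T = 3$ online rounds, where each round arrives independently with probability $\alpha = \nicefrac{1}{3}$, has capacity $c_t = n$, and offers value $1$ to every user. By symmetry of the instance and saturation of the online constraint $x_{i,t} \le \alpha (1 - \sum_{t' < t} x_{i,t'})$, the optimal LP solution is $x_{i,t} = a_t$ with $a_1 = \nicefrac{1}{3}$, $a_2 = \nicefrac{2}{9}$, and $a_3 = \nicefrac{4}{27}$, so the LP value equals $n\,(1-(1-\alpha)^3) = \nicefrac{19n}{27}$.

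The core step bounds the expected size of the match-set at any non-final round. Fix $t < T$, let $S_t$ denote the subset matched at round $t$ conditional on $t$ arriving, and set $p_i := \Pr[i \in S_t \mid t \text{ arrives}]$ and $q_{ij} := \Pr[i,j \in S_t \mid t \text{ arrives}]$. Decomposing the availability at $t+1$ by whether $t$ arrives gives $\Pr[i,j \text{ both free at } t+1] = 1 - \alpha(p_i + p_j - q_{ij})$ while $\Pr[i \text{ free at } t+1] \cdot \Pr[j \text{ free at } t+1] = (1-\alpha p_i)(1-\alpha p_j)$; the non-positive correlation hypothesis therefore simplifies to $q_{ij} \le \alpha\, p_i p_j$ for all $i \ne j$. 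Summing this over pairs and using $E[|S_t|^2 \mid t \text{ arrives}] \ge (E[|S_t| \mid t \text{ arrives}])^2 =: m^2$ yields $m^2 \le m + \alpha m^2$, so $m \le 1/(1-\alpha) = \nicefrac{3}{2}$ and hence $E[|S_t|] \le \alpha m \le \nicefrac{1}{2}$.

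Combining the bounds, the algorithm's expected welfare is at most $\nicefrac{1}{2} + \nicefrac{1}{2} + \alpha n = 1 + \nicefrac{n}{3}$, since the final round contributes at most $\alpha n$. The approximation ratio is therefore at most $(1 + \nicefrac{n}{3})/(\nicefrac{19n}{27}) \to \nicefrac{9}{19} < \nicefrac{1}{2}$ as $n \to \infty$, proving the claim. The main technical obstacle I anticipate is rigorously bounding $E[|S_2|]$, since the round-$3$ correlation constraint mixes contributions from both $S_1$ and $S_2$; conditioning on the history up to round $2$ and applying the same arrival-based decomposition to the conditional distribution of $S_2$ resolves this.
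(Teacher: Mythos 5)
Your round-$1$ calculation is correct and mirrors, in a different parameter regime, the core idea of the paper's proof: the paper uses a single resource of capacity $2$ adjacent to two users, arrival probability $\epsilon \to 0$, and observes that any algorithm with ratio above $\nicefrac{1}{2}$ would force $\Pr[F_{i,t+1}]\cdot\Pr[F_{j,t+1}] < 1-\epsilon \le \Pr[F_{i,t+1}\wedge F_{j,t+1}]$. Your constraint $q_{ij}\le\alpha\,p_ip_j$, derived from the arrival-based decomposition of $\Pr[F_{i,2}\wedge F_{j,2}]$, is the same mechanism made quantitative, and summing it correctly yields $\E[|S_1|]\le\alpha/(1-\alpha)$.

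However, there is a genuine gap at round $2$, and your proposed fix does not close it. The unconditional hypothesis at time $3$ only constrains the \emph{cumulative} pairwise match probabilities over rounds $1$ and $2$: writing $P_i=\Pr[i \text{ matched by round } 2]$ and $Q_{ij}=\Pr[i,j \text{ both matched by round }2]$, it gives $Q_{ij}\le P_iP_j$. Plugged into Jensen this yields the vacuous $M^2\le M+M^2$ and places no bound on $M=\sum_iP_i$. The ``$\alpha$ gain'' that powered your round-$1$ bound came from the fact that the arrival event factors out of an \emph{initially-free} population; to reproduce it for round $2$ you would need $\Pr[F_{i,3}\wedge F_{j,3}\mid H]\le\Pr[F_{i,3}\mid H]\cdot\Pr[F_{j,3}\mid H]$ for every round-$1$ history $H$ — i.e., non-positive correlation \emph{conditional on the history}. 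That is a strictly stronger hypothesis and does not follow from (and is not equivalent to) the unconditional non-positive correlation your theorem assumes. A concrete symptom: the cross-terms $\Cov(Y_i, W_j)$ between a round-$1$ match of $i$ and a round-$2$ attempt on $j$ can be arbitrarily negative and absorb all the slack in the time-$3$ inequality, leaving $\E[W_iW_j]$ unconstrained. Note that if you simply drop rounds $2$ and $3$ and take $T=1$, your round-$1$ bound alone already shows the ratio tends to $0$ as $n\to\infty$, which suffices; this single-round version is essentially the paper's proof with the roles of the small parameter swapped (many users and bounded arrival probability, rather than two users and $\epsilon\to 0$).
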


A formal version of the argument can be found in \Cref{app:informative-examples}. We note that the proof even rules out the ``negative correlation with respect to the LP'' showed by \cite{braverman2022max}, also used by follow-up work \cite{naor2023dependentroundingarxiv}. 
In contrast to the line of work by \cite{braverman2022max} and \cite{naor2023dependentroundingarxiv}, Papadimitriou \emph{et al.} \cite{papadimitriou2021online} gave a different algorithm for the unit-capacity case which operates in the mentioned ``two-proposals framework'' that has been successful for multiple problems in the online matching literature \cite{feldman2009online, manshadi2012online}. Critically, their analysis shows that almost all of the matches create negative correlation of offline nodes (in fact, satisfying the very strong property of \emph{negative association}). While our algorithm is inspired by the two-proposals framework, the example above demonstrates that there is no reasonable way to generalize this statement to the capacitated case while beating a $\nicefrac{1}{2}$-approximation. 

\subsection{Interlude on an Equivalent View: Online Combinatorial Auctions}
\label{sec:view_combinatorial_auctions}

Online capacitated resource allocation problems can also be interpreted in the context of \emph{online combinatorial auctions} --- a commonly studied setting in the prophet inequality literature, as in e.g. \cite{feldman2015combinatorial, dutting2020prophet, correa2023subadditive} and many others. Here, online arrivals correspond to \emph{buyers} and offline nodes are \emph{items}. Our capacities translate to the assumption that each buyer $t$ has a \emph{$c_t$-demand} valuation function, interpolating between unit-demand and fully additive valuations. 

In online capacitated allocation, we assume valuations are given upfront to the algorithm designer through a centralized planner. This view is (at first glance) less realistic for online combinatorial auctions --- here we would expect buyers to report their own valuations, and would need to consider incentives. Luckily, applying recent work of \cite{banihashem2024power}, we can argue that  our algorithm can be made dominant strategy incentive-compatibility (DISC) if we bound the demand size of buyers by a constant (a reasonable assumption for motivating applications). In particular, \Cref{thm:beat_half_general_statement} implies the following result which we formally prove in \Cref{app:beat_half_comb_auc}. 

\begin{wrapper}
\begin{restatable}{thm}{beathalfforcombauc}
\label{thm:beat_half_for_comb_auc}
	Say every buyer $t$ samples a $c_t$-demand valuation function, where $c_t$ is upper bounded by a constant. Then, for online combinatorial auctions, there exists a polynomial-time DSIC mechanism giving a $\left(\nicefrac{1}{2} + \kappa\right)$-approximation to the social welfare of the optimal online algorithm. 
\end{restatable}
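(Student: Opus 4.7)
The plan is to reduce \Cref{thm:beat_half_for_comb_auc} to \Cref{thm:beat_half_general_statement} via the black-box framework of \cite{banihashem2024power}, which converts approximation algorithms for online allocation into DSIC mechanisms with matching guarantees, provided one can efficiently implement the right valuation queries.

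First, I would formalize the equivalence described in \Cref{sec:view_combinatorial_auctions}. A $c_t$-demand buyer with values $v_{i,t}$ assigns to a set $S$ the value $\max_{S'\subseteq S,\, |S'|\le c_t}\sum_{i\in S'}v_{i,t}$, so the welfare-maximizing allocation to such a buyer, given a set of still-available items, is precisely to pick the top-$c_t$-valued items. This is exactly what our capacitated-allocation algorithm does when resource $t$ arrives with capacity $c_t$ and values $v_{i,t}$. Hence the algorithm underlying \Cref{thm:beat_half_general_statement}, viewed on $c_t$-demand valuations, attains a $(\nicefrac{1}{2}+\kappa)$-approximation to the optimum online social welfare in the combinatorial-auction sense.

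Next I would invoke the reduction of \cite{banihashem2024power}. Their framework takes any online allocation rule $\calA$ for buyers drawn from known product distributions and, given access to a demand/value oracle for each buyer's reported valuation, produces a DSIC posted-price (or threshold-based) mechanism whose expected welfare matches $\calA$'s up to the same approximation factor against the optimum online benchmark. The only nontrivial requirement is that (i) the allocation rule $\calA$ depends on the reported valuations in a structured way compatible with their reduction (which is the case for our LP-plus-two-proposal algorithm, since the LP and the pivotal-sampling/downsampling steps only need the values $v_{i,t}$ and the cardinality constraint), and (ii) the relevant valuation queries can be answered in polynomial time. For $c_t$-demand valuations with $c_t\le C$ constant, any demand query ``what is the utility-maximizing subset of size at most $c_t$ under item prices $(\pi_i)$?'' is answered by sorting the $n$ items by $v_{i,t}-\pi_i$ and taking the top positive entries up to $c_t$; this is clearly polynomial. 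Likewise, computing the LP and simulating the two-proposal rounding require only values on singletons and cardinalities, both efficiently extractable from a $c_t$-demand reported valuation. Plugging our algorithm into their reduction then yields a polynomial-time DSIC mechanism with the stated $(\nicefrac{1}{2}+\kappa)$-guarantee.

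The main obstacle I anticipate is verifying that the hypotheses of \cite{banihashem2024power} are satisfied by our specific two-proposal rounding scheme --- in particular, that the per-buyer allocation rule can be written as a monotone, oracle-accessible procedure in the form their reduction requires, even though it combines two rounds of pivotal sampling with a downsampling function. The cleanest way I would handle this is to isolate, for each arriving buyer $t$, the conditional allocation rule induced by our algorithm given the history (which is simply ``allocate to each available $i$ with a certain history-dependent probability, capped at $c_t$ items in total''), show it is monotone in the reported values $v_{i,t}$, and then observe that this is exactly the form of per-step allocation rule that the reduction of \cite{banihashem2024power} converts to DSIC. Once this structural verification is in place, the approximation ratio transfers immediately from \Cref{thm:beat_half_general_statement}, and the polynomiality follows from the constant bound on $c_t$.
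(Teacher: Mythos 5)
Your overall approach is the same as the paper's --- invoke the black-box reduction of \cite{banihashem2024power} with \Cref{allocationalggeneral} as the input algorithm --- but you misidentify both of the hypotheses that actually need checking. First, the paper's crucial structural observation is that \Cref{allocationalggeneral} is \emph{past-valuation-independent}: the allocation decision for buyer $t$ depends only on the set of still-available items, buyer $t$'s own realized valuation/capacity, and the precomputed LP solution, and in particular does not depend on the reported valuations of earlier buyers except through which items they consumed. This is the property Theorem 19 of \cite{banihashem2024power} requires, and it is what you should be verifying; your appeal to ``monotonicity'' of the per-step allocation rule is not a hypothesis of that reduction, and in fact our two-proposal rounding need not be monotone in the reported $v_{i,t}$ in any useful sense (the pivotal-sampling marginals come from the LP, not directly from $v_{i,t}$).

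Second, your explanation for why $c_t$ must be bounded by a constant is wrong. Answering a demand query for a $c_t$-demand valuation (sort by $v_{i,t}-\pi_i$, take the top positive entries up to $c_t$) is polynomial for \emph{any} $c_t$, so that cannot be where constancy is used. The real reason, as stated in \Cref{sec:view_combinatorial_auctions} and in the paper's proof, is that the reduction of \cite{banihashem2024power} runs in time polynomial in the size of each buyer's outcome space $X_t$, which for a $c_t$-demand buyer is of size $\binom{n}{c_t}$; this is polynomial only when $c_t=O(1)$. You also omit the minor but nonvacuous point that \cite{banihashem2024power} assume the valuation distributions are continuous, which requires a small noise/tiebreaking argument to satisfy in our discrete setting.
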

\end{wrapper}

Note that our main \Cref{thm:beat_half_general_statement} does not require any upper bounds on the capacities $c_t$. In particular, the capacities $c_t$ can be as large as the number of offline users. The upper bound on $c_t$ in the combinatorial auction interpretation is only required such that the reduction from \cite{banihashem2024power} runs in polynomial time.

\subsection{Additional Related Work}

Online resource allocation problems have gained attention in the last decades due to a plethora of applications introduced by large marketplaces (see e.g. \cite{mehta2013online}). 

A particularly well-studied variety of such problems is \emph{online matching}. As initiated by \cite{karp1990optimal}, here we have a set of offline vertices and a set of vertices arriving online. Upon arrival, online nodes reveal a subset of offline nodes they could be matched to, and we can allocate at most one that is still available. \cite{karp1990optimal} give an online algorithm for this problem that achieves a $(1-1/e)$-approximation to the value of the best possible matching in hindsight. This guarantee was later extended to vertex-weighted instances, where offline vertices might have different values \cite{aggarwal2011online}. The case we consider where edges are only successful with known probability has also been studied in the literature, often going by \emph{online matching with stochastic rewards} \cite{mehta2012online, mehta2015onlinestochastic, huang2020onlinestochastic, goyal2023onlinestochastic, huang2023onlinestochastic}. When online nodes can adaptively attempt to ``rematch'' based on the successful status of edges, the problem is often called \emph{stochastic probing}, and it has been studied in both online \cite{bansal2012lp, adamczyk2015improved, borodin2020bipartite} and offline settings \cite{chen2009approximating, adamczyk2011improved, gamlath2019beating}. 

In the most general edge-weighted case, it is unfortunately impossible to obtain \emph{any} constant-factor approximation for adversarial arrivals; a recent line of work studies the case where we relax the requirement of decisions being irrevocable \cite{fahrbach2020edge, gao2021improved, blanc2021multiway}. But in settings where allocations cannot be easily reversed, the only other option is to move beyond the pessimistic assumption of fully adversarial arrivals. The most natural way to do so is to consider the intermediate model of stochastic arrivals, a reasonable assumption for settings with large amounts of historical data available. There is a long line of work designing matching algorithms in such settings, including edge-weighted problems (e.g. \cite{haeupler2011online, alaei2012online, brubach2016new, ezra2020online}) and vertex-weighted/unweighted problems (e.g. \cite{feldman2009online, manshadi2012online, jaillet2013online, huang2021online, huang2022power}). There is also very recent work studying \emph{correlated} arrivals in online stochastic matching \cite{aouad2023nonparametric}, showing guarantees of half against the offline benchmark when online nodes are independent across different types rather than arrival rounds.

Most of the literature on Bayesian online resource allocation problems focuses on competitive algorithms against the expected offline optimum, also called prophet inequalities. 
Originally introduced in the 70s and 80s by \cite{krengel1978semiamarts} and \cite{samuel1984comparison}, statements of this form gained renewed attention in the past decades due to connections with mechanism design \cite{hajiaghayi2007automated, chawla2010multi, kleinberg2019matroid}. In these mechanisms, a sequence of buyers arrives one-by-one and faces item prices, buying the most desirable feasible bundle. These mechanisms are incentive compatible and individually rational by design and lead to desirable approximation guarantees of the optimum achievable welfare. This explains the rise of literature in this area during the recent years \cite{feldman2015combinatorial, dutting2020prophet, dutting2020log, gravin2019prophet, correa2022pricing, braun2023simplepricing}.  
For more details, we refer to the survey by Lucier~\cite{lucier2017economic}.

Typical problems studied in the literature are weighted bipartite matching (a.k.a. unit-demand combinatorial auctions) as well as its generalizations towards more general scenarios, such as XOS or subadditive valuations in combinatorial auctions \cite{feldman2015combinatorial, dutting2020prophet, dutting2020log, correa2023subadditive}. In complementing work, also feasibility constraints such as (poly-)matroids \cite{duetting2015polymatroid, kleinberg2019matroid, chawla2020matroid}, knapsacks \cite{dutting2020prophet, jiang2022prophet} and beyond \cite{gobel2014online, rubinstein2016prophet, baek2019prophet} are considered. 

The paradigm of \emph{online contention resolution schemes} (OCRS) has been an influential technique for proving prophet inequalities. Here, we start with an LP relaxation of the offline allocation problem and run a rounding procedure online while observing realizations one-by-one. Introduced by \cite{feldman2016online}, this technique has been broadly applied, see e.g. \cite{lee2018optimal, ezra2020online, pollner2022ocrs, fu2022oblivious, avadhanula2023dynamicocrs, macrury2023ocrs}. The LP relaxation we use for our algorithm differs from standard OCRS settings as there are additional constraints in our LP which are \emph{only} valid for online algorithms. 

Online allocation has also been studied in the literature where offline nodes have capacities and can be allocated simultaneously in different rounds \cite{alaei2014bayesian, alaei2013gap}. For example, \cite{alaei2012online} study such a setting and derive competitive ratios against the offline benchmark which can be improved beyond $\nicefrac{1}{2}$ once there is a lower bound of at least 2 on the offline capacities. The literature has also considered the impact of reusability of offline nodes \cite{feng2019reusable, dickerson2021reusable, feng2022reusable}. 

\subsection{Paper Organization}

In \Cref{sec:preliminaries}, we formally state our problem and review some preliminaries. In \Cref{sec:algorithm}, we introduce our algorithm and argue that it is well-defined. Afterwards, in \Cref{sec:analysis}, we analyze the algorithm's approximation ratio, the main technical contribution of our work. We conclude in \Cref{sec:conclusion} with some future directions suggested by our work. 
\Cref{app:informative-examples} contains a discussion of informative examples and observations for our problem. In \Cref{app:deferredproofs}, we give proofs that are deferred from the main body. 

In the first part of the main body of our paper, we prove a simpler result for ease of exposition; the remaining sections and appendices include the details required to prove our result in full generality. Our algorithm as stated in \Cref{sec:algorithm} requires an exponential-time computation; in \Cref{app:sample_based_algo} we analyze the natural Monte Carlo variant and hence provide a truly polynomial-time algorithm. Our algorithm in \Cref{sec:algorithm} also focuses on the special case of Bernoulli arrivals; in \Cref{app:beyond_bernoulli} we show how to extend our techniques to online arrivals with values and capacities drawn from general distributions. Finally, for simpler notation, when analyzing our algorithm we consider the special case where every success probability $q_{i,t}$ is one; in \Cref{app:stochasticrewards} we discuss the necessary changes to prove the result for arbitrary probabilities. 

\section{Formal Problem Statement and Preliminaries} \label{sec:preliminaries}

In the following section, we will give a formal definition of a special case of our problem. For ease of exposition, in the first part of the main body of our paper we describe our algorithm and analysis for this special case, and list the additional details required to solve the general version only afterwards. We also will review some preliminaries including statements about our LP relaxation and the basics of pivotal sampling, an important ingredient for our algorithm. 

\paragraph{Problem definition.}
Recall that we defined the input to our problem as a set of $n$ users $I$ which are available offline. In addition, there is a set of resources $[T]$ which are revealed online in known order. In step $t$, resource $t$ arrives (also noted as active) independently with known probability $p_t$. 
In addition, value $v_{i,t} \ge 0$ is user $i$'s value for being served by resource $t$. Every user can be served by at most one resource; any resource can serve up to $c_t$ many users. We call $c_t$ the \emph{capacity} of resource $t$ and emphasize that $c_t$ can be resource-specific, i.e. we allow different resources to have different capacities. 
Upon the arrival of resource $t$, we observe the random realization if the resource is active, and can choose which users $I_t \subseteq I$ (if any) we would like to allocate to it, subject to the constraints that each user can be assigned to at most one resource and $|I_t| \le c_t$. If resource $t$ does not arrive, for convenience, we take $I_t = \emptyset$. 

Upon assigning $I_t$, each $i \in I_t$ is \emph{successfully allocated} with probability $q_{i,t}$ independently. We denote the successful set by $S(I_t)$. More generally, the set $S(J) \subseteq J$ denotes the set of successful allocations from some allocated set $J$.
Our objective is to maximize the \emph{expected social welfare}, defined as $\E \left[ \SW \right]  := \E \left[ \sum_t  \sum_{i \in S(I_t)} v_{i,t}  \right] $.

Our goal is to design a polynomial-time approximation algorithm for this problem. An algorithm is a \emph{$\zeta$-approximation} if for any instance of the problem, we have $ \E \left[ \SW \right] \geq \zeta \cdot \opton$, where $\opton $ is the expected welfare achieved by the optimal online algorithm. The optimal online algorithm has unlimited computational power and also knows all distributions upfront, but only observes realizations one at a time and needs to make an irrevocable decision before observing the next realization. Formally, we can define $\opton$ via a Bellman equation. To this end, let $\opton(t,J)$ denote the optimum gain achievable from resources $\{t, t+1, \ldots, T\}$ with users $J \subseteq I$ available. Then, recursively we have 
\begin{align*}
\opton(t,J) &:= (1 - p_t) \cdot \opton(t+1, J) \\&\quad \quad 
+ p_t \cdot \max_{J' \subseteq J, |J'| \le c_t}   \E \left[ \sum_{i \in S(J')} v_{i,t} + \opton(t+1, J \setminus S(J')) \right].
\end{align*}
We recall that even in the case of unit capacities with deterministically successful assignments, it is PSPACE-hard to approximate $\opton$ within a $(1-\epsilon)$ factor \cite{papadimitriou2021online}.

\paragraph{LP relaxation.}

We will use an LP relaxation of the optimum online algorithm which generalizes that for the unit-capacity and deterministic rewards case \cite{papadimitriou2021online, braverman2022max, torrico2022dynamic}. It has a variable $x_{i, t}$ for every pair of a user $i$ and a resource $t$.

\begin{align}
	\nonumber  \max \  &  \sum_{i,t} x_{i,t} \cdot q_{i,t} \cdot v_{i,t} && \tag{LP\textsubscript{on}} \label{LP} \\
	\text{s.t. }&\sum_{i} x_{i,t} \le p_t \cdot c_t && \text{for all } t \in [T] \label{eqn:userptktconstraint}\\
	&\  0 \le x_{i,t} \le p_t \cdot \left( 1 - \sum_{t' < t} x_{i, t'} \cdot q_{i,t'} \right) && \text{for all } i \in I, t \in [T] \label{eqn:PPSWConstraint}
 \end{align}

This LP indeed relaxes the optimal online algorithm: set $x_{i, t}$ to be the marginal probability that this algorithm attempts to allocate $i$ to $t$. 
Constraint~\eqref{eqn:userptktconstraint} holds as any algorithm can only allocate at most $c_t$ users to resource $t$ if it arrives. Constraint~\eqref{eqn:PPSWConstraint} only holds for online algorithms: the event of users being not yet successfully allocated at step $t$ and the event of resource $t$ arriving are independent. We note it implies the natural constraint that $\sum_t x_{i,t} \cdot q_{i,t} \le 1$.\footnote{Indeed, we can apply Constraint~\eqref{eqn:PPSWConstraint} to $(i,T)$ and observe $\sum_{t} x_{i,t } \cdot q_{i,t} \le x_{i,T} + \sum_{t' < T} x_{i,t'} \cdot q_{i,t'} \le x_{i,T} + 1 - \frac{x_{i,T}}{p_T} \le 1.$} 

\begin{restatable}{obs}{observationlprelaxopton}
\label{observation:LP_relax_OPT_on}
	The optimum objective value of \textup{\eqref{LP}} upper bounds the gain of optimum online, i.e., $\OPT\eqref{LP} \geq \opton$. 
\end{restatable}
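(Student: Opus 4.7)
The plan is to construct a feasible solution to \eqref{LP} directly from the optimum online algorithm and show its objective value equals $\opton$. Let $\calA^*$ denote the optimum online algorithm, and for each pair $(i,t)$ define
\[
x_{i,t} \;:=\; \Pr\!\left[\calA^* \text{ attempts to allocate user } i \text{ to resource } t\right],
\]
where the probability is taken over all arrival realizations, the independent success events, and any internal randomness used by $\calA^*$. I then need to verify three things: (a) this $x$ achieves an LP objective equal to $\E[\SW]$ under $\calA^*$, (b) Constraint \eqref{eqn:userptktconstraint} holds, and (c) Constraint \eqref{eqn:PPSWConstraint} holds.

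For (a), by linearity and the independence of the success event along $(i,t)$ (with probability $q_{i,t}$) from $\calA^*$'s choice to attempt this allocation, the expected welfare decomposes as $\E[\SW] = \sum_{i,t} x_{i,t}\cdot q_{i,t}\cdot v_{i,t}$, matching the LP's objective. For (b), $\sum_i x_{i,t}$ equals the expected number of users $\calA^*$ attempts to allocate to resource $t$; this is zero when $t$ does not arrive (probability $1-p_t$) and at most $c_t$ when it does, so $\sum_i x_{i,t}\le p_t\cdot c_t$.

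The main step is (c), which is the online-only constraint. The event that $\calA^*$ attempts to allocate $i$ to $t$ is contained in the intersection of the two events $\{t \text{ arrives}\}$ and $\{i \text{ is available at the start of round } t\}$. The key observation is that these two events are \emph{independent}: $\calA^*$'s allocation decisions before round $t$ depend only on arrivals and values in rounds $1,\ldots,t-1$ plus internal randomness, together with the independent success coins along previously attempted edges, none of which involve the arrival indicator of resource $t$. So
\[
x_{i,t}\;\le\;\Pr[t \text{ arrives}]\cdot \Pr[i \text{ available at start of round } t]\;=\;p_t\cdot\left(1-\sum_{t'<t}\Pr[i \text{ successfully allocated at } t']\right).
\]
A pair $(i,t')$ with $t'<t$ is successfully allocated iff $\calA^*$ attempts it and the independent $q_{i,t'}$-coin succeeds, and these events are disjoint across $t'$ (since $i$ can be successfully allocated at most once), so $\Pr[i \text{ successfully allocated at } t']=x_{i,t'}\cdot q_{i,t'}$. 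Substituting gives Constraint \eqref{eqn:PPSWConstraint}.

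The only subtle point, and the one I would be most careful about, is the independence claim underlying (c): it is crucial that $\calA^*$ is online, so its decisions prior to round $t$ are measurable with respect to a $\sigma$-algebra that does not include the round-$t$ arrival indicator. A fully formal argument would define the natural filtration generated by $(p_{t'}\text{-coin},\, v_{\cdot,t'},\, c_{t'},\, \text{success coins at } t')_{t'<t}$ together with $\calA^*$'s internal randomness, observe that the availability of $i$ at the start of round $t$ is measurable with respect to this filtration, and note that the round-$t$ arrival indicator is independent of it; this is precisely what fails for offline algorithms and is what makes \eqref{LP} a genuine relaxation of $\opton$ (but not of $\optoff$).
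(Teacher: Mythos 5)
Your proposal is correct and follows essentially the same route as the paper's proof: set $x_{i,t}$ to the probability that the optimal online algorithm attempts the allocation $(i,t)$, verify the objective matches via independence of the success coin from the attempt decision, bound the capacity constraint by conditioning on $t$'s arrival, and derive the online constraint from the independence of $\{t\text{ arrives}\}$ and $\{i\text{ available at }t\}$. The filtration remark you add at the end is the right way to make that independence rigorous, and matches the intent of the paper's phrase that this ``does not hold for an offline algorithm.''
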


For completeness the short formal proof is included in \Cref{app:LP_relax_OPT}.

\paragraph{Generalized problem definition.} In the above problem definition, we made the simplifying assumption that the resource arriving at time $t$ has a simple ``Bernoulli'' distribution determining if it is active or not. In the general model, in every round, a resource randomly realizes one of many possible pairs of valuation vectors to the users and capacities. Formally, in our general model, resource $t$ realizes one of $m$ possible capacities $c_{t,j}$ together with a vector of values $(v_{i,t,j})_i$, where each realization $j$ is sampled with probability $p_{t,j}$. We highlight that capacities and values during a single round $t$ can be arbitrarily correlated, although across different rounds we assume independence. In \Cref{app:beyond_bernoulli} we argue that our LP, algorithm, and analysis extend to such general settings as well. 

\subsection{Pivotal Sampling}

As a part of our online algorithm we invoke the randomized offline rounding framework of \emph{pivotal sampling} (also called \emph{Srinivasan rounding} and \emph{dependent rounding}) \cite{srinivasan2001distributions, gandhi2006dependent}. Imagine we are given marginals $x_1, \ldots, x_n$ with each $x_i \in [0,1]$ and $\sum_i x_i \le k$ for some positive integer $k$. We would like to randomly select at most $k$ indices from $\{1, 2, \ldots, n\}$ such that $i$ is selected with probability $x_i$. Pivotal sampling selects such a subset while also guaranteeing strong negative correlation properties between individual indices. It does so by sequentially choosing a pair of fractional marginals, and applying a randomized ``pivot'' operation that makes at least one integral.   We formally  state some of the properties of the algorithm below which suffice for our analysis.

\begin{thm}[as in \cite{srinivasan2001distributions}]
	The pivotal sampling algorithm with input $(x_i)_{i=1}^n$ where $\sum_i x_i \le k$ efficiently produces a random subset of $[n]$, denoted $\textup{\textsf{PS}}(x_1, \ldots, x_n)$, with the following properties:
	\begin{enumerate}[label=\text{(P\arabic*)}]
		\item For every $i \in [n]$, we have $\Pr[i \in \textup{\textsf{PS}}(x_1, \ldots, x_n)] = x_i$. \label{pivot_sampling_marginals}
		\item The number of elements in $\textup{\textsf{PS}}(x_1, \ldots, x_n)$ is always at most $k$. \label{pivot_sampling_bound_allocation_size}
		\item (Negative cylinder dependence) For any $I \subseteq [n]$, we have $$\Pr \left[ \bigwedge_{i \in I} i \in \textup{\textsf{PS}}(x_1, \ldots, x_n) \right] \le \prod_{i \in I} \Pr[i \in \textup{\textsf{PS}}(x_1, \ldots, x_n)] $$ and $$ \Pr \left[ \bigwedge_{i \in I} i \notin \textup{\textsf{PS}}(x_1, \ldots, x_n) \right] \le \prod_{i \in I} \Pr[i \notin \textup{\textsf{PS}}(x_1, \ldots, x_n)] \enspace. $$ \label{pivot_sampling_negative_dependence}
	\end{enumerate}
\end{thm}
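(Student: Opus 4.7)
The plan is to analyze pivotal sampling by specifying a canonical implementation and then verifying each property in turn. A standard formulation maintains the fractional vector $\mathbf{x} = (x_1, \ldots, x_n)$ and, in each iteration, selects two indices $i, j$ whose current values are strictly fractional; once at most one fractional coordinate remains, the process terminates. Setting $\alpha := \min(1-x_i, x_j)$ and $\beta := \min(x_i, 1-x_j)$, the algorithm applies $(x_i, x_j) \mapsto (x_i + \alpha, x_j - \alpha)$ with probability $\beta/(\alpha+\beta)$ and $(x_i, x_j) \mapsto (x_i - \beta, x_j + \beta)$ with probability $\alpha/(\alpha+\beta)$. Each pivot fixes at least one of $x_i, x_j$ to $\{0,1\}$, so termination occurs after at most $n$ rounds, and $\textsf{PS}(x_1, \ldots, x_n)$ is defined as the (random) set of indices whose final value equals $1$.

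For property~\ref{pivot_sampling_marginals}, I would observe that each pivot preserves the conditional expectation of each coordinate: a direct calculation shows $\E[x_i^{\text{new}} \mid x_i, x_j] = x_i$ and $\E[x_j^{\text{new}} \mid x_i, x_j] = x_j$, while coordinates outside $\{i,j\}$ are untouched. Hence each coordinate is a martingale, and since the process terminates deterministically at integral values, $\Pr[i \in \textsf{PS}(x_1, \ldots, x_n)] = x_i$. For property~\ref{pivot_sampling_bound_allocation_size}, the key invariant is that each pivot preserves $x_i + x_j$ exactly, and hence preserves $\sum_i x_i$; since this sum starts at most $k$ and ends as a nonnegative integer, the output always satisfies $\lvert \textsf{PS}(x_1, \ldots, x_n) \rvert \leq k$ with probability one.

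The main obstacle is property~\ref{pivot_sampling_negative_dependence}, the negative cylinder dependence. I would prove the upper-tail bound by induction on the number of remaining pivot steps, strengthening it to the claim that for any $I \subseteq [n]$ and any intermediate vector $\mathbf{x}^{(s)}$, one has $\E[\prod_{i \in I} \chi_i \mid \mathbf{x}^{(s)}] \leq \prod_{i \in I} x_i^{(s)}$, where $\chi_i \in \{0,1\}$ denotes the final indicator. The base case at termination is immediate, since both sides coincide. For the inductive step, let $\{i,j\}$ be the pair pivoted next and apply the inductive hypothesis to $\mathbf{x}^{(s+1)}$; when $I \cap \{i,j\}$ is empty or a singleton, the martingale property from~\ref{pivot_sampling_marginals} suffices to pass through the expectation. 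The delicate case is when $\{i,j\} \subseteq I$, where a direct computation using the two pivot outcomes yields
\[
\E[x_i^{(s+1)} x_j^{(s+1)} \mid \mathbf{x}^{(s)}] \;=\; x_i^{(s)} x_j^{(s)} - \alpha\beta \;\leq\; x_i^{(s)} x_j^{(s)},
\]
with $\alpha, \beta$ as in the pivot. An entirely analogous induction, applied to the complements $1 - \chi_i$ (using that pivots symmetrically preserve $\E[1-x_i]$ and that the same product identity holds for the shifted variables), yields the lower-tail bound, completing the proof.
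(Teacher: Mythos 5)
The paper imports this theorem directly from Srinivasan~\cite{srinivasan2001distributions} and does not reprove it, so there is no internal proof to compare against; your argument is essentially the canonical one from the dependent rounding literature. Each pivot preserves the conditional expectation of every coordinate, giving property~\ref{pivot_sampling_marginals}; it preserves $\sum_i x_i$ exactly, giving property~\ref{pivot_sampling_bound_allocation_size}; and the key computation
$\E[x_i^{(s+1)} x_j^{(s+1)} \mid \mathbf{x}^{(s)}] = x_i^{(s)} x_j^{(s)} - \alpha\beta \le x_i^{(s)} x_j^{(s)}$
(and its complement analogue) shows that $\prod_{i \in I} x_i^{(s)}$ and $\prod_{i \in I}(1 - x_i^{(s)})$ are both supermartingales, which yields property~\ref{pivot_sampling_negative_dependence} by backwards induction / optional stopping exactly as you describe. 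The calculations check out; this is the standard Srinivasan argument.

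There is one small correctness gap worth flagging. You terminate once ``at most one fractional coordinate remains'' and define $\textsf{PS}$ as the indices whose final value equals $1$. When $\sum_i x_i$ is not an integer, a single fractional coordinate survives to the end and is then never included, which breaks property~\ref{pivot_sampling_marginals} for that index. This case is not vacuous in the paper: the second call to $\textsf{PS}$ uses marginals $(1 - A_t/c_t)\, x_{i,t}/p_t$, whose sum need not be integral. The standard fix is to round the surviving fractional coordinate to $1$ independently with probability equal to its residual value (equivalently, pad with a dummy coordinate to make the sum integral and discard it at the end). This repair preserves properties~\ref{pivot_sampling_bound_allocation_size} and~\ref{pivot_sampling_negative_dependence}, and the base case of your induction still holds since the independent coin flip factorizes through the product; you just need to state it.
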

\section{The Algorithm: A Two-Step Approach} \label{sec:algorithm}

We begin by a short description of our algorithm, before presenting the pseudocode in \Cref{allocationalgexact}. First we fix some useful definitions: we say user $i$ is ``allocated to $t$'' if it is one of the at most $ c_t$ users served by the resource, and ``successfully allocated to $t$'' if it is allocated to $t$ and $(i,t)$ is successful (recall this is with probability $q_{i,t}$).  We say user $i$ is ``free at $t$'' or ``available at $t$'' (or ``free''/``available'', if the context is clear) if just before the arrival of resource $t$, user $i$ has not yet been successfully allocated to any previous resource. 

Our algorithm uses an optimal solution $\{x_{i,t}\}$ to \eqref{LP} as input. After observing if resource $t$ arrives, if so, we sample a set of at most $c_t$ users $\FPt$ (denoting the \textbf{f}irst \textbf{p}roposal for $t$) using pivotal sampling, such that each user $i$ is selected with marginal probability $\nicefrac{x_{i,t}}{p_t}$. For every user $i \in \FPt$, if $i$ is still available, we toss a coin independently with probability $\alpha_{i,t} := \min \left(1, \frac{0.5 + \kappa }{1 - (0.5 + \kappa) \cdot \sum_{t' < t} x_{i,t'} \cdot q_{i,t'}} \right)$, and allocate user $i$ to resource $t$ if this coin toss is successful. 

After this procedure, we have a number $A_t$ of users allocated to resource $t$, where $A_t$ is a random variable which can take values in $\{0,\dots, c_t\}$. In order to make use of the remaining space in the demand size of resource $t$, we allow $t$ to make a second proposal. Again via the pivotal sampling subroutine, this time with a reduced marginal probability of $(1 - \frac{A_t}{c_t}) \cdot x_{i,t}/p_t $ for every user $i$, we sample a set of users $\SPt$, denoting the \textbf{s}econd \textbf{p}roposal with size at most $c_t - A_t$. Among these users, we consider only those $i$ for which $\alpha_{i,t} = 1$, $i$ was free at $t$, and $i$ was not yet allocated to $t$. For each such user $i$, we allocate to $t$ with probability $\beta_{i,t}$. The factor $\beta_{i,t}$ is chosen in a way to ensure that $\Pr[i \text{ allocated to } t] = (0.5+\kappa) \cdot x_{i,t}$, i.e., such that we don't \emph{overmatch} any $(i,t)$.

\begin{algorithm}[H]
	\caption{}
	\label{allocationalgexact}
	\begin{algorithmic}[1]

		\State $\kappa \gets \approxconstant$ \label{line:sample_first}
  
		\State Solve \eqref{LP} for $\{x_{i,t} \}$  \label{line:solveLP}
  
		\For{each time $t$, if $t$ arrives} \Comment{w.p. $p_t$}
		
		\State Define users $\FPt := \textsf{PS}( ( x_{i,t}/p_t )_{i \in I})$  \Comment{at most $c_t$ users get first proposal}  \label{line:firstproposal}
		\For{each user $i \in \FPt$}
		\If{$i$ is available}
		\State Allocate $i$ to $t$ with probability $\alpha_{i,t} := \min \left(1, \frac{0.5 + \kappa }{1 - (0.5 + \kappa) \cdot \sum_{t' < t} x_{i,t'} \cdot q_{i,t'}} \right)$  \label{line:sample_alphait} \label{line:firstmatch}
		\EndIf 
		
		\EndFor
		
		\State Let $A_t \gets \text{number of users allocated to } t \text{ thus far}$ \label{line:sample_defAt}

		\State Define users $\SPt:= \textsf{PS}( ( (1 - \frac{A_t}{c_t}) \cdot x_{i,t}/p_t )_{i \in I})$ \label{line:secondproposal} \Comment{$\le c_t - A_t$ users get second proposal} 
		\For{each user $i \in \SPt$ with $\alpha_{i,t} = 1$}
		\If{$i$ is available and currently unallocated} \label{line:ifiavail}
        \State Compute $\rho_{i,t} := \E [ \mathbbm{1}[i \text{ available and unallocated after \Cref{line:sample_defAt}}] \cdot (1 - \frac{A_t}{c_t}) \mid t \text{ arrived}] $ \label{line:computerhoexact}
        \State $\beta_{i,t} \gets \min \Big(1, \left( (0.5 + \kappa) \cdot \sum_{t' < t} x_{i,t'} \cdot q_{i,t'} - (0.5 - \kappa) \right) \cdot \frac{1}{\rho_{i,t}} \Big).$
		\State Allocate $i$ to $t$ with prob. ${\beta_{i,t}}$ \label{line:secondmatch}
		\EndIf
		\EndFor
		
		\EndFor
	\end{algorithmic}
\end{algorithm}	

Concerning the definition of $\rho_{i,t}$, we note that the expectation is over the randomness in the arrivals and algorithm up to when it reaches \Cref{line:sample_defAt} for arrival $t$ in \Cref{allocationalgexact} (in particular, we consider ``re-running'' the algorithm as defined thus far on a fresh instance). The indicator $\mathbbm{1}[i \text{ available and unallocated after \Cref{line:sample_defAt}}]$ refers to the event that $i$ was not successfully allocated to some $t' < t$ and is also not allocated yet to $t$ (it could be the case that $i$ was allocated to some $t' < t$, and this was unsuccessful). This indicator is potentially correlated with the number of allocated users $A_t$. 

The $\min(1,\cdot)$ in the definition of $\beta_{i,t}$ is for convenience only; in particular, it is thus easy to see that the algorithm is well-defined. As a crux of our analysis, we will show that using $\kappa = \approxconstant$ ensures that the $\min(1,\cdot)$ in the definition of $\beta_{i,t}$ is actually redundant. 

In the remainder of this section, we will argue that \Cref{allocationalgexact} is well-defined and guarantees to respect the capacity constraints of online resources.

\begin{obs}
    \Cref{allocationalgexact} is well-defined. 
\end{obs}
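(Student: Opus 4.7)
\smallskip

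\noindent\textbf{Proof plan.} To show that \Cref{allocationalgexact} is well-defined, I need to verify four things: (i) every call to pivotal sampling has valid inputs, i.e.\ marginals in $[0,1]$ summing to at most the capacity parameter; (ii) the allocation probability $\alpha_{i,t}$ is a well-defined number in $[0,1]$; (iii) the allocation probability $\beta_{i,t}$ is a well-defined number in $[0,1]$; and (iv) no capacity constraint is ever violated at run-time, so that the value $c_t - A_t$ plugged into the second proposal is meaningful.

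For (i), I would start with \Cref{line:firstproposal}. Each marginal $x_{i,t}/p_t$ lies in $[0,1]$ directly from Constraint~\eqref{eqn:PPSWConstraint}, which gives $x_{i,t}\le p_t$. Their sum is at most $c_t$ by Constraint~\eqref{eqn:userptktconstraint}. For \Cref{line:secondproposal}, since $A_t\le c_t$ is an integer computed on \Cref{line:sample_defAt} (assuming (iv)), the factor $(1 - A_t/c_t) \in [0,1]$, so each marginal $(1-A_t/c_t)\cdot x_{i,t}/p_t$ is still in $[0,1]$; and summing,
\[
\sum_i \Bigl(1-\tfrac{A_t}{c_t}\Bigr)\tfrac{x_{i,t}}{p_t} \;\le\; \Bigl(1-\tfrac{A_t}{c_t}\Bigr) c_t \;=\; c_t - A_t,
\]
which is the required bound for the pivotal sampling subroutine with parameter $k=c_t-A_t$.

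For (ii), the key observation is that the denominator of $\alpha_{i,t}$ is positive. Since $\kappa = \approxconstant < 1/2$ and $\sum_{t'<t} x_{i,t'}q_{i,t'}\le 1$ (noted just after \Cref{observation:LP_relax_OPT_on}), we have $1-(0.5+\kappa)\sum_{t'<t}x_{i,t'}q_{i,t'} \ge 1 - (0.5+\kappa) > 0$, so the fraction is non-negative and finite, and the outer $\min(1,\cdot)$ clips it into $[0,1]$. For (iii), the important point is that $\beta_{i,t}$ is only \emph{evaluated} when the loop guard $\alpha_{i,t}=1$ holds. Unpacking this guard gives exactly $(0.5+\kappa)\sum_{t'<t} x_{i,t'} q_{i,t'} \ge 0.5 - \kappa$, i.e.\ the numerator inside the $\min$ defining $\beta_{i,t}$ is non-negative. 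If $\rho_{i,t}>0$ the quotient is well-defined and non-negative, and the $\min(1,\cdot)$ puts it into $[0,1]$; if $\rho_{i,t}=0$, then by the definition of $\rho_{i,t}$ the event that $i$ is both available and unallocated at \Cref{line:ifiavail} with $A_t<c_t$ occurs with probability zero (in particular, $i$ never reaches \Cref{line:secondmatch}), so we may set $\beta_{i,t}$ to an arbitrary value in $[0,1]$ without affecting the algorithm's distribution.

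Finally, for (iv), I would observe that by property~\ref{pivot_sampling_bound_allocation_size} of pivotal sampling, $|\FPt|\le c_t$, so after the first loop $A_t\le c_t$. Similarly, the second call produces $|\SPt|\le c_t-A_t$ by property~\ref{pivot_sampling_bound_allocation_size} applied with parameter $c_t-A_t$, so the total number of users allocated to $t$ is at most $A_t + |\SPt| \le c_t$. This closes the inductive dependence: the fact that $A_t$ is a valid integer in $\{0,\dots,c_t\}$ (needed in (i) for the second proposal) is itself guaranteed by the size bound from the first proposal. The main subtlety, and really the only part that is not purely bookkeeping, is the argument in (iii) that the $\alpha_{i,t}=1$ guard ensures $\beta_{i,t}\ge 0$ before the $\min$ clip; everything else is a direct appeal to the LP constraints and the properties of pivotal sampling stated above.
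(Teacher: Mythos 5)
Your proof is correct and follows essentially the same route as the paper's, so a short comparison suffices. Two small differences: first, you fold in the bound $A_t + |\SPt| \le c_t$ (which the paper states as a separate \Cref{obs:respectcapacity}) and correctly observe that this is needed to justify the parameter $k = c_t - A_t$ in the second pivotal-sampling call; second, you explicitly handle the potential division by zero when $\rho_{i,t}=0$ in the definition of $\beta_{i,t}$, a corner case the paper's proof silently passes over. Your resolution of that corner case is sound: $\rho_{i,t}$ is the expectation of a nonnegative random variable, so $\rho_{i,t}=0$ forces that variable to vanish almost surely, and the event guarding \Cref{line:secondmatch} (namely $i$ available and unallocated with $i\in\SPt$, which requires $A_t<c_t$) is contained in the event where the variable is strictly positive; hence the algorithm never evaluates $\beta_{i,t}$ on that branch and any default value will do. The paper's later analysis in fact shows $\rho_{i,t}$ is bounded away from zero for exactly the pairs where $\beta_{i,t}$ is used, so the issue never bites, but addressing it a priori as you do is the cleaner way to state well-definedness.
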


\begin{proof}
    Note first that in \Cref{line:firstproposal}, our call to the pivotal sampling algorithm $\textsf{PS}(\cdot)$ is well-defined as each marginal $\nicefrac{x_{i,t}}{p_t}$ is in $[0,1]$ by \ref{LP} Constraint \eqref{eqn:PPSWConstraint}. Each $\alpha_{i,t}$ as defined in \Cref{line:sample_alphait} is clearly a probability by construction. Our second call to $\textsf{PS}(\cdot)$ is similarly well-defined. Note that $\beta_{i,t}$ is always a probability --- if $\alpha_{i,t} = 1$, it implies that $(0.5 + \kappa) \cdot \sum_{t' < t} x_{i,t'} \cdot q_{i,t'} \ge (0.5 - \kappa)$ by definition. This in turn shows that $\beta_{i,t}$ is always in the interval $[0,1]$. 

    Finally, note that user $i$ is allocated only if available, and hence never successfully allocated to two different resources (or to the same resource twice). 
\end{proof}

We also have that our algorithm respects capacity constraints for each online arrival.

\begin{obs} \label{obs:respectcapacity}
    The number of users allocated to resource $t$ by \Cref{allocationalgexact} is always at most $c_t$. 
\end{obs}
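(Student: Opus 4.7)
The plan is to bound the two proposal rounds separately using property \ref{pivot_sampling_bound_allocation_size} of pivotal sampling, and then combine.

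First I would control the first proposal round. The call $\textsf{PS}((x_{i,t}/p_t)_{i \in I})$ in \Cref{line:firstproposal} receives marginals whose sum is $\sum_i x_{i,t}/p_t \le c_t$ by LP constraint \eqref{eqn:userptktconstraint} (dividing both sides by $p_t$). Hence by \ref{pivot_sampling_bound_allocation_size}, $|\FPt| \le c_t$ deterministically. Since in \Cref{line:firstmatch} every user allocated to $t$ must lie in $\FPt$, the number $A_t$ of users allocated to $t$ after the first round satisfies $A_t \le |\FPt| \le c_t$.

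Next I would control the second proposal round. Conditional on $A_t$, the call $\textsf{PS}(((1 - A_t/c_t) \cdot x_{i,t}/p_t)_{i \in I})$ in \Cref{line:secondproposal} has input marginals that each lie in $[0,1]$ (since $A_t \le c_t$ and the original marginals already did) and whose sum is bounded by
\[
\left(1 - \frac{A_t}{c_t}\right) \cdot \sum_i \frac{x_{i,t}}{p_t} \le \left(1 - \frac{A_t}{c_t}\right) \cdot c_t = c_t - A_t,
\]
again by \eqref{eqn:userptktconstraint}. Applying \ref{pivot_sampling_bound_allocation_size} with the integer capacity $c_t - A_t$ gives $|\SPt| \le c_t - A_t$ deterministically. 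Every user allocated in the second round (\Cref{line:secondmatch}) belongs to $\SPt$, so the number of second-round allocations is at most $c_t - A_t$.

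Finally, combining the two rounds, the total number of users allocated to $t$ is at most $A_t + (c_t - A_t) = c_t$. I do not anticipate any real obstacle here; the proof is essentially a bookkeeping argument in which the only nontrivial ingredient is that the LP constraint \eqref{eqn:userptktconstraint} was chosen precisely so that the pivotal sampling subroutine can be invoked with capacity $c_t$ in the first round and residual capacity $c_t - A_t$ in the second round.
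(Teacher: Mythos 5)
Your proof is correct and takes essentially the same approach as the paper's: bound $|\FPt| \le c_t$ via Constraint \eqref{eqn:userptktconstraint} and Property \ref{pivot_sampling_bound_allocation_size}, and observe the second-round marginals are scaled so that $|\SPt| \le c_t - A_t$. You have simply spelled out the arithmetic for the second round in slightly more detail than the paper does.
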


\begin{proof}
    By Property~\ref{pivot_sampling_bound_allocation_size} of pivotal sampling,  the size of $\textsf{FP}_t$ is never larger than $c_t$ as $\sum_i \frac{x_{i,t}}{p_t} \leq c_t$ by Constraint~\eqref{eqn:userptktconstraint}. In addition, as we scale the marginals down for the second proposal set $\textsf{SP}_t$, we are guaranteed that resource $t$ is only allocated at most $c_t - A_t$ many users during the second proposal. 
\end{proof}

We also note that every line except \Cref{line:computerhoexact} can be implemented in polynomial time. Indeed, note \Cref{line:solveLP} can be run efficiently as \eqref{LP} has polynomial size, and that our calls to pivotal sampling can be implemented efficiently \cite{srinivasan2001distributions}. 

\Cref{line:computerhoexact} requires exponential time as written, and for ease of presentation, in the next section we analyze the above exponential time algorithm. In \Cref{app:sample_based_algo} we show that we can replace this computation with a sample average and appeal to concentration bounds, while only losing an arbitrarily small $\epsilon$ in the approximation ratio. The main point of care is to argue that $\rho_{i,t}$ is bounded away from 0 so that we can get a close  multiplicative approximation. 

\section[Analysis: Beating a 1/2-Approximation]{Analysis: Beating a $\nicefrac{1}{2}$-Approximation} \label{sec:analysis}
 
Our main result is as follows. 

\begin{thm}\label{theorem:main_theorem}
    For $\kappa= \approxconstant$, the social welfare achieved by \Cref{allocationalgexact} satisfies $$  \E \left[ \SW  \right]  \geq (0.5 + \kappa) \cdot  \opton \enspace. $$
\end{thm}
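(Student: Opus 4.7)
The plan is to prove the per-pair equality $\Pr[i \text{ is allocated to } t] = (0.5+\kappa)\, x_{i,t}$ for every pair $(i,t)$. Weighting each pair by $v_{i,t}$ (and by the independent success coin $q_{i,t}$, which equals $1$ in the special case analyzed in this section) and summing yields $\E[\SW] = (0.5+\kappa) \cdot \OPT\eqref{LP}$, and an appeal to \Cref{observation:LP_relax_OPT_on} finishes the theorem. I will proceed by strong induction on $t$, carrying along as auxiliary hypothesis the availability identity $\Pr[i \text{ is free at } t] = 1 - (0.5+\kappa) \sum_{t'<t} x_{i,t'} q_{i,t'}$, which follows from the per-pair equality applied to all $t' < t$ by linearity. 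Fix $i,t$ and write $Y_i(t) := \sum_{t'<t} x_{i,t'} q_{i,t'}$.

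The first, easier observation is that the event $i \in \FPt$ is determined entirely by the pivotal-sampling randomness at time $t$ with fixed marginals $x_{\cdot,t}/p_t$, so it is independent of the history, and in particular of the event ``$i$ is free at $t$''. Combining this with Property \ref{pivot_sampling_marginals} of pivotal sampling and the auxiliary hypothesis, the probability that $i$ is allocated to $t$ via the first proposal is exactly $x_{i,t} \cdot (1-(0.5+\kappa) Y_i(t)) \cdot \alpha_{i,t}$. When $\alpha_{i,t} < 1$, equivalently $Y_i(t) < \frac{0.5-\kappa}{0.5+\kappa}$, the definition of $\alpha_{i,t}$ causes this expression to collapse to $(0.5+\kappa)\, x_{i,t}$, and no second-proposal allocation is possible because \Cref{allocationalgexact} restricts the second round to users with $\alpha_{i,t} = 1$.

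The interesting case is $\alpha_{i,t} = 1$, where the first proposal alone yields only $x_{i,t}(1 - (0.5+\kappa)Y_i(t))$, leaving a deficit of $x_{i,t}[(0.5+\kappa) Y_i(t) - (0.5-\kappa)]$. Conditioning on the execution through \Cref{line:sample_defAt} and applying Property \ref{pivot_sampling_marginals} to $\SPt$ yields
\[
\Pr[i \text{ allocated via second proposal}] \;=\; x_{i,t} \cdot \beta_{i,t} \cdot \rho_{i,t},
\]
so the definition of $\beta_{i,t}$ recovers exactly the deficit \emph{provided} the truncation $\min(1,\cdot)$ is inactive. Hence the entire proof reduces to showing $\beta_{i,t} \le 1$, i.e., to a sufficient lower bound on $\rho_{i,t}$ for every pair in Case~2.

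The main obstacle is lower-bounding $\rho_{i,t} = \E[\mathbbm{1}[i \text{ free and unallocated after \Cref{line:sample_defAt}}] \cdot (1 - A_t/c_t) \mid t \text{ arrived}]$. Expanding $A_t = \sum_j \mathbbm{1}[j \in \FPt]\mathbbm{1}[j \text{ free at } t]\mathbbm{1}[\text{coin}_{j,t} \le \alpha_{j,t}]$, the intra-round indicators $\mathbbm{1}[j \in \FPt]$ can be handled by the negative-cylinder dependence of pivotal sampling (Property \ref{pivot_sampling_negative_dependence}), but the cross-round correlation between ``$i$ free at $t$'' and ``$j$ free at $t$'' is exactly what \Cref{observation:positivecorrelationrequired} forbids being non-positive. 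My plan is therefore to prove, by a separate induction interlocked with the main one, the weaker correlation bound
\[
\Pr[i,j \text{ both free at } t] \;\le\; \Pr[i \text{ free at } t]\cdot \Pr[j \text{ free at } t]\cdot f(Y_i(t)),
\]
where $f(z) = 1 + z\cdot\frac{(0.5+\kappa)^2}{1 - z(0.5+\kappa)}$ and, crucially, the argument of $f$ depends only on $i$ (which is what keeps the induction closable). Plugging this bound into the expansion of $\rho_{i,t}$ and tuning $\kappa = \approxconstant$ is exactly what verifies $\beta_{i,t} \le 1$. The delicate step in the correlation induction is to show that the second-proposal correction at time $t$ only inflates pairwise correlations by the tolerated factor $f$, exploiting that second-proposal users are restricted to those with $\alpha_{i,t} = 1$ and the resulting structural constraints on $\beta_{i,t}$; this is the step I expect to be the main technical difficulty.
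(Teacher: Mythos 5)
Your high-level architecture matches the paper's precisely: decompose $\E[\SW]$ over pairs via \Cref{observation:split_value}, prove $\Pr[(i,t)\in\mathcal{A}_1]+\Pr[(i,t)\in\mathcal{A}_2]=(0.5+\kappa)x_{i,t}$ by induction carrying $\Pr[F_{i,t}]=1-(0.5+\kappa)y_{i,t}$, identify that the only nontrivial step is showing $\beta_{i,t}\le 1$, reduce this to bounding $\E[A_t\mid t\text{ arrived}, F_{i,t}]$, and finally reduce \emph{that} to a pairwise correlation bound with the exact potential $f(z)=1+z\cdot\frac{(0.5+\kappa)^2}{1-z(0.5+\kappa)}$ where the argument of $f$ depends only on $i$. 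The identity $\Pr[(i,t)\in\mathcal{A}_2]=x_{i,t}\cdot\rho_{i,t}\cdot\beta_{i,t}$, the observation that $i\in\FPt$ is independent of history, and the collapse to $(0.5+\kappa)x_{i,t}$ when $\alpha_{i,t}<1$ are all exactly as in the paper.

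There is, however, a genuine gap in your plan for the correlation bound. You state the bound as $\Pr[F_{i,t}\wedge F_{j,t}]\le f(Y_i(t))\cdot\Pr[F_{i,t}]\cdot\Pr[F_{j,t}]$ for \emph{all} $t$, and you flag the ``second-proposal correction'' to the correlation induction as the main technical difficulty, expecting to push the induction through rounds where $(i,t')$ or $(j,t')$ is late. The paper does \emph{not} do this, and for good reason: once second proposals enter, $\Pr[\text{$i$ allocated at }t'\mid F_{i,t'}\wedge F_{j,t'}]$ depends on $A_{t'}$, which couples the availabilities of \emph{all} users, and the clean multiplicative-rate algebra (where $\alpha_{i,t'}=\frac{0.5+\kappa}{1-(0.5+\kappa)y_{i,t'}}$ interacts perfectly with $f$) breaks down. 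Instead, the paper proves the $f(y_{i,t})$ bound only under the hypothesis $y_{i,t-1},y_{j,t-1}\le\tau$ (\Cref{claim:claim_for_corrbound}), where only first proposals occur, and handles the late phase via a non-inductive argument in \Cref{corollary:bound_correlation_unrestricted}: stop the induction at the last times $t^i, t^j$ when each user was still early, observe $\Pr[F_{i,t^i}]\le\Pr[F_{i,t}]+2\kappa$ because the late allocation mass is at most $(0.5+\kappa)(1-\tau)=2\kappa$, and absorb this into a multiplicative factor $\bigl(\frac{0.5+\kappa}{0.5-\kappa}\bigr)^2$, yielding the weaker constant $\Deltac=\gammac\cdot\bigl(\frac{0.5+\kappa}{0.5-\kappa}\bigr)^2\approx1.68$ rather than your implicit $f(1)=\gammac\approx1.54$. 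Your stronger bound with $f(Y_i(t))$ for arbitrary $t$ is not established (and may not hold), and the induction you propose through late rounds does not close. You should instead terminate the $f$-induction when either user transitions to late and then bound the remaining availability drop directly via the LP constraint that at most $2\kappa$ of probability mass remains; with that modification your argument becomes the paper's.
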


This section is dedicated to the proof of our main result. As mentioned before, we analyze the algorithm which has access to the expectation $\rho_{i,t}$ exactly. Note that this requires exponential time; however, in \Cref{app:sample_based_algo} we show that our sampling-based estimation only results in an additional loss of $\epsilon$ in the approximation. To prove this, we will rely on a consequence of our analysis, namely that the quantity $\rho_{i,t}$ is always bounded away from zero by some constant. Using this, we can apply standard Chernoff-Hoeffding concentration bounds to get reasonably close to the exact $\rho_{i,t}$ within small multiplicative error. 

To simplify the exposition, we will additionally assume that every allocation is successful, i.e., each success probability $q_{i,t}$ equals 1. In \Cref{app:stochasticrewards} we outline the necessary steps to generalize our analysis to the case of arbitrary success probabilities $\{q_{i,t}\}_{i,t}$. 

\paragraph{Outline.} Before diving into details we outline the ingredients in our proof of Theorem~\ref{theorem:main_theorem}. Firstly we note that by \Cref{obs:respectcapacity}, the size of $I_t$ (the set of users allocated to $t$) is always at most $c_t$, so $$\E \left[ \SW \right]  = \E \left[ \sum_t \max_{S \subseteq I_t, |S| \le c_t} \left( \sum_{i \in S} v_{i,t} \right) \right] = \sum_{i,t} v_{i,t} \cdot \Pr[i \text{ allocated to } t].$$

We will note that bounding the term $\Pr[i \text{ allocated to } t]$ naturally brings us into one of two cases. If $(i,t)$ is such that $\alpha_{i,t} < 1$, the allocation of $i$ to $t$ can only happen in \Cref{line:firstmatch} of our algorithm, and consequently it is straightforward to bound the resulting welfare (which we do in \Cref{lemma:approx_early_edges}). We then turn our perspective towards pairs $(i,t)$ with a subsampling probability $\alpha_{i,t} = 1$; for these, the analysis requires much more care. Again, we start by considering the contribution of allocating via a first proposal in Lemma~\ref{lemma:approx_late_edges} (i). Here the first proposal alone is not sufficient, and we are required to compensate for this via a suitable bound on the allocation probability via a second proposal. We do so by proving Lemma~\ref{lemma:approx_late_edges} (ii) which gives a sufficient lower bound of the contribution via a second proposal. This is the main technical contribution and will use lemmas analyzing the evolution of the correlation between offline users in \Cref{subsection:correlation_bound}. 

\paragraph{Notation.} For convenience, we let $y_{i,t} := \sum_{t' < t} x_{i,t'}.$ Note that $\alpha_{i,t} < 1$ exactly when $y_{i,t} < (0.5-\kappa) \cdot (0.5 + \kappa)^{-1}$. We hence define $\tau := (0.5-\kappa) \cdot (0.5 + \kappa)^{-1}$ as this threshold for $y_{i,t}$ after which the subsampling probability $\alpha_{i,t}$ becomes one. If for resource $t$ and user $i$ we have $y_{i,t} \leq \tau$, then we call the pair $(i,t)$ \emph{early}. Otherwise, we call the pair $(i,t)$ \emph{late}. In addition, we define $\mathcal{A}_1$ as the set of all pairs $(i,t)$ such that user $i$ was allocated to resource $t$ in \Cref{line:firstmatch}, and $\mathcal{A}_2$ as the set of all pairs $(i,t)$ such that $i$ was allocated to $t$ in \Cref{line:secondmatch}. 

\medskip

As $i$ is not allocated more than once in our algorithm, we quickly observe the following claim. 

\begin{obs}\label{observation:split_value}
    For any resource $t$, we have $$ \E \left[ \max_{S \subseteq I_t, |S| \le c_t} \left( \sum_{i \in S} v_{i,t} \right) \right]  = \sum_{i \in I} v_{i,t} \cdot ( \Pr[(i,t) \in \mathcal{A}_1] + \Pr[ (i,t) \in \mathcal{A}_2]) . $$
\end{obs}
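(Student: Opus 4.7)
The plan is to eliminate the $\max$ on the left-hand side by appealing to \Cref{obs:respectcapacity}, and then decompose the resulting expected sum according to which of the two proposal rounds produces the allocation. First I would note that since \Cref{allocationalgexact} never allocates more than $c_t$ users to resource $t$, the set $I_t$ always satisfies $|I_t| \le c_t$, so the maximizer $S$ is simply $I_t$ itself (values are nonnegative, so keeping every allocated user is optimal). Rewriting the sum over $I_t$ as a sum over all users $i \in I$ with the indicator that $i$ was allocated to $t$ and then swapping expectation and summation yields
\[
\E\!\left[ \max_{S \subseteq I_t,\, |S| \le c_t} \sum_{i \in S} v_{i,t} \right] \;=\; \sum_{i \in I} v_{i,t} \cdot \Pr\!\left[ i \in I_t \right].
\]

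Next I would decompose the event $\{i \in I_t\}$ by the line of the algorithm that performed the allocation. Inspecting \Cref{allocationalgexact}, a user $i$ can only enter $I_t$ either via \Cref{line:firstmatch} (producing the event $(i,t) \in \mathcal{A}_1$) or via \Cref{line:secondmatch} (producing the event $(i,t) \in \mathcal{A}_2$). The check in \Cref{line:ifiavail} requires $i$ to be \emph{currently unallocated} before a second-round match, so the two events are disjoint. Therefore $\Pr[i \in I_t] = \Pr[(i,t) \in \mathcal{A}_1] + \Pr[(i,t) \in \mathcal{A}_2]$, and plugging this into the display above proves the claim. There is no real obstacle here; the only subtlety to flag is the disjointness coming from the unallocated-check in \Cref{line:ifiavail}, which is why the second-round contribution does not double-count first-round matches.
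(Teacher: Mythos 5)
Your proof is correct and matches the paper's (one-line) justification: the paper simply notes that no user is allocated more than once, which is precisely your disjointness observation for $\mathcal{A}_1$ and $\mathcal{A}_2$, combined with \Cref{obs:respectcapacity} to drop the $\max$ and linearity of expectation. You flesh out the same argument the paper leaves implicit.
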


To analyze the probabilities $\Pr[(i,t) \in \mathcal{A}_1]$ and $\Pr[(i,t) \in \mathcal{A}_2]$, we consider two separate cases based on whether $(i,t)$ is early (\Cref{subsection:analysis_early}) or late (\Cref{subsection:analysis_late}). 

\subsection{Analysis for Early Pairs} \label{subsection:analysis_early}

It will be crucial to bound the probability of a user $i$ being free at time $t$. We denote the event that user $i$ is \emph{free} or \emph{available} (i.e., not allocated) at the arrival of resource $t$ by $F_{i,t}$. The following observation gives an expression of the probability with respect to the LP variables. It is crucial to note that if a pair $(i,t)$ is early, so is every pair $(i,t')$ with $t' < t$.

\begin{obs}\label{observation:early_edges}
	For early pairs $(i,t)$, we have $\Pr[F_{i,t}] = 1 - (0.5 + \kappa) \cdot y_{i,t} $.
\end{obs}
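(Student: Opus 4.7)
The plan is to prove the identity by induction on $t$, crucially using that if $(i,t)$ is early, then so is every $(i,t')$ with $t' < t$ (since $y_{i,t'} \leq y_{i,t} \leq \tau$). The base case $t = 1$ is immediate: $y_{i,1} = 0$ and $\Pr[F_{i,1}] = 1$ since no allocation has yet been attempted.

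For the inductive step, assume the claim holds for every $t' < t$ where $(i,t')$ is early. Since we assume $q_{i,t'} = 1$ throughout this section, ``successfully allocated'' coincides with ``allocated,'' so
\begin{equation*}
\Pr[F_{i,t}] = 1 - \sum_{t' < t} \Pr[i \text{ allocated to } t'] = 1 - \sum_{t' < t} \bigl( \Pr[(i,t') \in \mathcal{A}_1] + \Pr[(i,t') \in \mathcal{A}_2] \bigr).
\end{equation*}
Because each $(i,t')$ with $t' < t$ is early, $\alpha_{i,t'} < 1$, and inspecting \Cref{allocationalgexact} we see that a user with $\alpha_{i,t'} < 1$ can never be allocated in \Cref{line:secondmatch}. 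Hence $\Pr[(i,t') \in \mathcal{A}_2] = 0$ for each such $t'$, and it remains to compute $\Pr[(i,t') \in \mathcal{A}_1]$.

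The next step is to observe that $(i,t') \in \mathcal{A}_1$ requires four events: resource $t'$ arrives, $i \in \FPt[t']$, $i$ is free at $t'$, and the $\alpha_{i,t'}$-biased coin succeeds. The arrival of $t'$ is independent of all prior randomness, the pivotal sampling in \Cref{line:firstproposal} is carried out ``history-agnostically'' (its internal coins are fresh and the marginals $x_{i,t'}/p_{t'}$ do not depend on the realized history), and the biased coin of \Cref{line:firstmatch} is likewise independent of everything else. The event $F_{i,t'}$ depends only on what happened strictly before $t'$. So these four events factor, and using property~\ref{pivot_sampling_marginals} of pivotal sampling together with the inductive hypothesis,
\begin{equation*}
\Pr[(i,t') \in \mathcal{A}_1] = p_{t'} \cdot \frac{x_{i,t'}}{p_{t'}} \cdot \Pr[F_{i,t'}] \cdot \alpha_{i,t'} = x_{i,t'} \cdot \bigl(1 - (0.5+\kappa) y_{i,t'}\bigr) \cdot \frac{0.5+\kappa}{1 - (0.5+\kappa) y_{i,t'}} = (0.5+\kappa) \cdot x_{i,t'}.
\end{equation*}

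Substituting back yields
\begin{equation*}
\Pr[F_{i,t}] = 1 - (0.5+\kappa) \sum_{t' < t} x_{i,t'} = 1 - (0.5+\kappa) \cdot y_{i,t},
\end{equation*}
completing the induction. The only delicate point is the independence argument used to factor $\Pr[(i,t') \in \mathcal{A}_1]$; the rest is a direct telescoping computation designed so that the factor $1-(0.5+\kappa)y_{i,t'}$ from the inductive hypothesis cancels the denominator of $\alpha_{i,t'}$.
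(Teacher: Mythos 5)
Your proposal is correct and follows the same inductive strategy as the paper's proof; the only structural difference is that you express $\Pr[F_{i,t}]$ directly as $1 - \sum_{t' < t}\Pr[i\text{ allocated to }t']$ rather than the paper's recursion $\Pr[F_{i,t+1}] = \Pr[F_{i,t}] - \Pr[(i,t)\in\mathcal{A}_1]$, which is purely cosmetic. One small imprecision worth noting: you assert that early pairs have $\alpha_{i,t'} < 1$, but ``early'' is defined as $y_{i,t'} \le \tau$ while $\alpha_{i,t'} < 1$ requires the strict inequality $y_{i,t'} < \tau$; at the boundary $y_{i,t'} = \tau$ one has $\alpha_{i,t'} = 1$, so the user \emph{does} enter the second-proposal loop. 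The conclusion $\Pr[(i,t')\in\mathcal{A}_2] = 0$ still holds there because $\beta_{i,t'} = (0.5+\kappa)\tau - (0.5-\kappa) = 0$ (and also because $(i,t)$ early with $y_{i,t'} = \tau$ forces $x_{i,s} = 0$ for $t' \le s < t$), but your stated justification does not cover this case; the paper's proof glosses over the same boundary with a ``clearly.''
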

\begin{proof}
	We proceed via induction on $t$. Before the arrival of the first resource, the claim is trivially true, as all users are available with probability one. Afterwards, note that 
	\begin{align}
		\Pr[(i,t) \in \mathcal{A}_1] &= p_t \cdot \Pr[i \in \FPt] \cdot \Pr[F_{i,t}] \cdot \alpha_{i,t} \label{eqn:itinA1early}
	\end{align} as $t$'s arrival, $i$ being included in $\FPt$, $F_{i,t}$ and the algorithm's $\text{Ber}(\alpha_{i,t})$ coin flip are mutually independent events. If $(i,t)$ is early, then $\alpha_{i,t} = \frac{0.5 + \kappa }{1 - (0.5 + \kappa) \cdot y_{i,t}}$, so we have $$\Pr[(i,t) \in \mathcal{A}_1] = p_t \cdot \frac{x_{i,t}}{p_t} \cdot (1 - (0.5 + \kappa) \cdot y_{i,t}) \cdot \frac{0.5 + \kappa }{1 - (0.5 + \kappa) \cdot y_{i,t}} = (0.5 + \kappa) \cdot x_{i,t},$$ where we also use the induction hypothesis for the probability of the user being free at the arrival of resource $t$. For early $(i,t)$, we also clearly have $\Pr[(i,t) \in \mathcal{A}_2] = 0$, so 
 \begin{align*}
 \Pr[F_{i,t+1}] &= \Pr[F_{i,t}] - \Pr[(i,t) \in \mathcal{A}_1] = 1 - (0.5 + \kappa) \cdot y_{i, t+1}. \qedhere
 \end{align*}
\end{proof}

As a consequence we can bound the contribution of an early pair $(i,t)$ to $\mathcal{A}_1$ and $\mathcal{A}_2$, as follows. 

\begin{restatable}{obs}{lemmaapproxearlyedges} \label{lemma:approx_early_edges}
    For early pairs $(i,t)$, $\Pr[(i,t) \in \mathcal{A}_1] = (0.5 + \kappa) \cdot x_{i,t} $ and ${\Pr[(i,t) \in \mathcal{A}_2] = 0}$.
\end{restatable}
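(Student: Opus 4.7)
The plan is to observe that the statement is essentially a bookkeeping restatement of computations already performed in the proof of \Cref{observation:early_edges}, so no new induction or machinery is needed. I would structure the argument as two short pieces, one for each equality.

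For the first equality $\Pr[(i,t) \in \mathcal{A}_1] = (0.5+\kappa) \cdot x_{i,t}$, I would re-derive the factorization \eqref{eqn:itinA1early} for the given early pair: the events ``$t$ arrives'', ``$i \in \FPt$'' (determined by the pivotal sampling call in \Cref{line:firstproposal}), $F_{i,t}$, and the independent $\text{Ber}(\alpha_{i,t})$ coin flip in \Cref{line:firstmatch} are mutually independent. Plugging in $\Pr[i \in \FPt] = x_{i,t}/p_t$ from Property~\ref{pivot_sampling_marginals}, $\Pr[F_{i,t}] = 1 - (0.5+\kappa)\, y_{i,t}$ from \Cref{observation:early_edges}, and $\alpha_{i,t} = \tfrac{0.5+\kappa}{1 - (0.5+\kappa)\, y_{i,t}}$ (which is what the algorithm's $\min$-based definition of $\alpha_{i,t}$ reduces to whenever $y_{i,t} \le \tau$, with both branches of the $\min$ agreeing at the boundary $y_{i,t}=\tau$) yields exactly $(0.5+\kappa) \cdot x_{i,t}$ after telescoping the two $(1 - (0.5+\kappa)\, y_{i,t})$ factors.

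For the second equality $\Pr[(i,t) \in \mathcal{A}_2] = 0$, I would split on whether $y_{i,t} < \tau$ or $y_{i,t} = \tau$. In the strict case $y_{i,t} < \tau$, the algorithm has $\alpha_{i,t} < 1$, and the inner loop of the second-proposal block explicitly filters to users with $\alpha_{i,t} = 1$, so $(i,t)$ is never considered for allocation in \Cref{line:secondmatch}. In the boundary case $y_{i,t} = \tau$, the pair is considered, but $(0.5+\kappa)\, y_{i,t} - (0.5-\kappa) = 0$ by definition of $\tau$, so the algorithm sets $\beta_{i,t} = \min(1, 0 \cdot \rho_{i,t}^{-1}) = 0$ and the $\text{Ber}(\beta_{i,t})$ coin in \Cref{line:secondmatch} is deterministically zero. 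Either way, $(i,t) \notin \mathcal{A}_2$.

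There is no real obstacle here; the only mild subtlety is the boundary $y_{i,t} = \tau$, which is handled automatically because the numerator in the definition of $\beta_{i,t}$ vanishes precisely at this threshold. This same algebraic identity is what will let the analyses of early and late pairs be glued together when we bound $\E[\SW]$ in \Cref{subsection:analysis_late}, so verifying it here is the natural first step.
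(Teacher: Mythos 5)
Your proof is correct and essentially the same as the paper's, which treats this observation as an immediate consequence of the computation embedded inside \Cref{observation:early_edges}. The one genuine bit of added value in your writeup is the careful treatment of the boundary $y_{i,t} = \tau$: there $\alpha_{i,t} = 1$, so user $i$ is \emph{not} filtered out of the second-proposal loop, and the paper's assertion that $\Pr[(i,t) \in \mathcal{A}_2] = 0$ is ``clear'' actually rests on the observation you make explicit, namely that $(0.5+\kappa)\,y_{i,t} - (0.5-\kappa) = 0$ at $y_{i,t}=\tau$, forcing $\beta_{i,t}=0$. This is exactly the algebraic identity that lets the early-pair and late-pair formulas agree at the threshold, as you note.
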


Thus for early pairs $(i,t)$, our algorithm achieves the desired allocation probability. 

\subsection{Analysis for Late Pairs implies \Cref{theorem:main_theorem}} \label{subsection:analysis_late}

For late pairs, we show the following lemma which will be sufficient to prove our main \Cref{theorem:main_theorem}. 

\begin{lemma}\label{lemma:approx_late_edges}
    For late pairs $(i,t)$, the following two statements hold: 
    \begin{itemize}
        \item[\textnormal{(i)}] $\Pr[(i,t) \in  \mathcal{A}_1] = (1 - (0.5 + \kappa) \cdot y_{i,t} ) \cdot x_{i,t} $, and \label{lemma:approxlateedges1}
        \item[\textnormal{(ii)}] $\Pr[(i,t) \in  \mathcal{A}_2] = ( (0.5 + \kappa) \cdot y_{i,t} - 0.5 + \kappa) \cdot x_{i,t} $.
    \end{itemize}
\end{lemma}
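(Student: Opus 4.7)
I would prove parts~(i) and~(ii) simultaneously by induction on $t$, using as the inductive hypothesis that $\Pr[F_{i,t'}] = 1 - (0.5+\kappa)\cdot y_{i,t'}$ for every pair (early \emph{or} late) at every time $t' \le t$. The base case is trivial and \Cref{observation:early_edges} propagates the identity across early pairs; this lemma extends the propagation across late pairs, since once (i) and (ii) are both established they sum to $(0.5+\kappa)\cdot x_{i,t}$, precisely the decrement needed to obtain $\Pr[F_{i,t+1}] = 1-(0.5+\kappa)\cdot y_{i,t+1}$. Part~(i) then follows immediately from the factorization $\Pr[(i,t) \in \mathcal{A}_1] = p_t \cdot \Pr[i \in \FPt] \cdot \Pr[F_{i,t}] \cdot \alpha_{i,t}$ from \eqref{eqn:itinA1early}, justified by the mutual independence of $t$'s arrival, the fresh pivotal-sampling randomness defining $\FPt$, the history determining $F_{i,t}$, and the independent Bernoulli trial; combined with $\alpha_{i,t}=1$ for late pairs and property~\ref{pivot_sampling_marginals}, this yields exactly $x_{i,t}\cdot(1-(0.5+\kappa)\cdot y_{i,t})$.

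\textbf{Deriving part~(ii).} For a late pair $(i,t)$, the event $(i,t)\in \mathcal{A}_2$ requires $t$ to arrive, $F_{i,t}$ to hold, $i$ to be unallocated after the first round, $i \in \SPt$, and the independent $\beta_{i,t}$-coin to succeed. Since $\alpha_{i,t}=1$, under $F_{i,t}$ the condition ``unallocated after \Cref{line:sample_defAt}'' is exactly $\{i \notin \FPt\}$. Conditioning on all randomness preceding the construction of $\SPt$, property~\ref{pivot_sampling_marginals} applied to the scaled pivotal sampling gives
\[ \Pr[\,i \in \SPt \mid \text{prior randomness},\, t \text{ arrived}\,] = \left(1 - A_t/c_t\right) \cdot x_{i,t}/p_t . \]
Factoring $x_{i,t}/p_t$ out and taking the expectation of $\mathbbm{1}[F_{i,t}] \cdot \mathbbm{1}[i \notin \FPt] \cdot (1-A_t/c_t)$ against the remaining randomness recovers precisely $\rho_{i,t}$ as defined in \Cref{line:computerhoexact}. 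Multiplying by $p_t$ and by the independent $\beta_{i,t}$-coin yields $\Pr[(i,t) \in \mathcal{A}_2] = x_{i,t} \cdot \rho_{i,t} \cdot \beta_{i,t}$; substituting the \emph{unclipped} value $\beta_{i,t} = ((0.5+\kappa)\cdot y_{i,t} - (0.5-\kappa))/\rho_{i,t}$ produces exactly the target expression.

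\textbf{Main obstacle.} What remains --- and this is the technical heart of the paper --- is verifying that the $\min(1,\cdot)$ guarding $\beta_{i,t}$ is never active, equivalently that
\[ \rho_{i,t} \ge (0.5+\kappa)\cdot y_{i,t} - (0.5-\kappa) \]
for every late pair when $\kappa = \approxconstant$. Lower-bounding $\rho_{i,t}$ requires controlling how conditioning on $F_{i,t} \wedge \{i \notin \FPt\}$ can inflate $A_t$; the dangerous quantity is $\Pr[F_{j,t} \mid F_{i,t}]$ for users $j \neq i$, since as discussed in \Cref{subsec:poscorr} these availabilities are unavoidably positively correlated. This is precisely what the correlation analysis developed in the remainder of this section is designed to handle, via the multiplicative inflation $f\!\left(\sum_{t'<t} x_{i,t'} \cdot q_{i,t'}\right)$ previewed in the introduction. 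I would combine those correlation bounds with the negative-cylinder dependence property~\ref{pivot_sampling_negative_dependence} of pivotal sampling \emph{within} round $t$ to obtain an explicit lower bound on $\rho_{i,t}$; the specific constant $\kappa = \approxconstant$ is calibrated so that this lower bound just suffices.
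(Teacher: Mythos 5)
Your proposal is correct and matches the paper's own argument essentially step for step: the same induction maintaining $\Pr[F_{i,t}] = 1 - (0.5+\kappa) y_{i,t}$, the same factorization $p_t \cdot \Pr[i\in\FPt]\cdot\Pr[F_{i,t}]\cdot\alpha_{i,t}$ with $\alpha_{i,t}=1$ for part (i), the identity $\Pr[(i,t)\in\mathcal{A}_2] = x_{i,t}\cdot\rho_{i,t}\cdot\beta_{i,t}$ for part (ii), and the reduction to showing the $\min(1,\cdot)$ clip on $\beta_{i,t}$ is never active, i.e.\ $\rho_{i,t} \ge (0.5+\kappa)y_{i,t}-(0.5-\kappa)$, to be established via the correlation bounds of the following subsection. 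One minor imprecision: the negative-cylinder-dependence property~\ref{pivot_sampling_negative_dependence} enters the paper's argument through the correlation bound (\Cref{claim:claim_for_corrbound}), not through the lower bound on $\rho_{i,t}$ (\Cref{lem:rhoitlowerbound}), which only needs marginals and an elementary conditional-probability bound to remove the conditioning on $i\notin\FPt$.
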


We note that this immediately implies our main result.

\begin{proof}[Proof of Theorem~\ref{theorem:main_theorem}.]
    We have $\Pr[(i,t) \in  \mathcal{A}_1] + \Pr[(i,t) \in  \mathcal{A}_2] = (0.5 + \kappa) \cdot x_{i,t}$ for any pair $(i,t)$ by \Cref{lemma:approx_early_edges} and \Cref{lemma:approx_late_edges}. Hence, using the decomposition in \Cref{observation:split_value}, we have 
    \begin{align*}
         \E \left[ \sum_t v_t (I_t) \right] & = \sum_t \sum_{i \in I} v_{i,t} \cdot ( \Pr[(i,t) \in \mathcal{A}_1] + \Pr[ (i,t) \in \mathcal{A}_2])  \\ &=   \sum_t \sum_{i \in I} v_{i,t} \cdot (0.5 + \kappa) \cdot x_{i,t} \\ & = (0.5+\kappa) \cdot \OPT \eqref{LP}   \geq (0.5+\kappa) \cdot \opton . \qedhere
    \end{align*}
\end{proof}

Thus, it remains to prove \Cref{lemma:approx_late_edges}. Our analysis here requires significantly more care as it must bound the gain from the second proposal. As the second proposal's marginal probabilities are dependent on which offline users were allocated in the first proposal, a complete analysis must consider the correlation introduced. 

\subsubsection{Proof of \Cref{lemma:approx_late_edges} (i)} 

As for early pairs, the remainder of our proof will proceed by induction on $t$. Thus, for every late pair $(i,t')$ with $t' < t$, by the inductive hypothesis we have $\Pr[(i,t') \in \mathcal{A}_1] + \Pr[(i,t') \in \mathcal{A}_2] = (0.5 + \kappa) \cdot x_{i,t'}$. Recall also that for every early pair $(i,t')$ we know from \Cref{lemma:approx_early_edges} that $\Pr[(i,t') \in \mathcal{A}_1] + \Pr[(i,t') \in \mathcal{A}_2] = (0.5 + \kappa) \cdot x_{i,t'}$. Thus, we may assume that for the late pair $(i,t)$ being considered we have \begin{align} \Pr[F_{i,t}] = 1 - (0.5 + \kappa) \cdot y_{i,t} . \label{eqn:Fit} \end{align} With this, bounding the probability of allocation along a first proposal is very straightforward. 

\begin{proof}[Proof of \Cref{lemma:approx_late_edges} (i).]
    Note that
    \begin{align*}
	\Pr[(i,t) \in \mathcal{A}_1] &= p_t \cdot \Pr[F_{i,t}] \cdot \Pr[i \in \textsf{FP}_t] \cdot \alpha_{i,t} && \text{(Equation \eqref{eqn:itinA1early})} \\
	&= p_t \cdot \left( 1 - (0.5 + \kappa) \cdot y_{i,t} \right) \cdot \frac{x_{i,t}}{p_t} \cdot 1 && \text{(\Cref{eqn:Fit})}\\
	&= (1 - (0.5 + \kappa) \cdot y_{i,t}) \cdot x_{i,t}. && \qedhere 
\end{align*}
\end{proof}

This completes the proof of Lemma~\ref{lemma:approx_late_edges} (i), and the remainder of this section is dedicated to the proof of Lemma~\ref{lemma:approx_late_edges} (ii). 

\subsubsection{Proof of \Cref{lemma:approx_late_edges} (ii)}
\label{sec:approx_late_edges}

We begin by bounding $\Pr[(i,t) \in  \mathcal{A}_2]$ for late pairs $(i,t)$, in the natural way which depends on the number of allocated users during the first proposal in \Cref{line:firstmatch}. (Recall that this is because for second proposals, we reduce the marginal probabilities for pivotal sampling algorithm by a factor of $1 - \nicefrac{A_t}{c_t}$). Note that for $(i,t)$ to be matched as a second proposal we need all of the following to happen: (i) $t$ should arrive, (ii) $i$ must be available and unallocated after \Cref{line:sample_defAt}, and  included as a second proposal, and (iii) the potential match $(i,t)$ should survive the final downsampling by $\beta_{i,t}$. This lets us observe

\begin{align}
	\Pr[(i,t) &\in \mathcal{A}_2] \nonumber\\
    &= p_t \cdot 
	\Pr[i \text{ available and unallocated after \Cref{line:sample_defAt}} \wedge i \in \textsf{SP}_t \mid t \text{ arrived}] \cdot \beta_{i,t}  \nonumber 
 \\
 &= p_t \cdot \E \left[ \mathbbm{1}[i \text{ available and unallocated after \Cref{line:sample_defAt}}] \cdot \left( 1 - \frac{A_t}{c_t} \right)\cdot \frac{x_{i,t}}{p_t}  \bigm| t \text{ arrived} \right] \cdot \beta_{i,t} \nonumber \\
 &= x_{i,t} \cdot \rho_{i,t} \cdot \beta_{i,t}. \label{eqn:rhotimesbeta}
\end{align}

For the second equality, we relied on Property \text{\ref{pivot_sampling_marginals}} of pivotal sampling, which guarantees that individual elements are sampled with exactly their marginal probability. Note that this marginal probability is random, and potentially correlated with $ \mathbbm{1}[i \text{ available and unallocated after \Cref{line:sample_defAt}}]$.

Recall that $\beta_{i,t} := \min \Big( 1 , \left( (0.5 + \kappa) \cdot y_{i,t} - (0.5 - \kappa) \right) \cdot \frac{1}{\rho_{i,t}} \Big)$. If the $\min(1, \cdot)$ here is redundant, we are immediately done; this is concretized in the following observation. 

\begin{obs}\label{obs:doneifbetaitatmost1}
    If $\rho_{i,t} \ge (0.5 + \kappa) y_{i,t} - (0.5 - \kappa)$, then $$\Pr[(i,t) \in \mathcal{A}_2] = x_{i,t} \cdot  \left( (0.5 + \kappa) \cdot y_{i,t} - (0.5 - \kappa) \right). $$
\end{obs}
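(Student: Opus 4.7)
The plan is to directly combine the expression for $\Pr[(i,t) \in \mathcal{A}_2]$ derived in Equation~\eqref{eqn:rhotimesbeta} with the definition of $\beta_{i,t}$, under the hypothesis on $\rho_{i,t}$. This is essentially a one-step bookkeeping argument, so the proof should be very short.

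First I would recall from Equation~\eqref{eqn:rhotimesbeta} that
\begin{equation*}
\Pr[(i,t) \in \mathcal{A}_2] \;=\; x_{i,t} \cdot \rho_{i,t} \cdot \beta_{i,t}.
\end{equation*}
Next I would examine the definition
\begin{equation*}
\beta_{i,t} \;=\; \min\!\Big(1,\; \bigl((0.5+\kappa)\cdot y_{i,t} - (0.5-\kappa)\bigr)\cdot \tfrac{1}{\rho_{i,t}}\Big).
\end{equation*}
The hypothesis $\rho_{i,t} \ge (0.5+\kappa)\, y_{i,t} - (0.5-\kappa)$ immediately implies that the second argument of the $\min$ is at most $1$, so the $\min$ is attained by the second argument; that is,
\begin{equation*}
\beta_{i,t} \;=\; \bigl((0.5+\kappa)\cdot y_{i,t} - (0.5-\kappa)\bigr)\cdot \tfrac{1}{\rho_{i,t}}.
\end{equation*}
Substituting this into the displayed expression for $\Pr[(i,t)\in\mathcal{A}_2]$ cancels $\rho_{i,t}$ and yields $x_{i,t} \cdot \bigl((0.5+\kappa)\cdot y_{i,t} - (0.5-\kappa)\bigr)$, as claimed.

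There is no real obstacle in this step itself; the content of Lemma~\ref{lemma:approx_late_edges}(ii) is hidden in actually verifying the hypothesis $\rho_{i,t} \ge (0.5+\kappa)\, y_{i,t} - (0.5-\kappa)$, which is where the correlation analysis of Section~\ref{subsection:correlation_bound} (and the specific choice $\kappa = \approxconstant$) must come in. The present observation is simply a clean reformulation isolating precisely what must be shown about $\rho_{i,t}$ for the $\min$ truncation in $\beta_{i,t}$ to be inactive, thereby reducing the remainder of the proof of Lemma~\ref{lemma:approx_late_edges}(ii) to a lower bound on $\rho_{i,t}$.
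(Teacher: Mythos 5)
Your proof is correct and matches the paper's reasoning: the paper derives \eqref{eqn:rhotimesbeta} and then notes that the hypothesis makes the $\min(1,\cdot)$ in $\beta_{i,t}$ redundant, after which the $\rho_{i,t}$ factors cancel. Your final paragraph correctly identifies that the real work lies in establishing the hypothesis, which is exactly how the paper frames the observation.
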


Thus it suffices to show that the hypothesis of this observation holds. In other words, for the remainder of the proof, the only thing we need to show is the following proposition.

\begin{prop}
    For any late pair $(i,t)$, we have $\rho_{i,t} \ge (0.5 + \kappa) y_{i,t} - (0.5 - \kappa)$.
\end{prop}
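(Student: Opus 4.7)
The plan is to expand $\rho_{i,t}$ into manageable pieces and reduce the claim to invoking the correlation bound from \Cref{subsection:correlation_bound}. Set $U := \mathbbm{1}[i \text{ available and unallocated after \Cref{line:sample_defAt}}]$ and write $A_t = \sum_j X_j$ with $X_j := \mathbbm{1}[j \in \FPt,\, F_{j,t},\, C_j = 1]$, where $C_j$ is the independent $\Ber(\alpha_{j,t})$ coin flipped in \Cref{line:firstmatch}. Since $X_i$ implies $F_{i,t}$, we have $U = \mathbbm{1}[F_{i,t}] - X_i$ and therefore $U X_i = 0$, so
\[
\rho_{i,t} \;=\; \E[U\mid t\text{ arrived}] \;-\; \tfrac{1}{c_t}\sum_{j\ne i}\E[U X_j\mid t\text{ arrived}].
\]
Because the round-$t$ randomness ($\FPt$ and the coins $C_j$) is independent of history given that $t$ arrives, $\E[U\mid t\text{ arrived}]=\Pr[F_{i,t}](1-x_{i,t}/p_t)$ (using $\alpha_{i,t}=1$ for late $(i,t)$) and $\E[\mathbbm{1}[F_{i,t}] X_j\mid t\text{ arrived}] = \Pr[F_{i,t}\wedge F_{j,t}]\cdot (x_{j,t}/p_t)\,\alpha_{j,t}$.

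Next I would bound each $\E[U X_j\mid t]\le\E[\mathbbm{1}[F_{i,t}] X_j\mid t]$ by dropping the nonnegative term $X_iX_j$, apply the correlation bound $\Pr[F_{i,t}\wedge F_{j,t}]\le \Pr[F_{i,t}]\Pr[F_{j,t}]\, f(y_{i,t})$ from \Cref{subsection:correlation_bound}, and invoke the uniform estimate $\Pr[F_{j,t}]\,\alpha_{j,t}\le 0.5+\kappa$. The latter is an equality for early $(j,t)$ by the definition of $\alpha_{j,t}$ (together with \Cref{observation:early_edges}), and for late $(j,t)$ it follows from $\alpha_{j,t}=1$ and $y_{j,t}>\tau$ combined with the inductive identity $\Pr[F_{j,t}]=1-(0.5+\kappa)y_{j,t}$. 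Summing over $j\ne i$ and using the LP constraint $\sum_{j\ne i}x_{j,t}\le p_t c_t - x_{i,t}$ then yields the closed-form lower bound
\[
\rho_{i,t}\;\ge\;\Pr[F_{i,t}]\Bigl[(1-s) - (0.5+\kappa)\,f(y_{i,t})\,(1-s/c_t)\Bigr],\qquad s:=x_{i,t}/p_t.
\]

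The remaining step is a univariate verification. The bracketed expression is linear and decreasing in $s$ (since $(0.5+\kappa)\,f(y_{i,t})<1\le c_t$ for $y_{i,t}\le 1$ and the chosen $\kappa$), so substituting the maximum admissible value $s=1-y_{i,t}$ from Constraint~\eqref{eqn:PPSWConstraint} and observing that the resulting bound only weakens as $c_t$ grows reduces the claim to the scalar inequality
\[
\bigl(1-(0.5+\kappa)y\bigr)\bigl(y-(0.5+\kappa)\,f(y)\bigr)\;\ge\;(0.5+\kappa)y-(0.5-\kappa)\qquad\text{for all }y\in(\tau,1].
\]
Substituting the explicit form of $f$ and clearing the denominator $1-(0.5+\kappa)y$ turns this into a quadratic in $y$ whose roots one checks lie outside $(\tau,1]$ precisely when $\kappa\le\approxconstant$ -- this is the calculation that pins down the final approximation constant.

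The main obstacle is the correlation bound itself: by \Cref{observation:positivecorrelationrequired} no bound with $f\equiv 1$ (i.e., genuine negative correlation) can possibly hold, so \Cref{subsection:correlation_bound} must inductively track the precise amount of positive correlation introduced by the two-proposal rounding across rounds, which is the technical heart of the paper. Once that lemma is in hand the reduction outlined above is essentially bookkeeping with the LP constraints, the marginal property~\ref{pivot_sampling_marginals} of pivotal sampling, and the explicit forms of $\alpha_{j,t}$ and $\Pr[F_{j,t}]$.
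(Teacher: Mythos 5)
Your decomposition of $\rho_{i,t}$ is genuinely different from the paper's and, in fact, tighter. The paper's \Cref{lem:rhoitlowerbound} factors out $\Pr[i\notin\FPt\mid t, F_{i,t}]=1-x_{i,t}/p_t$ and then bounds the residual conditional expectation via the lossy inequality $\E[A_t\mid t, F_{i,t}, i\notin\FPt]\le \tau^{-1}\E[A_t\mid t, F_{i,t}]$, paying a factor $\tau$. You instead expand $\rho_{i,t}=\E[U\mid t]-\frac{1}{c_t}\sum_{j\ne i}\E[UX_j\mid t]$ using the identity $UX_i=0$ and drop the nonnegative $X_iX_j$ term, which avoids the $\tau$ factor entirely and keeps the dependence on $s=x_{i,t}/p_t$ and $c_t$ exact. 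Tracing the constants, your (corrected) bound dominates the paper's $\Pr[F_{i,t}]\cdot\bigl(2\tau-1-\Deltac(0.5+\kappa)\bigr)$ pointwise, so it would actually support a slightly larger $\kappa$; the buys are a cleaner algebraic route and a tighter inequality, while the paper's route more visibly isolates the quantity $\E[A_t\mid t, F_{i,t}]$ that \Cref{lemma:expected_At_with_condition} controls.

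There is, however, one real error. You invoke the bound $\Pr[F_{i,t}\wedge F_{j,t}]\le f(y_{i,t})\Pr[F_{i,t}]\Pr[F_{j,t}]$, which is \Cref{claim:claim_for_corrbound}. That claim requires $y_{i,t-1}\le\tau$ \emph{and} $y_{j,t-1}\le\tau$, i.e., both users must still be in the early regime. For a late pair $(i,t)$ you have $y_{i,t}>\tau$, and $j$ is arbitrary, so \Cref{claim:claim_for_corrbound} does not apply. The correct tool is \Cref{corollary:bound_correlation_unrestricted}, which gives the uniform factor $\Deltac=\gammac\cdot\bigl(\frac{0.5+\kappa}{0.5-\kappa}\bigr)^2>\gammac=f(1)\ge f(y_{i,t})$; the extra $(\frac{0.5+\kappa}{0.5-\kappa})^2$ is precisely the cost of extending the induction past the threshold $\tau$. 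If you replace $f(y_{i,t})$ by $\Deltac$ throughout, the remaining reduction is sound: the bracket $(1-s)-(0.5+\kappa)\Deltac(1-s/c_t)$ is decreasing in both $s$ (since $(0.5+\kappa)\Deltac<1\le c_t$) and $c_t$, so the worst case is $s=1-y_{i,t}$ and $c_t\to\infty$, leaving the scalar inequality
\begin{align*}
\bigl(1-(0.5+\kappa)y\bigr)\bigl(y-(0.5+\kappa)\Deltac\bigr)\;\ge\;(0.5+\kappa)y-(0.5-\kappa)\qquad\text{for }y\in(\tau,1],
\end{align*}
whose left side is concave in $y$, so it suffices to check the endpoints $y=\tau$ and $y=1$; both hold with slack at $\kappa=\approxconstant$. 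So the approach is correct once the right correlation bound is cited, but as written the proposal applies \Cref{claim:claim_for_corrbound} outside its hypothesis.
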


As a first step, we start with the following lower bound on $\rho_{i,t}$.

\begin{lemma}\label{lem:rhoitlowerbound}
For late pairs $(i,t)$, $$\rho_{i,t} \ge (1 - (0.5 + \kappa ) \cdot y_{i,t} ) \cdot \left(\tau -  \frac{\E[A_t \mid t \textup{ arrived}, F_{i,t}]}{c_t} \right).$$
\end{lemma}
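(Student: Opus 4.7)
The plan is to unpack the definition of $\rho_{i,t}$ under the conditioning on $F_{i,t}$, and then close the bound by invoking LP Constraint~\eqref{eqn:PPSWConstraint}. Since $(i,t)$ is late we have $\alpha_{i,t} = 1$, so on the event $F_{i,t}$ the indicator ``$i$ available and unallocated after \Cref{line:sample_defAt}'' coincides with $\mathbbm{1}[i \notin \FPt]$, and on $F_{i,t} \cap \{i \notin \FPt\}$ the variable $A_t$ equals $A_t^{-i}$, the number of users $j \ne i$ allocated to $t$ during the first proposal. Using that $\FPt$ is drawn by pivotal sampling from marginals depending only on the LP solution (so $\FPt$ is independent of the entire history, and in particular of $F_{i,t}$), and that $F_{i,t}$ is independent of $t$'s arrival (from the LP's online constraint), the expectation factors as
\begin{align*}
\rho_{i,t} \;=\; \Pr[F_{i,t}] \cdot \E\bigl[ \mathbbm{1}[i \notin \FPt] \cdot (1 - A_t^{-i}/c_t) \,\bigm|\, F_{i,t},\, t \text{ arrived} \bigr].
\end{align*}
By the inductive hypothesis (being established for late pairs), $\Pr[F_{i,t}] = 1 - (0.5+\kappa) y_{i,t}$, which supplies the first factor in the claim.

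Next, I would lower bound the conditional expectation by writing $\mathbbm{1}[i \notin \FPt](1 - A_t^{-i}/c_t) = \mathbbm{1}[i \notin \FPt] - \mathbbm{1}[i \notin \FPt] \cdot A_t^{-i}/c_t$ and bounding the indicator in the subtracted term by $1$:
\begin{align*}
\E\bigl[\mathbbm{1}[i \notin \FPt] (1 - A_t^{-i}/c_t) \,\bigm|\, F_{i,t}, t\bigr] \;\ge\; (1 - q) - \E[A_t^{-i}/c_t \mid F_{i,t}, t],
\end{align*}
where $q := x_{i,t}/p_t = \Pr[i \in \FPt]$. Because $\alpha_{i,t} = 1$ and $F_{i,t}$ is given, $i$'s contribution to $A_t$ is exactly $\mathbbm{1}[i \in \FPt]$, and so $\E[A_t^{-i} \mid F_{i,t}, t] = \E[A_t \mid F_{i,t}, t] - q$. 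Substituting shows that the conditional expectation is at least $1 - q(1 - 1/c_t) - \E[A_t \mid F_{i,t}, t]/c_t$.

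The crux of the argument, and the step I expect to be the key insight, is to invoke LP Constraint~\eqref{eqn:PPSWConstraint}, which in the special case $q_{i,t'} = 1$ being analyzed reads $x_{i,t} \le p_t(1 - y_{i,t})$, equivalently $q \le 1 - y_{i,t}$. Since $(i,t)$ is late we have $y_{i,t} > \tau$, and therefore
\begin{align*}
1 - q(1 - 1/c_t) \;\ge\; 1 - q \;\ge\; y_{i,t} \;>\; \tau.
\end{align*}
Substituting into the previous paragraph's bound and multiplying by $\Pr[F_{i,t}] = 1 - (0.5+\kappa) y_{i,t}$ yields exactly the inequality claimed in \Cref{lem:rhoitlowerbound}.
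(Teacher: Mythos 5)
Your proof is correct and reaches the same endpoint as the paper's, but it takes a genuinely different route in the middle step of bounding the conditional expectation. Both proofs begin identically: condition on $F_{i,t}$, use $\alpha_{i,t}=1$ to rewrite the availability indicator as $\mathbbm{1}[i\notin\FPt]$, and supply the factor $\Pr[F_{i,t}]=1-(0.5+\kappa)y_{i,t}$ from the inductive hypothesis. The paper then factors out $\Pr[i\notin\FPt]=1-x_{i,t}/p_t$, passes to the distribution conditioned on $i\notin\FPt$, and removes that conditioning via the ratio bound $\E[A_t\mid\cdot, i\notin\FPt]\le\E[A_t\mid\cdot]/\Pr[i\notin\FPt\mid\cdot]$, invoking $\Pr[i\notin\FPt]\ge\tau$ twice. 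Your argument instead decomposes $A_t = A_t^{-i}+\mathbbm{1}[i\in\FPt]$, drops the indicator in the cross-term $\mathbbm{1}[i\notin\FPt]\,A_t^{-i}\le A_t^{-i}$, and uses linearity of expectation, arriving at the intermediate lower bound $\Pr[F_{i,t}]\cdot\bigl(1-q(1-1/c_t)-\E[A_t\mid t,F_{i,t}]/c_t\bigr)$ with $q=x_{i,t}/p_t$, which is weakened to the claimed form via $1-q(1-1/c_t)\ge 1-q\ge y_{i,t}>\tau$. Your version avoids reasoning about the conditional distribution given $i\notin\FPt$ and is arguably slightly more self-contained; both yield exactly the same lemma. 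One small wording nit: the independence of $F_{i,t}$ from $t$'s arrival is a property of the model (the history through round $t-1$ is independent of the fresh Bernoulli for round $t$), not something that ``comes from'' the LP's online constraint — the constraint is \emph{justified} by this independence, not the other way around.
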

\begin{proof}[Proof of \Cref{lem:rhoitlowerbound}.]
    Note first that we can expand 
    \begin{align*}
    \rho_{i,t} &=  \E \left[ \mathbbm{1}[i \text{ available and unallocated after \Cref{line:sample_defAt}}] \cdot \left( 1 - \frac{A_t}{c_t} \right)   \mid t \text{ arrived} \right] \\
    &= \Pr[F_{i,t} \mid t \text{ arrived}] \cdot \E \left[ \mathbbm{1}[i \text{ not allocated in \Cref{line:firstmatch}}] \cdot \left( 1 - \frac{A_t}{c_t} \right)   \mid t \text{ arrived}, F_{i,t} \right] \\
    &= \Pr[F_{i,t} ] \cdot \E \left[ \mathbbm{1}[i \text{ not allocated in \Cref{line:firstmatch}}] \cdot \left( 1 - \frac{A_t}{c_t} \right)   \mid t \text{ arrived}, F_{i,t} \right].
\end{align*}
Note that as the pair $(i,t)$ is late, we have $\alpha_{i,t} = 1$. Hence, conditioned on being free and the arrival of resource $t$, user $i$ is not allocated in \Cref{line:firstmatch} if and only if it is not contained in the set $\FPt$. This allows us to bound 
\begin{align*}
	\E \Bigg[ \mathbbm{1}[i \text{ not allocated in \Cref{line:firstmatch}}] &\cdot \left( 1 - \frac{A_t}{c_t} \right) \mid t \text{ arrived}, F_{i,t}  \Bigg] \\
	&= \E \left[ \mathbbm{1}[i \notin \textsf{FP}_t] \cdot \left( 1 - \frac{A_t}{c_t} \right) \mid t \text{ arrived}, F_{i,t} \right] \\
	&= \left( 1 - \frac{x_{i,t}}{p_t} \right) \cdot \E \left[ \left( 1 - \frac{A_t}{c_t} \right) \mid t \text{ arrived}, F_{i,t}, i \notin \textsf{FP}_t\right] \\
	&\ge \tau \cdot \E \left[ \left( 1 - \frac{A_t}{c_t} \right) \mid t \text{ arrived}, F_{i,t}, i \notin \textsf{FP}_t \right]. 
\end{align*}
To reason about the resulting expectation, we first apply the following bounding to remove the conditioning on $i \notin \textsf{FP}_t$:
\begin{align*}
    \E[A_t \mid t \text{ arrived}, F_{i,t}, i \notin \FPt] & = \frac{\E[A_t \cdot \mathds{1}_{i \notin \FPt} \mid t \text{ arrived}, F_{i,t}]}{\Pr[i \notin \FPt \mid t \text{ arrived}, F_{i,t}]} \\ & \le \frac{\E[A_t  \mid t \text{ arrived}, F_{i,t}]}{\Pr[i \notin \FPt \mid t \text{ arrived}, F_{i,t}]} \enspace .
\end{align*}
In addition, note that $\Pr[i \notin \FPt \mid t \text{ arrived}, F_{i,t}] = 1 - \frac{x_{i,t}}{p_t} \geq y_{i,t} \geq \tau$ as pair $(i,t)$ is late. Thus we get
\begin{align*}
	\E[A_t \mid t \text{ arrived}, F_{i,t}, i \notin \textsf{FP}_t] \le \frac{1}{\tau} \cdot \E[A_t \mid t \text{ arrived}, F_{i,t}] .
\end{align*}

By substitution and using \Cref{eqn:Fit}, we directly conclude 
\begin{align}
\rho_{i,t} & \ge \Pr[F_{i,t}] \cdot \tau \cdot \left( 1 - \frac{\E[A_t \mid t \text{ arrived}, F_{i,t}]}{c_t} \cdot \tau^{-1} \right) \label{eqn:rhoitlowerbound} \\
&= (1 - (0.5 + \kappa ) \cdot y_{i,t} ) \cdot \left(\tau -  \frac{\E[A_t \mid t \text{ arrived}, F_{i,t}]}{c_t} \right). && \nonumber \text{(via \Cref{eqn:Fit})} 
\end{align}
as claimed.
\end{proof}

In order to exploit the bound obtained in Lemma~\ref{lem:rhoitlowerbound}, we need to control $\E[A_t \mid t \text{ arrived}, F_{i,t}]$. In particular, our goal is to show that $\E[A_t \mid t \text{ arrived}, F_{i,t}]$ is bounded away from $c_t$ by a multiplicative constant smaller than $1$. If there was no conditioning on $F_{i,t}$, it is easy to check that $$\E[A_t \mid t \text{ arrived}] = \sum_j \Pr[F_{j,t}] \cdot \Pr[j \in \textsf{FP}_t] \cdot \alpha_{j,t} \le (0.5 + \kappa) \cdot c_t.$$ The conditioning could however lead us into trouble in the following way: When facing the conditioning, we end up with the expression $$ \E[A_t \mid t \text{ arrived}, F_{i,t}] = \sum_j  \Pr[F_{j,t} \mid F_{i,t}] \cdot \Pr[j \in \textsf{FP}_t] \cdot \alpha_{j,t}  . $$ If $F_{i,t}$ implies $F_{j,t}$ for every $j \neq i$, and $\alpha_{j,t} \approx 1$ for every $j \neq i$, then $${\E[A_t \mid t \text{ arrived}, F_{i,t}]} \approx \sum_i 1 \cdot \frac{x_{i,t}}{p_t} \cdot 1$$ where the right-hand side could equal $c_t$. This, in particular, would make the second proposal in our algorithm completely useless as we would reduce the marginal probabilities for the pivotal sampling in \Cref{line:secondproposal} to (almost) zero. The most crucial part of our analysis is to demonstrate that this cannot happen, by bounding the possible positive correlation introduced between offline users. 

\begin{restatable}{lem}{corboundcorrelationunrestricted}\label{corollary:bound_correlation_unrestricted} For any distinct users $i$ and $j$, and $\Deltac :=  \left( 1 + \frac{(0.5+\kappa)^2}{0.5-\kappa} \right) \cdot \left( \frac{0.5 + \kappa}{0.5 - \kappa} \right)^2 $, for any $t$ we have $$\Pr[F_{i,t} \wedge F_{j,t}] \le \Deltac  \cdot \Pr[F_{i,t}] \cdot \Pr[F_{j,t}].$$
\end{restatable}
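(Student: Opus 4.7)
The plan is to derive this unrestricted statement by bootstrapping from a sharper, restricted correlation bound (presumably stated as an earlier lemma in the paper, as suggested by the discussion in \Cref{sec:techniques} and the label name). That restricted bound takes the form
\[
\Pr[F_{i,t} \wedge F_{j,t}] \le f(y_{i,t}) \cdot \Pr[F_{i,t}] \cdot \Pr[F_{j,t}], \qquad f(z) := 1 + z \cdot \frac{(0.5+\kappa)^2}{1 - z(0.5+\kappa)},
\]
valid under some additional regime restriction on the pairs $(i,t), (j,t)$ (most naturally involving the ranges of $y_{i,t}, y_{j,t}$, so that the clean formula $\Pr[F_{i,\cdot}] = 1 - (0.5+\kappa) y_{i,\cdot}$ from \Cref{observation:early_edges} and \Cref{lemma:approx_late_edges}(i) remains operative throughout the induction). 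The extra factor $\left(\frac{0.5+\kappa}{0.5-\kappa}\right)^2 = 1/\tau^2$ in $\Delta_\kappa$ is precisely the price paid for removing this restriction.

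Concretely, I would use the monotonicity of freeness (an allocated user never becomes free again) to reduce to the largest time $s \le t$ at which the restriction still holds for both pairs: $\Pr[F_{i,t} \wedge F_{j,t}] \le \Pr[F_{i,s} \wedge F_{j,s}] \le f(y_{i,s}) \cdot \Pr[F_{i,s}] \cdot \Pr[F_{j,s}]$. Two elementary estimates then finish the derivation. First, iterating \eqref{eqn:PPSWConstraint} yields $y_{i,s} \le 1$, and since $f$ is increasing on $[0, 1/(0.5+\kappa))$, we get $f(y_{i,s}) \le f(1) = 1 + (0.5+\kappa)^2/(0.5-\kappa)$, accounting for the first factor of $\Delta_\kappa$. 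Second, the ratios $\Pr[F_{i,s}]/\Pr[F_{i,t}]$ and $\Pr[F_{j,s}]/\Pr[F_{j,t}]$ are each bounded by $(0.5+\kappa)/(0.5-\kappa)$, via the explicit formula $\Pr[F_{i,\cdot}] = 1 - (0.5+\kappa) y_{i,\cdot}$ evaluated at the extremes (numerator bounded above near the restriction boundary, denominator bounded below by $0.5 - \kappa$ in the extreme $y_{i,t} \le 1$); their product supplies the missing factor $((0.5+\kappa)/(0.5-\kappa))^2$.

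The main obstacle is not this reduction but the restricted bound itself. Its inductive step across a round $t$ has to control two sources of positive correlation in the ratio $\Pr[F_{i,t} \wedge F_{j,t}]/(\Pr[F_{i,t}] \cdot \Pr[F_{j,t}])$: first proposals, which are benign because pivotal sampling satisfies the negative cylinder dependence property \ref{pivot_sampling_negative_dependence} and the subsequent $\mathrm{Ber}(\alpha_{i,t})$ coin flips are mutually independent; and second proposals, which are the genuinely delicate case, since their marginals $(1 - A_t/c_t) \cdot x_{i,t}/p_t$ couple users through the random variable $A_t$, which is itself correlated with the very freeness events being tracked. The downsampling factor $\beta_{i,t}$ in \Cref{line:secondmatch} is calibrated precisely so that the per-round multiplicative growth of this ratio telescopes into $f(y_{i,t})$ rather than blowing up to infinity.
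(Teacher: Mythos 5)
Your high-level strategy — bootstrap from the restricted bound $\Pr[F_{i,s}\wedge F_{j,s}]\le f(y_{i,s})\Pr[F_{i,s}]\Pr[F_{j,s}]$ via monotonicity of freeness and then absorb the loss into the extra factor $\left(\tfrac{0.5+\kappa}{0.5-\kappa}\right)^2$ — is the right shape, but the reduction as you describe it has a genuine gap in the ratio estimate. You propose backing up to a single \emph{common} time $s$, the latest time at which the restriction holds for both users, and then bounding $\Pr[F_{i,s}]/\Pr[F_{i,t}]$ and $\Pr[F_{j,s}]/\Pr[F_{j,t}]$ each by $\tfrac{0.5+\kappa}{0.5-\kappa}$. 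This fails: if user $j$ crosses the threshold $\tau$ much later than user $i$, then at the common time $s$ (dictated by $i$) we only know $y_{j,s}\le\tau$ with no useful lower bound on $y_{j,s}$, so $\Pr[F_{j,s}]$ can be as large as $1$ while $\Pr[F_{j,t}]$ can be as small as $0.5-\kappa$, giving ratio $\approx 1/(0.5-\kappa)\approx 2.05$, not $\tfrac{0.5+\kappa}{0.5-\kappa}\approx 1.05$. The resulting constant would be far too weak to close the rest of the analysis for the paper's value of $\kappa$.

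The paper avoids this by working with \emph{per-user} threshold times $t^i,t^j$ (each user's own crossing of $\tau$) rather than a common $s$. It applies the restricted Lemma~\ref{lem:corrbound} only at $\min(t^i,t^j)$, where both users are still early, and then pushes one factor forward from $\min(t^i,t^j)$ to $\max(t^i,t^j)$ by a conditional-independence telescoping identity: conditioned on both users being free at $\min(t^i,t^j)$, the probability that the later-crossing user survives each intermediate (early) round factorizes as $\prod (1 - x_{\cdot}\alpha_{\cdot})$, independently of the other user's freeness, because in the early regime only first proposals occur and those are driven by pivotal sampling and coin flips independent of the history determining the other user's availability. This yields $\Pr[F_{i,t^i}\wedge F_{j,t^j}]\le\gamma_\kappa\Pr[F_{i,t^i}]\Pr[F_{j,t^j}]$ with each freeness probability evaluated at that user's \emph{own} threshold, which is exactly what allows the bound $\Pr[F_{\cdot,t^\cdot}]\le\Pr[F_{\cdot,t}]+2\kappa$ (using $y_{\cdot,t^\cdot}>\tau$ so the residual mass is at most $1-\tau$). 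Your write-up omits this telescoping step entirely, and without it the argument does not produce the stated constant.
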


The proof of \Cref{corollary:bound_correlation_unrestricted} is deferred to \Cref{subsection:correlation_bound}; in the remainder of this section we demonstrate why it implies our bound on the approximation ratio. We note that for $\kappa = \approxconstant$ (the value we choose in \Cref{allocationalgexact}), we have $\Deltac \approx 1.68$. As a concrete example, note that if $(i,t)$ and $(j,t)$ are both late with $\Pr[F_{i,t}] \approx \Pr[F_{j,t}] \approx \nicefrac{1}{2}$, this bound quantifies that we avoid perfect positive correlation between $F_{i,t}$ and $F_{j,t}$. 

Having \Cref{corollary:bound_correlation_unrestricted}, we can prove the bound on $\E[A_t \mid t \text{ arrived}, F_{i,t}]$ which we state formally in  \Cref{lemma:expected_At_with_condition} via 
\begin{align}
	\E[A_t \mid t \text{ arrived}, F_{i,t}] &= \sum_j  \Pr[F_{j,t} \mid F_{i,t}] \cdot \Pr[j \in \textsf{FP}_t] \cdot \alpha_{j,t} \nonumber \\
 &= \frac{x_{i,t}}{p_t} + \sum_{j \neq i} \frac{\Pr[F_{i,t} \wedge F_{j,t}]}{\Pr[F_{i,t}]} \cdot \frac{x_{j,t}}{p_t} \cdot \alpha_{j,t} \label{eqn:AtconditionedtarrivedFit} \\
	&\le \frac{x_{i,t}}{p_t} + \sum_{j \neq i} \Deltac \cdot \Pr[F_{j,t}] \cdot \frac{x_{j,t}}{p_t} \cdot \alpha_{j,t} \nonumber \\
	&\le \frac{x_{i,t}}{p_t} + \Deltac \cdot (0.5 + \kappa) \cdot c_t  \nonumber .
 \end{align}
The last inequality uses the fact that $\Pr[F_{j,t}] \cdot \alpha_{j,t} \le 0.5 +\kappa $ and upper bounds $ \sum_{j \neq i} \frac{x_{j,t}}{p_t}$ by $c_t$. By the online constraint \eqref{eqn:PPSWConstraint} and the property that $y_{i,t} > \tau$ for late pairs $(i,t)$, we have that $\frac{x_{i,t}}{p_t} \leq 1 - \tau$. Hence, we can conclude that
 \begin{align}\label{eqn:bound_conditional_At}
	\E[A_t \mid t \text{ arrived}, F_{i,t}] & \le 1 - \tau  + \Deltac \cdot (0.5 + \kappa ) \cdot c_t \\
	&\le \left( 1 - \tau  + \Deltac \cdot (0.5 + \kappa)  \right) \cdot c_t . \nonumber
\end{align}

Although this appears quite loose if $c_t$ is larger than 1, in \Cref{sec:bound_depending_kt} we show that a fine-grained bound in terms of $\min_t c_t$ only results in limited improvements in the analysis. \Cref{eqn:bound_conditional_At} implies the following corollary of our correlation bound. 

\begin{cor}\label{lemma:expected_At_with_condition}
    Let $\Deltac :=  \left( 1 + \frac{(0.5+\kappa)^2}{0.5-\kappa} \right) \cdot \left( \frac{0.5 + \kappa}{0.5 - \kappa} \right)^2$. For any late $(i,t)$ we have $$ \E[A_t \mid t \textup{ arrived}, F_{i,t}] \le \left( 1 - \tau  + \Deltac \cdot (0.5 + \kappa) \right) \cdot c_t . $$
\end{cor}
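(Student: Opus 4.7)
The proof essentially follows the line of reasoning laid out immediately before the statement, so the plan is to make those inequalities precise. First I would expand the expectation as a sum over offline users, separating the contribution of $j=i$ from the rest. By property \ref{pivot_sampling_marginals} of pivotal sampling and the independence of the $\mathrm{Ber}(\alpha_{j,t})$ coin flips from the past availability history,
\[
\E[A_t \mid t \text{ arrived}, F_{i,t}] = \sum_{j} \Pr[F_{j,t} \mid F_{i,t}] \cdot \frac{x_{j,t}}{p_t} \cdot \alpha_{j,t}.
\]
The $j=i$ contribution is exactly $\frac{x_{i,t}}{p_t}$, using that $\alpha_{i,t}=1$ for the late pair $(i,t)$ and that $\Pr[F_{i,t}\mid F_{i,t}]=1$.

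Next, I would handle the $j \neq i$ terms by applying Bayes' rule and then invoking \Cref{corollary:bound_correlation_unrestricted}: $\Pr[F_{j,t}\mid F_{i,t}] = \Pr[F_{i,t}\wedge F_{j,t}]/\Pr[F_{i,t}] \le \Deltac \cdot \Pr[F_{j,t}]$. The second key ingredient is the uniform bound $\Pr[F_{j,t}]\cdot \alpha_{j,t} \le 0.5+\kappa$, which I would verify by casing on whether $(j,t)$ is early or late. For early $(j,t)$ this is an equality by \Cref{observation:early_edges} together with the definition of $\alpha_{j,t}$. For late $(j,t)$, we have $\alpha_{j,t}=1$, and the inductive hypothesis (i.e.\ \Cref{eqn:Fit} applied to user $j$ at time $t$) gives $\Pr[F_{j,t}] = 1-(0.5+\kappa)y_{j,t} \le 1-(0.5+\kappa)\tau = 0.5+\kappa$.

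Putting these together, the $j\neq i$ terms sum to at most $\Deltac(0.5+\kappa)\sum_{j\neq i}\frac{x_{j,t}}{p_t} \le \Deltac(0.5+\kappa)c_t$ by capacity constraint \eqref{eqn:userptktconstraint}. For the $j=i$ term, constraint \eqref{eqn:PPSWConstraint} combined with the fact that $(i,t)$ is late (so $y_{i,t}>\tau$) gives $\frac{x_{i,t}}{p_t}\le 1-y_{i,t}\le 1-\tau$. Adding the two contributions yields $1-\tau + \Deltac(0.5+\kappa)c_t$, which is at most $(1-\tau+\Deltac(0.5+\kappa))\cdot c_t$ since $c_t\ge 1$. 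There is no real obstacle at this stage: all the conceptual difficulty is hidden in \Cref{corollary:bound_correlation_unrestricted}, and the corollary itself is a bookkeeping step that threads the correlation bound through the natural first-proposal expectation computation.
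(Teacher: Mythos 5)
Your proof is correct and matches the paper's argument step for step: the same expansion of $\E[A_t \mid t \text{ arrived}, F_{i,t}]$, the same isolation of the $j=i$ term, the same application of \Cref{corollary:bound_correlation_unrestricted} via Bayes' rule, and the same capacity and online-constraint bounds. The only difference is that you spell out the early/late case analysis behind $\Pr[F_{j,t}]\cdot\alpha_{j,t}\le 0.5+\kappa$, which the paper asserts without elaboration.
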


We are now able to conclude the proof of Lemma~\ref{lemma:approx_late_edges} (ii), as follows. 

\begin{proof}[Proof of \Cref{lemma:approx_late_edges} (ii).]
By \Cref{obs:doneifbetaitatmost1}, it suffices to show that $\rho_{i,t} \ge (0.5 + \kappa)y_{i,t} - (0.5 - \kappa).$ Combining the bound of \Cref{lem:rhoitlowerbound} with \Cref{lemma:expected_At_with_condition} implies 
\begin{align}\label{eqn:rho_lower_bound}
\rho_{i,t} \ge (1 - (0.5 + \kappa) \cdot y_{i,t}) \cdot \left( \tau - \left( 1 - \tau + \Deltac \cdot(0.5 + \kappa) \right) \right) .
\end{align}

For convenience let $g(\kappa) :=  2\tau - 1 - \Deltac \cdot (0.5 + \kappa) $, recalling that $\tau$ is a function of $\kappa$. Then, it suffices to show $(1 - (0.5 + \kappa) \cdot y_{i,t}) \cdot g( \kappa ) \ge (0.5 + \kappa) y_{i,t} - (0.5 - \kappa)$, or equivalently $$g(\kappa) + 0.5 - \kappa \ge (0.5 + \kappa + (0.5 + \kappa) g(\kappa)) \cdot y_{i,t}.$$ For $\kappa = \approxconstant$, we can confirm that the coefficient of $y_{i,t}$ on the right-hand side is positive, and hence it suffices to show this inequality when $y_{i,t} = 1$. This reduces to $$g ( \kappa) \ge \frac{2 \kappa }{0.5 - \kappa}$$ which is readily confirmed by direct computation at $\kappa = \approxconstant$. 
\end{proof}

As a side remark, using \Cref{eqn:rho_lower_bound}, we can observe that for our choice of $\kappa = \approxconstant$, the expectation $\rho_{i,t}$ is bounded away from zero by a constant. In particular, for $\kappa = 0.0115$, we have that $\rho_{i,t} \geq 0.02389$. This can be used to estimate $\rho_{i,t}$ via sampling with small multiplicative error, as we formalize in \Cref{app:sample_based_algo}.

In order to finalize our proof of \Cref{lemma:approx_late_edges} (ii), it only remains to prove our bound on the correlation introduced between offline users, which we do in the following section.

\subsection{Bounding the Correlation --- Proof of \Cref{corollary:bound_correlation_unrestricted}} \label{subsection:correlation_bound}

What remains to conclude the proof of our main \Cref{theorem:main_theorem} is to control the correlation of two users $i$ and $j$ to be free simultaneously, i.e., the bound from \Cref{corollary:bound_correlation_unrestricted}. To this end, we first state and prove \Cref{lem:corrbound} which uses the assumption that $y_{i,t-1}$ and $y_{j,t-1}$ are at most $\tau$.

\begin{lemma}\label{lem:corrbound}
	Define $ \gammac := 1 + \frac{(0.5+\kappa)^2}{0.5-\kappa}$. For any distinct users $i$ and $j$, and any time $t$ such that $y_{i,t-1}, y_{j,t-1} \le \tau$, we have $$\Pr[F_{i,t} \wedge F_{j,t}] \le \gammac \cdot \Pr[F_{i,t}] \cdot \Pr[F_{j,t}].$$
\end{lemma}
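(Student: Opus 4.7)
The plan is to prove by induction on $t$ a stronger statement: for all $t$ satisfying the hypothesis,
\[
\Pr[F_{i,t} \wedge F_{j,t}] \le f(y_{i,t}) \cdot \Pr[F_{i,t}] \cdot \Pr[F_{j,t}],
\]
where $f(z) := 1 + z(0.5+\kappa)^2 / (1 - z(0.5+\kappa))$. This implies the lemma because $f$ is increasing on $[0,1]$ with $f(1) = \gammac$, and $y_{i,t} \le 1$ always by the footnote following Constraint~\eqref{eqn:PPSWConstraint}. The base case $t = 1$ is immediate since all three quantities equal $1$.

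For the inductive step, the hypothesis $y_{i,t-1}, y_{j,t-1} \le \tau$ forces every pair $(i,s), (j,s)$ with $s \le t-1$ to be early. For these, $\alpha_{i,s} = (0.5+\kappa)/\Pr[F_{i,s}]$, yielding the recursion $\Pr[F_{i,s+1}] = \Pr[F_{i,s}](1 - A_s)$ with $A_s := (0.5+\kappa)\, x_{i,s}/\Pr[F_{i,s}]$ (and $B_s$ defined analogously for $j$). Combining the mutual independence of $\{s \text{ arrives}\}$, $\mathsf{FP}_s$, and the $\alpha$-coin tosses from the history with the negative cylinder dependence of pivotal sampling (Property~\ref{pivot_sampling_negative_dependence}) applied to $\{i,j\} \subseteq \mathsf{FP}_s$, and then inclusion--exclusion on $\{(i,s) \in \calA_1\}$ and $\{(j,s) \in \calA_1\}$, would give the joint recursion
\[
\Pr[F_{i,s+1} \wedge F_{j,s+1}] \le \Pr[F_{i,s} \wedge F_{j,s}] \cdot \left(1 - A_s - B_s + A_s B_s / p_s\right).
\]

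The analytic crux is a clean telescoping identity: a direct calculation shows
\[
f(y+x) - f(y) = \frac{(0.5+\kappa)^2 \cdot x}{(1 - (0.5+\kappa) y)(1 - (0.5+\kappa)(y+x))}.
\]
Setting $\phi(y) := f(y)(1-(0.5+\kappa)y) = 1 - (0.25 - \kappa^2) y$ and substituting the inductive hypothesis together with both recursions into the desired invariant at time $t+1$, the inductive step reduces to verifying the single inequality
\[
\phi(y_{i,t}) \cdot \frac{x_{j,t}(1-p_t)}{p_t} \le \Pr[F_{j,t+1}].
\]

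The main obstacle will be verifying this remaining inequality, but it is elementary. The LP constraint $x_{j,t} \le p_t(1-y_{j,t})$ together with $\phi(y_{i,t}) \le 1$ bounds the left side above by $(1-y_{j,t})(1-p_t)$. For the right side, the same constraint gives $\Pr[F_{j,t+1}] \ge 1 - (0.5+\kappa) p_t - (0.5+\kappa) y_{j,t}(1-p_t)$, and expansion shows this lower bound exceeds $(1-y_{j,t})(1-p_t)$ by $(0.5-\kappa)[p_t + y_{j,t}(1-p_t)] \ge 0$, closing the induction.
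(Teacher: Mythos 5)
Your proof is correct and follows essentially the same route as the paper: you strengthen the statement to $\Pr[F_{i,t}\wedge F_{j,t}]\le f(y_{i,t})\Pr[F_{i,t}]\Pr[F_{j,t}]$ and induct, using negative cylinder dependence and inclusion--exclusion to get the joint-availability recursion, then the single-user recursion to rewrite everything in terms of $\Pr[F_{\cdot,t+1}]$. The only difference is organizational: where the paper isolates the final algebra into \Cref{fact:bound_alpha_via_online} (the bound $x_{j,t}\alpha_{j,t}(\nicefrac{1}{p_t}-1)\le(0.5+\kappa)(1-x_{j,t}\alpha_{j,t})$, which unwinds to exactly $x_{j,t}\nicefrac{(1-p_t)}{p_t}\le\Pr[F_{j,t+1}]$) and \Cref{claim:inductive_step} (which unwinds to exactly $\phi(y_{i,t})\le 1$), you combine these into the single clean reduction $\phi(y_{i,t})\cdot x_{j,t}\nicefrac{(1-p_t)}{p_t}\le\Pr[F_{j,t+1}]$ after making the telescoping identity for $f$ and the observation $\phi(y)=1-(0.25-\kappa^2)y$ explicit. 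Both verifications then rest on the same two inputs, namely Constraint~\eqref{eqn:PPSWConstraint} applied to user $j$ and the elementary fact that $\phi\le 1$, so this is a presentational streamlining rather than a different argument. One small nit: you switch between the indexing ``prove for $t$ assuming $y_{i,t-1},y_{j,t-1}\le\tau$'' at the start of the inductive step and ``prove for $t+1$'' at the end (the displayed reduction involves $y_{i,t}$ and $\Pr[F_{j,t+1}]$), but the argument is clearly the latter and the mismatch is cosmetic.
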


To prove this lemma we consider the function $$f(z) := 1 + z \cdot \left( \frac{(0.5+\kappa)^2}{1 - z \cdot (0.5 + \kappa)} \right),$$ which depends on our choice of $\kappa$. Note that $\gammac = f(1)$. For this function, we can prove the following claim.

\begin{restatable}{claim}{claimforcorrbound}\label{claim:claim_for_corrbound}
	For any distinct users $i$ and $j$, and any time $t$ such that $y_{i,t-1}, y_{j,t-1} \le \tau$, we have $$\Pr[F_{i,t} \wedge F_{j,t}] \le f(y_{i,t}) \cdot \Pr[F_{i,t}] \cdot \Pr[F_{j,t}],$$ where $f(z) := 1 + z \cdot \left( \frac{(0.5+\kappa)^2}{1 - z \cdot (0.5 + \kappa)} \right)$. 
\end{restatable}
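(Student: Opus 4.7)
The plan is induction on $t$. Since we assume $y_{i,s}, y_{j,s} \le \tau$ for every $s < t$, both pairs $(i,s)$ and $(j,s)$ are early, so $\alpha_{i,s} = (0.5+\kappa)/\Pr[F_{i,s}]$ and symmetrically for $j$, and no second proposal contributes to either user's availability. Write $a := 0.5 + \kappa$, $P_{k,s} := \Pr[F_{k,s}]$, and $P_s := \Pr[F_{i,s} \wedge F_{j,s}]$. The key reformulation is to track the (nonnegative) covariance $W_s := P_s - P_{i,s} P_{j,s}$: using $P_{k,s} = 1 - a y_{k,s}$ from Observation~\ref{observation:early_edges}, the target bound $P_s \le f(y_{i,s}) P_{i,s} P_{j,s}$ is equivalent to the cleaner form $W_s \le a^2 y_{i,s} P_{j,s}$.

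The base case $s = 1$ is trivial. For the inductive step $t \to t+1$, let $M_k$ denote the event ``$k$ is allocated at round $t$.'' Since the round-$t$ randomness is independent of availabilities at time $t$, and since both pairs are early (so only the first proposal matters), one computes $\Pr[M_k \mid F_{i,t} \wedge F_{j,t}] = x_{k,t} \alpha_{k,t} = a x_{k,t}/P_{k,t} =: \delta_k$; and the negative cylinder dependence of pivotal sampling (Property~\ref{pivot_sampling_negative_dependence}) bounds $\Pr[M_i \wedge M_j \mid F_{i,t} \wedge F_{j,t}] \le \delta_i \delta_j / p_t$. Inclusion-exclusion then yields
\[
P_{t+1} \;\le\; P_t \bigl(1 - \delta_i - \delta_j + \tfrac{\delta_i \delta_j}{p_t}\bigr), \qquad P_{k,t+1} = P_{k,t}(1-\delta_k),
\]
and expansion (using $(1-\delta_i)(1-\delta_j) = 1 - \delta_i - \delta_j + \delta_i \delta_j$) gives the clean recursion
\[
W_{t+1} \;\le\; W_t \bigl(1 - \delta_i - \delta_j + \tfrac{\delta_i \delta_j}{p_t}\bigr) \;+\; P_{i,t} P_{j,t} \, \delta_i \delta_j \bigl(\tfrac{1}{p_t} - 1\bigr).
\]

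Substituting the inductive hypothesis $W_t \le a^2 y_{i,t} P_{j,t}$ and comparing with the target $a^2 y_{i,t+1} P_{j,t+1} = a^2(y_{i,t} + x_{i,t}) P_{j,t}(1 - \delta_j)$, the nonnegative term $a^2 y_{i,t} P_{j,t}\, \delta_i(1 - \delta_j/p_t)$ appearing on the right-hand side can be dropped, and after dividing by $a x_{i,t}$ (using $\delta_i P_{i,t} = a x_{i,t}$) the problem reduces to showing
\[
\frac{x_{j,t}}{p_t} \cdot \bigl[1 - p_t(1-a)\bigr] \;\le\; P_{j,t}.
\]
The LP online constraint~\eqref{eqn:PPSWConstraint} gives $x_{j,t}/p_t \le 1 - y_{j,t}$, and substituting $P_{j,t} = 1 - a y_{j,t}$ reduces this to the elementary inequality $(1-a)\bigl[p_t + y_{j,t}(1-p_t)\bigr] \ge 0$, which holds.

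The main obstacle is the $1/p_t$ term in the update for $P_{t+1}$: it is exactly this ``Bernoulli slack'' that produces positive covariance between availabilities. Without it, the negative cylinder dependence of pivotal sampling would yield a clean negative-correlation statement; its presence forces the subtle functional form of $f$. Framing the inductive bound additively as $W_s \le a^2 y_{i,s} P_{j,s}$ (rather than multiplicatively in $f(y_{i,s})$) makes the step reduce to a linear inequality in which the LP constraint $x_{j,t}/p_t \le 1 - y_{j,t}$ closes the induction without requiring any information about $y_{i,t}$ on the right-hand side.
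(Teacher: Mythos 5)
Your proof is correct, and it takes a genuinely cleaner route than the paper's. The paper runs the induction directly on the multiplicative quantity $\Pr[F_{i,t}\wedge F_{j,t}]/(\Pr[F_{i,t}]\Pr[F_{j,t}])$, which forces a nontrivial algebraic lemma (Claim~\ref{claim:inductive_step}) showing $f(y)\bigl(1+a^2x/(1-a(x+y))\bigr)\le f(y+x)$ to close the inductive step. You instead observe that the multiplicative bound $P_s\le f(y_{i,s})P_{i,s}P_{j,s}$ is, after the algebraic identity $f(y_{i,s})P_{i,s}=P_{i,s}+a^2 y_{i,s}$, \emph{equivalent} to the additive covariance bound $W_s\le a^2 y_{i,s}P_{j,s}$. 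In this coordinate system the update is a simple affine recursion $W_{t+1}\le W_t\bigl(1-\delta_i-\delta_j+\tfrac{\delta_i\delta_j}{p_t}\bigr)+P_{i,t}P_{j,t}\delta_i\delta_j(\tfrac{1}{p_t}-1)$, and closing the induction reduces, after dropping a nonnegative term and dividing, to a single linear inequality $\tfrac{x_{j,t}}{p_t}(1-p_t(1-a))\le P_{j,t}$ that follows immediately from the LP online constraint. The ingredients are ultimately the same as the paper's --- negative cylinder dependence of pivotal sampling, the observation $\Pr[F_{k,t+1}]=\Pr[F_{k,t}](1-\delta_k)$, and the LP constraint~\eqref{eqn:PPSWConstraint} (your final inequality is exactly Fact~\ref{fact:bound_alpha_via_online} applied to $j$, written out) --- but your additive reformulation turns the $f$-based Claim~\ref{claim:inductive_step} into a trivial cancellation, and also makes transparent that the closing step depends only on user $j$'s variables, with the dependence on $y_{i,t}$ entering purely through the factor $ax_{i,t}$ that gets divided out. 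One minor point worth stating explicitly (which the paper also glosses over): when $y_{i,t}=\tau$ exactly we have $\alpha_{i,t}=1$, so $i$ can in principle appear in $\SPt$; the second proposal still contributes nothing because $\beta_{i,t}=0$ at this boundary, so your claim that only the first proposal matters for early pairs holds. You also use implicitly that the multiplier $1-\delta_i-\delta_j+\tfrac{\delta_i\delta_j}{p_t}\ge(1-\delta_i)(1-\delta_j)\ge 0$ when substituting the inductive hypothesis, which is easy but should be noted.
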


In order to prove Lemma~\ref{lem:corrbound} from Claim~\ref{claim:claim_for_corrbound}, it suffices to note that $f$ is a monotone increasing function in $[0,1]$, and hence, $f(z) \le f(1) = \gammac$ for all $z \in [0,1]$. 

\begin{proof}[Proof of Claim~\ref{claim:claim_for_corrbound}.]
	We give a proof by induction. As $f(0) = 1$ and all users are available initially, the base case is clear. Assuming the claim is true for fixed $t$, we will prove it for $t+1$ with the assumption $y_{i,t}, y_{j,t} \le \tau$. 

    \paragraph{Proof outline for the inductive step.}
    Our proof proceeds with the following steps:
    
    \begin{enumerate}[label=\text{(S\arabic*)}]
        \item We find an upper bound for the probability that both $i$ and $j$ are not assigned to $t$ via a first proposal conditioned on being free. \label{induction:step_2}
        \item We compute $\Pr[F_{i,t+1}]/\Pr[F_{i,t}] $, in order to apply the inductive hypothesis. \label{induction:step_3}
        \item We apply the induction hypothesis, and use Step~\ref{induction:step_3} to write our bound in terms of $\Pr[F_{i,t+1}] $ and $\Pr[F_{j,t+1}] $. \label{induction:step_4}
        \item We argue that we can upper bound the coefficient in front of $\Pr[F_{i,t+1}] \cdot \Pr[F_{j,t+1}] $ with $f(y_{i,t+1})$. \label{induction:step_5}
    \end{enumerate}

    \paragraph{Step \ref{induction:step_2}: Bounding the probability of not assigning both users via a first proposal.}
    
    As $y_{i,t}, y_{j,t} \le \tau$, they can only be matched as first proposals; hence the probability both $i$ and $j$ are free at time $t+1$ is
    \begin{align}\label{equation:induction_step}
		\Pr[F_{i,t+1} \wedge F_{j,t+1}] & = \Pr[F_{i,t} \wedge F_{j,t}] \cdot  \underbrace{\Pr[(i,t) \notin \mathcal{A}_1 \wedge (j,t) \notin \mathcal{A}_1 \mid F_{i,t} \wedge F_{j,t}]}_{(\star)}  .
    \end{align}
    The first term on the right-hand side of \Cref{equation:induction_step} will later be bounded via the induction hypothesis. The second term $(\star) := \Pr[(i,t) \notin \mathcal{A}_1 \wedge (j,t)  \notin \mathcal{A}_1 \mid F_{i,t} \wedge F_{j,t}]$ can be equivalently written as 
	\begin{align}\label{eqn:induction_prob_i_and_j_alloc}
        (\star) =
		1 - \Pr[(i,t) \in \mathcal{A}_1 \mid F_{i,t} \wedge F_{j,t}] - \Pr[(j,t) &\in \mathcal{A}_1 \mid F_{i,t} \wedge F_{j,t}]  \\
  &+ \Pr[(i,t) \in \mathcal{A}_1 \wedge (j,t) \in \mathcal{A}_1 \mid F_{i,t} \wedge F_{j,t}]. \nonumber
	\end{align}
    Now, observe that $\Pr[(i,t) \in \mathcal{A}_1 \mid F_{i,t} \wedge F_{j,t}] = p_t \cdot \Pr[i \in \textsf{FP}_t] \cdot \alpha_{i,t} = x_{i,t} \cdot \alpha_{i,t}$. The analogous equality holds for $j$. Hence, it remains to get a suitable bound on the joint probability that both users $i$ and $j$ are assigned via a first proposal given they were both free. To this end, we make use of the negative cylinder dependence in pivotal sampling, observing
    \begin{align*}
        \Pr[(i,t) \in \mathcal{A}_1 &\wedge (j,t) \in \mathcal{A}_1 \mid F_{i,t} \wedge F_{j,t}] \\
        & = p_t \cdot \Pr[i \in \textsf{FP}_t \wedge j \in \textsf{FP}_t] \cdot \alpha_{i,t} \cdot \alpha_{j,t} \\ 
        & \le  p_t \cdot \Pr[i \in \textsf{FP}_t] \cdot \Pr[j \in \textsf{FP}_t] \cdot \alpha_{i,t} \cdot \alpha_{j,t} && \text{(Pivotal Sampling Property \ref{pivot_sampling_negative_dependence})}\\ 
        & = p_t \cdot \frac{x_{i,t} \cdot x_{j,t}}{p_t^2} \cdot \alpha_{i,t} \cdot \alpha_{j,t} \\
        & = \frac{x_{i,t} \cdot x_{j,t}}{p_t} \cdot \alpha_{i,t} \cdot \alpha_{j,t}.
    \end{align*}	
    
    Combining all of the above, we can bound the conditional probability that neither $i$ nor $j$ is allocated to $t$ via a first proposal. In other words, the left-hand side of \Cref{eqn:induction_prob_i_and_j_alloc} is at most 
	\begin{align}\label{eqn:induction_step_i_bound}
		\Pr[(i,t) \notin \mathcal{A}_1 \wedge (j,t) \notin \mathcal{A}_1 \mid F_{i,t} \wedge F_{j,t}] & \le 1 - x_{i,t} \alpha_{i,t} - x_{j,t} \alpha_{j,t} + \frac{1}{p_t} \cdot x_{i,t} \alpha_{i,t} x_{j,t}\alpha_{j,t} \\
		&= (1-x_{i,t} \alpha_{i,t})(1-x_{j,t}\alpha_{j,t}) + \left(\frac{1}{p_t} - 1 \right) x_{i,t} \alpha_{i,t} x_{j,t} \alpha_{j,t} . \nonumber
	\end{align}

    \paragraph{Step \ref{induction:step_3}: Comparing $\Pr[F_{i,t+1}]$ to $\Pr[F_{i,t}] $.}
    To prepare for our use of the inductive hypothesis, we compute $\Pr[F_{i,t+1}]/\Pr[F_{i,t}]$ via a straightforward calculation: \begin{align}\label{equation:free_at_tplusone_via_free_at_t}
		\Pr[F_{i,t+1}] &= 1 - (0.5+\kappa) \cdot y_{i,t+1} 
		 = \Pr[F_{i,t}] \cdot \frac{1 - (0.5 + \kappa) \cdot y_{i,t+1}}{1 - (0.5+\kappa) \cdot y_{i,t}}
		= \Pr[F_{i,t}] \cdot \left( 1 - x_{i,t} \cdot \alpha_{i,t} \right).
	\end{align}
	In the final line we used that $(i,t)$ is early. For $j$, we analogously have 
    $$\Pr[F_{j,t+1}] = \Pr[F_{j,t}] \cdot \left( 1 - x_{j,t} \cdot \alpha_{j,t} \right).$$ 
 
    \paragraph{Step \ref{induction:step_4}: Applying the induction hypothesis.}
    Applying the induction hypothesis to \Cref{equation:induction_step}, plugging in Inequality~\eqref{eqn:induction_step_i_bound} and using \Cref{equation:free_at_tplusone_via_free_at_t}, we can bound 
	\begin{align}
		&\Pr[F_{i,t+1} \wedge F_{j,t+1}] \nonumber \\ & = \Pr[F_{i,t} \wedge F_{j,t}] \cdot  \Pr[(i,t) \notin \mathcal{A}_1 \wedge (j,t) \notin \mathcal{A}_1 \mid F_{i,t} \wedge F_{j,t}] \nonumber \\
		& \le \Pr[F_{i,t} \wedge F_{j,t}] \cdot \Big( (1-x_{i,t} \alpha_{i,t})(1-x_{j,t}\alpha_{j,t}) + \left(\frac{1}{p_t} - 1 \right) x_{i,t} \alpha_{i,t} x_{j,t} \alpha_{j,t} \Big) \nonumber \\
		&\le f(y_{i,t}) \cdot \Pr[F_{i,t}] \cdot \Pr[F_{j,t}] \cdot \Big( (1-x_{i,t} \alpha_{i,t})(1-x_{j,t}\alpha_{j,t}) + \left(\frac{1}{p_t} - 1 \right) x_{i,t} \alpha_{i,t} x_{j,t} \alpha_{j,t} \Big) \nonumber \\
		& = f(y_{i,t}) \cdot \Pr[F_{i,t+1}] \cdot \Pr[F_{j,t+1}]  + f(y_{i,t}) \left(\frac{1}{p_t} - 1 \right) \cdot \Pr[F_{i,t}] \cdot \Pr[F_{j,t}] \cdot x_{i,t} \alpha_{i,t} x_{j,t} \alpha_{j,t} . \label{eqn:finallineofstepS3}
	\end{align}
    Here, the first inequality uses Inequality~\eqref{eqn:induction_step_i_bound} from Step~\ref{induction:step_2}, i.e., the upper bound on the probability of both users not being allocated via a first proposal. The second inequality applies the induction hypothesis for $\Pr[F_{i,t} \wedge F_{j,t}] $, and the last equality uses \Cref{equation:free_at_tplusone_via_free_at_t} from Step~\ref{induction:step_3} for both users $i$ and $j$ and rearranges terms. 

    \medskip
    
    We now bound the second summand of \eqref{eqn:finallineofstepS3}, via the following inequality.
	
	\begin{fact}\label{fact:bound_alpha_via_online} For any $(i,t)$ we have \begin{align}\label{inequality:bound_xalpha}
	    x_{i,t} \cdot \alpha_{i,t} \cdot \left( \frac{1}{p_t} - 1 \right) \le (0.5 + \kappa) \cdot \left( 1 - x_{i,t} \alpha_{i,t} \right).
	\end{align}
	\end{fact}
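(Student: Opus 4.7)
The plan is to bound $x_{i,t}\alpha_{i,t}$ using the two independent constraints we have on it: the LP online constraint~\eqref{eqn:PPSWConstraint} and the definition of $\alpha_{i,t}$. Writing $\beta := 0.5+\kappa$ and $u := y_{i,t}$, I would first rewrite the desired inequality by moving the $(0.5+\kappa) x_{i,t}\alpha_{i,t}$ term across, obtaining the equivalent form
\[
x_{i,t}\alpha_{i,t}\Bigl(\tfrac{1}{p_t} - 1 + \beta\Bigr) \;\le\; \beta.
\]

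Next, I would combine two facts. First, directly from the definition $\alpha_{i,t} = \min\!\bigl(1, \tfrac{\beta}{1-\beta u}\bigr)$, we always have $\alpha_{i,t} \le \tfrac{\beta}{1-\beta u}$ (this bound is tight on early pairs and slack on late ones, but valid in both cases; for late pairs $\tfrac{\beta}{1-\beta u}\ge 1$ precisely because $u \ge \tau$). Second, by constraint~\eqref{eqn:PPSWConstraint}, we have $x_{i,t} \le p_t(1-u)$. Multiplying these,
\[
x_{i,t}\alpha_{i,t} \;\le\; \frac{\beta \, p_t(1-u)}{1-\beta u}.
\]
Substituting this into the reformulated inequality and using $p_t(\tfrac{1}{p_t}-1+\beta) = 1 - p_t(1-\beta)$, it remains to verify
\[
(1-u)\bigl(1 - p_t(1-\beta)\bigr) \;\le\; 1 - \beta u.
\]

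Expanding, setting $s := p_t(1-\beta) \ge 0$, and simplifying reduces this to $u(\beta - 1 + s) \le s$. But $\beta - 1 + s = (\beta-1)(1-p_t)\le 0$ because $\beta = 0.5+\kappa < 1$ and $p_t \le 1$, while $u \ge 0$ and $s \ge 0$, so the left side is non-positive and the right side is non-negative, and we are done.

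There is no real obstacle here; the only mild subtlety is that the bound $\alpha_{i,t} \le \tfrac{\beta}{1-\beta u}$ needs to be used in the form of the definition rather than the tight value (which would split into the early/late cases unnecessarily), so that the argument applies uniformly for any $(i,t)$ as the statement requires.
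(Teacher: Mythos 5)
Your proof is correct and uses exactly the two ingredients of the paper's proof — Constraint~\eqref{eqn:PPSWConstraint} to bound $x_{i,t}/p_t$ and the bound $\alpha_{i,t}\le\frac{0.5+\kappa}{1-(0.5+\kappa)y_{i,t}}$ — only organizing the algebra slightly differently (you bound the product $x_{i,t}\alpha_{i,t}$ first and retain $p_t$ in the final inequality, whereas the paper eliminates $p_t$ immediately via the constraint and then applies the $\alpha_{i,t}$ bound); this is essentially the same approach.
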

	\begin{proof}
		By Constraint~\eqref{eqn:PPSWConstraint} of the LP, we have that $\frac{1}{p_t} \le \frac{1-y_{i,t}}{x_{i,t}}.$ Thus it suffices to show that $$\alpha_{i,t} (1 - y_{i,t}) - x_{i,t} \alpha_{i,t} \le (0.5 + \kappa) (1 - x_{i,t} \alpha_{i,t})$$ which is equivalent to $$\alpha_{i,t} ( 1 - y_{i,t} - (0.5 -\kappa ) x_{i,t} ) \le 0.5 + \kappa.$$ As $\alpha_{i,t} \le \frac{0.5+\kappa}{1 - (0.5 + \kappa)y_{i,t}}$, the claim follows. 
	\end{proof}
	
	We can apply \Cref{fact:bound_alpha_via_online} to user $j$ and combine it with \Cref{equation:free_at_tplusone_via_free_at_t} in order to bound the second summand via
	\begin{align*}
		& \hspace{0.55cm} f(y_{i,t}) \left(\frac{1}{p_t} - 1 \right) \cdot \Pr[F_{i,t}] \cdot \Pr[F_{j,t}] \cdot x_{i,t} \alpha_{i,t} x_{j,t} \alpha_{j,t} \\ 
		& \le f(y_{i,t}) \cdot \Pr[F_{i,t}] \cdot \Pr[F_{j,t}] \cdot (0.5 + \kappa) \cdot (1 - x_{j,t} \alpha_{j,t}) \cdot x_{i,t} \alpha_{i,t} \\
		& = (0.5 + \kappa) \cdot f(y_{i,t}) \cdot \Pr[F_{j,t+1}] \cdot x_{i,t} \alpha_{i,t} \cdot \Pr[F_{i,t+1}] \cdot (1 - x_{i,t} \cdot \alpha_{i,t} )^{-1} .
	\end{align*}
	Overall, we thus have
	\begin{align*}
		\Pr[F_{i,t+1} \wedge F_{j,t+1}]  
		&\le f(y_{i,t}) \cdot \Pr[F_{i,t+1}] \cdot \Pr[F_{j,t+1}] \cdot \left(1 + (0.5 + \kappa) \cdot \frac{x_{i,t} \alpha_{i,t}}{1 - x_{i,t} \alpha_{i,t}} \right) .
	\end{align*}

    \paragraph{Step \ref{induction:step_5}: Upper bounding the coefficient by $f(y_{i,t+1})$.}
	In order to complete the inductive step, we would like to show that $$f(y_{i,t}) \cdot \left(1 + (0.5 + \kappa) \cdot \frac{x_{i,t} \alpha_{i,t}}{1 - x_{i,t} \alpha_{i,t}} \right) \le f(y_{i,t} + x_{i,t}).$$  
	First, note that as we only consider early pairs, $\alpha_{i,t}$ is always equal to $\frac{0.5 + \kappa }{1 - (0.5 + \kappa) \cdot y_{i,t}}$, so we know $\frac{x_{i,t} \alpha_{i,t}}{1 - x_{i,t} \alpha_{i,t}} = \frac{(0.5 + \kappa) \cdot x_{i,t} }{1 - (0.5 + \kappa) \cdot (x_{i,t} + y_{i,t} )} . $
	Thus to conclude the proof, it suffices to show that $$f(y_{i,t}) \cdot \left(1 + (0.5 + \kappa) \cdot \frac{(0.5 + \kappa) \cdot x_{i,t} }{1 - (0.5 + \kappa) \cdot (x_{i,t} + y_{i,t} )} \right) \le f(y_{i,t} + x_{i,t}).$$
	This is a consequence of our definition $$f(z) := 1 + z \cdot \left( \frac{(0.5+\kappa)^2}{1 - z \cdot (0.5 + \kappa)} \right) .$$ In particular the following claim, whose proof can be found in \Cref{app:proof_inductive_step}, completes the inductive step.
	
	\begin{restatable}{claim}{claiminductivestep} \label{claim:inductive_step} For any $x,y \in [0,1]$ with $x + y \le 1$ and $f(\cdot)$ as stated above, we have $$f(y) \cdot \left( 1 + (0.5 + \kappa) \cdot \frac{(0.5 + \kappa) \cdot x }{1 - (0.5 + \kappa) \cdot (x + y)} \right) \le f(y + x).$$
	\end{restatable}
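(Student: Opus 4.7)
The plan is to clear denominators, reducing the claim to a two-variable polynomial inequality that collapses to a single nonnegative monomial.

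Writing $a := 0.5 + \kappa$ and $c := a(1-a)$ for brevity, I would first rewrite both sides in a common-denominator form. A short calculation gives
\[ f(z) \;=\; 1 + \frac{a^2 z}{1 - az} \;=\; \frac{1 - c z}{1 - az}, \]
and likewise the factor multiplying $f(y)$ on the left-hand side equals $(1 - ay - cx)/(1 - a(x+y))$, obtained by combining $1 + a \cdot ax/(1 - a(x+y))$ over a common denominator and using $a - a^2 = c$. Since $x + y \le 1$ and $a = 0.5 + \kappa \in (0,1)$, each of the denominators $1 - ay$ and $1 - a(x+y)$ is strictly positive.

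After multiplying through by $(1-ay)(1-a(x+y)) > 0$, the claim becomes equivalent to the polynomial inequality
\[ (1 - cy)(1 - ay - cx) \;\le\; (1 - ay)\bigl(1 - c(x+y)\bigr). \]
The main step is then to expand both sides and observe that almost every term cancels. The only surviving contributions give the difference RHS minus LHS equal to $cxy(a - c) = a^3(1-a)\,xy$, which is manifestly nonnegative for $a \in (0,1)$ and $x, y \ge 0$.

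I do not expect any genuine obstacle: the argument is a direct algebraic identity and the only care required is bookkeeping the expansion and confirming positivity of the denominators (which follows immediately from $x + y \le 1$ and $a < 1$). The conceptual point is that the slack in the inequality is exactly $a^3(1-a)\,xy \ge 0$, which is precisely why the particular form $f(z) = (1 - cz)/(1 - az)$ is the right ansatz to close the induction in Claim~\ref{claim:claim_for_corrbound}.
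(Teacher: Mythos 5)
Your proof is correct and follows essentially the same route as the paper's: both clear denominators and reduce the claim to a polynomial inequality verified by direct expansion. Your compact form $f(z) = (1-cz)/(1-az)$ is a slight streamlining that avoids the paper's separate treatment of the $y=0$ case and makes the exact slack $a^3(1-a)xy$ visible, but it is not a different argument.
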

 
	This concludes the proof of Claim~\ref{claim:claim_for_corrbound}. 
\end{proof}

Now, we can finally prove \Cref{corollary:bound_correlation_unrestricted} which concludes the proof of our main \Cref{theorem:main_theorem}. Let us restate \Cref{corollary:bound_correlation_unrestricted} and prove it afterwards.

\corboundcorrelationunrestricted*

\begin{proof}[Proof of \Cref{corollary:bound_correlation_unrestricted}.]
	We assume that both $y_{i,t} > \tau$ and $y_{j,t} > \tau$; if neither inequality holds the result is clear and follows directly from \Cref{lem:corrbound} while if just one holds the proof proceeds nearly identically with a slightly better guarantee. 
	
	Let $t^i$ denote the latest resource in $[T]$ such that $y_{i, t^i-1} \le \tau$ and $y_{i, t^i} > \tau$ and similarly let $t^j$ denote the latest resource in $[T]$ such that $y_{j,t^j-1} \le \tau$ and $y_{i,t^j} > \tau$. 
	
	Let $A_i$ denote the event that $i$ is allocated to some arrival in $[t^i, t-1]$ and let $A_j$ denote the event that $j$ is allocated to some arrival in $[t^j, t-1]$. By the hypothesis that $\Pr[(i,t') \in \mathcal{A}_1] + \Pr[(i,t') \in \mathcal{A}_2] = (0.5 + \kappa) \cdot x_{i,t'}$ for all $t' < t$, we have $$\Pr[A_i] = \sum_{t' \in [t^i, t-1]} (0.5 + \kappa) \cdot x_{i, t'} = (0.5 + \kappa) \cdot (y_{i,t} - y_{i,t^i}) \le (0.5 + \kappa) \cdot (1 - \tau) = 2\kappa.$$ An analogous upper bound holds for $\Pr[A_j]$.

    To simplify notation, let us assume for a moment that $i^j \leq i^{j'}$ (if $i^j > i^{j'}$, simply swap the roles of $j$ and $j'$ in the following line). We apply \Cref{lem:corrbound} to get
	\begin{align*}
		\Pr[F_{i,j} \wedge F_{i,j'}] &\le \Pr[F_{i^j,j} \wedge F_{i^{j'},j'}] \nonumber \\
		& = \Pr[F_{i^j,j} \wedge F_{i^{j},j'}] \cdot \Pr[F_{i^j,j} \wedge F_{i^{j'},j'} \mid F_{i^j,j} \wedge F_{i^{j},j'} ] \nonumber \\
		&\le  \gammac \cdot \Pr[F_{i^j,j}] \cdot \Pr[F_{i^{j},j'}] \cdot \Pr[F_{i^j,j} \wedge F_{i^{j'},j'} \mid F_{i^j,j} \wedge F_{i^{j},j'} ]  && \text{(via \Cref{lem:corrbound})} \nonumber \\
		& = \gammac \cdot \Pr[F_{i^j,j}] \cdot \Pr[F_{i^{j},j'}] \cdot \Pr[F_{i^{j'},j'} \mid F_{i^j,j} \wedge F_{i^{j},j'} ]\enspace.  \nonumber \\
	\end{align*}
	In this expression, we aim to combine the last two factors concerning the events if item $j'$ is free at some point in time. To this end, observe that
	\begin{align*}
		 \Pr[F_{i^{j},j'}] \cdot \Pr[F_{i^{j'},j'} \mid F_{i^j,j} \wedge F_{i^{j},j'} ] & =  \Pr[F_{i^{j},j'}] \cdot \prod_{i' = i^j}^{i^{j'} -1 } \left( 1 - q_{i'} \cdot \Pr[j' \in \mathsf{FP}_{i'}  ] \cdot \alpha_{i',j'}  \right) \\
		 & =  \Pr[F_{i^{j},j'}] \cdot \prod_{i' = i^j}^{i^{j'} -1 } \left( 1 - x_{i',j'} \cdot \alpha_{i',j'}  \right) \\
		 & = \Pr[F_{i^{j'},j'}] \enspace,
	\end{align*}
	where the last equality uses the same ideas as Step~\ref{induction:step_3} in the proof of \Cref{claim:claim_for_corrbound}. So, overall, we have 
	\begin{align}\label{inequality:correlationboundlate}
		\Pr[F_{i,j} \wedge F_{i,j'}] \leq  \Pr[F_{i^j,j} \wedge F_{i^{j'},j'}] \leq \gammac \cdot \Pr[F_{i^j,j}] \cdot \Pr[F_{i^{j'},j'}] \enspace. 
	\end{align}
	With this in mind, we are ready to prove the final statement as 
	\begin{align*}
		\Pr[F_{i,j} \wedge F_{i,j'}] &\le \Pr[F_{i^j,j} \wedge F_{i^{j'},j'}] \\
		&\le  \gammac \cdot \Pr[F_{i^j,j}] \cdot \Pr[F_{i^{j'},j'}]  && \text{(via \Cref{inequality:correlationboundlate})}\\
		&= \gammac \cdot \left( \Pr[F_{i,j}] + \Pr[A_j] \right) \cdot \left( \Pr[F_{i,j'}] + \Pr[A_{j'}] \right) \\
		&\le \gammac \cdot \left( \Pr[F_{i,j}] + 2\kappa \right) \cdot \left( \Pr[F_{i,j'}] + 2\kappa \right) \\
		&\le \gammac \cdot \left( 1 + \frac{4\kappa}{0.5 - \kappa} + \frac{4\kappa^2}{(0.5-\kappa)^2} \right) \cdot \Pr[F_{i,j}] \cdot \Pr[F_{i,j'}] \\
		& = \gammac \cdot \left( \frac{0.5+\kappa}{0.5-\kappa} \right)^2 \cdot \Pr[F_{i,j}] \cdot \Pr[F_{i,j'}] \\
		& = \Deltac \cdot \Pr[F_{i,j}] \cdot \Pr[F_{i,j'}]  ,
	\end{align*}
	where in the last inequality we used $\Pr[F_{i,j}], \Pr[F_{i,j'}] \ge 0.5 - \kappa$ and the last equality applies $\gammac := 1 + \nicefrac{(0.5+\kappa)^2}{0.5-\kappa}$. 
\end{proof}

\section{Analyzing the Sample-based Algorithm}
\label{app:sample_based_algo}

To update \Cref{allocationalgexact} to run in polynomial time, instead of computing the exact value of $\rho_{i,t}$ we estimate it with polynomially many samples. For simplicity, we present the algorithm and its analysis for Bernoulli arrivals when every success probability $q_{i,t}$ equals 1 (the relevant changes needed for the generalizations are described in \Cref{app:beyond_bernoulli} and \Cref{app:stochasticrewards}, respectively). The pseudocode is presented below; observe that we reduce the constant $\kappa$ by an arbitrarily small $\epsilon > 0$ in \Cref{line:sample_based:sample_first}.

\begin{algorithm}[H]
	\caption{(parametrized by $\epsilon > 0$)}
	\label{allocationalgwithsampling}
	\begin{algorithmic}[1]

		\State $\kappa \gets \approxconstant - \epsilon$ \label{line:sample_based:sample_first}
		\State Solve \eqref{LP} for $\{x_{i,t} \}$  \label{line:sample_based:sample_solveLP}
		\For{each time $t$, if $t$ arrives} \Comment{w.p. $p_t$}
		
		\State Define users $\FPt := \textsf{PS}( ( x_{i,t}/p_t )_{i \in I})$  \Comment{at most $c_t$ users get first proposals}  \label{line:sample_based:sample_firstproposal}
		\For{each user $i \in \FPt$}
		\If{$i$ is available}
		\State Allocate $i$ to $t$ with probability $\alpha_{i,t} := \min \left(1, \frac{0.5 + \kappa }{1 - (0.5 + \kappa) \cdot \sum_{t' < t} x_{i,t'}} \right)$  \label{line:sample_based:sample_alphait} \label{line:sample_based:sample_firstmatch}
		\EndIf 
		
		\EndFor
		
		\State Let $A_t \gets \text{number of users allocated to } t \text{ thus far}$ \label{line:sample_based:sample_defAt}
  
		\State Define users $\SPt:= \textsf{PS}( ( (1 - \frac{A_t}{c_t}) \cdot x_{i,t}/p_t )_{i \in I})$ \label{line:sample_based:sample_secondproposal} \Comment{$\le c_t - A_t$ users get second proposal} 
		\For{each user $i \in \SPt$ with $\alpha_{i,t} = 1$}
		\If{$i$ is available} \label{line:sample_based:sample_ifiavail}
        \State Do not compute $\sigma_{i,t} := \E[ \mathbbm{1}[i \text{ available after \Cref{line:sample_based:sample_defAt}}] \cdot (1 - \frac{A_t}{c_t}) \mid t \text{ arrived}, (\hat{\sigma}_{i,t'})_{t' < t}] $; 
        \State Instead compute $\hat{\sigma}_{i,t}  \gets $ Empirical average of $\sigma_{i,t}$ over $N := 50 n T \cdot ( \epsilon / 400T)^{-2} \cdot \kappa^{-2} $ independent simulations, using previously computed values $(\hat{\sigma}_{i,t'})_{t' < t}$
        \State $\hat{\beta}_{i,t} \gets \min \Big(1, \left( (0.5 + \kappa) \cdot \sum_{t' < t} x_{i,t'} - (0.5 - \kappa) \right) \cdot \frac{1}{\hat{\sigma}_{i,t}} \Big).$
		\State Allocate $i$ to $t$ with prob. $\hat{\beta}_{i,t}$ \label{line:sample_based:sample_secondmatch}
		\EndIf
		\EndFor
		
		\EndFor
	\end{algorithmic}
\end{algorithm}	

As before, the definition of $\rho_{i,t}$ is over the randomness in the arrivals and algorithm up to when it reaches \Cref{line:sample_based:sample_defAt} for arrival $t$ in \Cref{allocationalgexact}, with the previously computed values of $(\hat{\sigma}_{i,t'})$ for $t' < t$. In particular, we do not recalculate these, but rather inductively use them as defined previously. This is why we use the shorthand of ``conditioning on $(\hat{\sigma}_{i, t'})_{t' < t}$'' when defining $\rho_{i,t}$.

We start with the observation that our algorithm is unchanged for early pairs $(i,t)$. In particular, the following lemmas still hold for \Cref{allocationalgwithsampling}. 

\lemmaapproxearlyedges*

\claimforcorrbound*

In the remainder of the analysis, we will need to track the errors incurred by sampling. 
Note that by the Chernoff-Hoeffding bound, if $\sigma_{i,t}$ is bounded away from 0 then the empirical average $\hat{\sigma}_{i,t}$ will be within a close multiplicative factor. 

\begin{obs}\label{obs:chernoff}
If $\sigma_{i,t} \ge \kappa$ then we have that $ \sigma_{i,t} / \hat{\sigma}_{i,t} \in \left[ 1 - \frac{ \epsilon }{ 200 T}, 1 + \frac{\epsilon}{200 T} \right]$ with probability at least $1 - 2 \cdot \exp(-100 n T).$
\end{obs}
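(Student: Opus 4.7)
The plan is a single application of Hoeffding's inequality, followed by a short algebraic step to convert an additive error into the multiplicative one stated. First, I would observe that each of the $N$ simulations produces an i.i.d.\ copy of the random variable $X := \mathbbm{1}[i \text{ available after \Cref{line:sample_based:sample_defAt}}] \cdot (1 - A_t/c_t)$, which is supported in $[0,1]$ (the indicator is $\{0,1\}$-valued and $A_t \in \{0, \ldots, c_t\}$) and whose mean is, by definition, $\sigma_{i,t}$ once we condition on $t$ arriving and on the previously computed estimates $(\hat{\sigma}_{i,t'})_{t' < t}$. These samples are genuinely i.i.d.\ because each simulation re-draws fresh randomness for the arrivals, pivotal sampling, and Bernoulli trials, while treating the already-fixed estimates from earlier rounds as given constants.

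Next I would apply Hoeffding's inequality with additive slack $\delta := \kappa \cdot \epsilon/(400T)$. With $N = 50 n T \cdot (\epsilon/400T)^{-2} \cdot \kappa^{-2}$, the exponent $2 N \delta^2$ simplifies exactly to $100 n T$, so $|\hat{\sigma}_{i,t} - \sigma_{i,t}| \le \delta$ fails with probability at most $2 \exp(-100 n T)$, matching the failure probability in the statement.

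Finally, to convert the additive bound into the claimed multiplicative one, the hypothesis $\sigma_{i,t} \ge \kappa$ upgrades $\delta$ to $\delta \le \sigma_{i,t} \cdot \epsilon/(400T)$, so on the good event $\hat{\sigma}_{i,t} \in [\sigma_{i,t}(1 - \epsilon/(400T)),\, \sigma_{i,t}(1 + \epsilon/(400T))]$. Applying the elementary bounds $1/(1-x) \le 1 + 2x$ and $1/(1+x) \ge 1 - x$ at $x = \epsilon/(400T) \in [0, 1/2]$ then yields $\sigma_{i,t}/\hat{\sigma}_{i,t} \in [1 - \epsilon/(200T),\, 1 + \epsilon/(200T)]$, as desired. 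I do not anticipate a real obstacle: the hypothesis $\sigma_{i,t} \ge \kappa$ exists precisely to make $\delta$ small relative to $\sigma_{i,t}$, and the exponent in $N$ is calibrated so the Hoeffding bound lands exactly on the target probability.
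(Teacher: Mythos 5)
Your proof is correct and follows essentially the same route as the paper: a single application of Hoeffding's inequality for $N$ i.i.d.\ $[0,1]$-valued samples, the observation that $N$ is calibrated so $2N\delta^2 = 100nT$, and the hypothesis $\sigma_{i,t} \ge \kappa$ to turn the additive deviation into the relative one. The only cosmetic difference is that the paper applies Hoeffding at slack $(\epsilon/400T)\,\sigma_{i,t}$ and lower-bounds $\sigma_{i,t}$ by $\kappa$ inside the exponent, while you fix the slack at $\kappa\epsilon/(400T)$ and invoke $\kappa \le \sigma_{i,t}$ afterward; these are interchangeable.
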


\begin{proof}
    We straightforwardly bound \begin{align*}
        \Pr \left[ |\hat{\sigma}_{i,t} - \sigma_{i,t}| \ge  \frac{\epsilon}{400 T} \cdot \sigma_{i,t} \right] &\le  2 \cdot \exp \left( -2 \cdot N \cdot ((\epsilon / 400T) \cdot \sigma_{i,t})^2 \right)  \\
        &\le 2 \cdot \exp \left( -2 \cdot N \cdot (( \epsilon / 400T) \cdot \kappa )^2 \right) \\
        &\le 2 \cdot \exp \left( -100 n T \right). 
        \end{align*}
    Thus with probability at least $1 - 2 \cdot \exp(-100nT)$, we have $$\sigma_{i,t}/\hat{\sigma}_{i,t} \in [(1 + \epsilon/400T)^{-1}, (1 - \epsilon / 400 T)^{-1}].$$ The observation follows directly. 
\end{proof}

We now show inductively that our algorithm allocates each $(i,t)$ with probability close to the idealized value of $(0.5 + \kappa ) \cdot x_{i,t}$ from the exact (exponential-time) calculations. In particular, we show that that we achieve a value of $(0.5 + \kappa \pm \epsilon_t) \cdot x_{i,t}$ where the error $\epsilon_t$ accumulates only linearly in $t$. 

\begin{lemma}\label{lemma:samplingapprox}
    For any online arrival $t$, with probability at least $1 - 2nt \cdot \exp(-100nT) $, we have for every $t' \le t$ that 
    \begin{align}
    \Pr[(i,t') \in \mathcal{A}_1] + \Pr[(i,t') \in \mathcal{A}_2] \in [(0.5 + \kappa - \epsilon \cdot t'/T) \cdot x_{i,t'}, (0.5 + \kappa + \epsilon \cdot t'/T) \cdot x_{i,t'}].
    \end{align}
\end{lemma}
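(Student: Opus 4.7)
The plan is to induct on $t$, conditioning on the high-probability event that every sampling estimate $\hat\sigma_{i,t'}$ for $i \in I$ and $t' \le t$ satisfies the multiplicative concentration of \Cref{obs:chernoff} relative to the corresponding $\sigma_{i,t'}$. Each estimate fails with probability at most $2\exp(-100nT)$, so a union bound over the $nt$ estimates yields failure probability at most $2nt\exp(-100nT)$, matching the statement.

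\textbf{Base case.} For $t = 0$ the statement is vacuous. \textbf{Inductive step.} Assume the claim for $t-1$. For each $t' \le t-1$ the inductive hypothesis already supplies the required bound, so only the case $t' = t$ remains. The argument splits into two parts: (a) lower-bounding $\sigma_{i,t}$ so that \Cref{obs:chernoff} applies and $\hat\beta_{i,t}$ never hits its $\min(1,\cdot)$ cap; (b) bounding $\Pr[(i,t)\in\mathcal{A}_1] + \Pr[(i,t)\in\mathcal{A}_2]$.

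For (a), I would re-trace the derivations of \Cref{lem:rhoitlowerbound}, \Cref{lemma:expected_At_with_condition}, and \Cref{corollary:bound_correlation_unrestricted} for the sampled dynamics. The only difference from the exact case is that each prior allocation probability is perturbed by at most $(\epsilon t'/T) x_{i,t'}$, so $\Pr[F_{i,t}]$ is perturbed by at most $\epsilon(t-1)/T$ from $1 - (0.5+\kappa) y_{i,t}$, and the pairwise correlation inequality underlying \Cref{corollary:bound_correlation_unrestricted} is perturbed by only $O(\epsilon/T)$. The exact analysis via \Cref{eqn:rho_lower_bound} established $\rho_{i,t} \ge 0.02389$ with strict slack using $\kappa = \approxconstant$; since the sample-based algorithm uses the smaller constant $\kappa = \approxconstant - \epsilon$ from \Cref{line:sample_based:sample_first}, the governing inequalities loosen further, so for $\epsilon$ sufficiently small the cumulative perturbation preserves $\sigma_{i,t} \ge \kappa$. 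I expect this to be the main obstacle: the correlation machinery of \Cref{subsection:correlation_bound} must be re-verified under perturbed dynamics to confirm the remaining slack suffices.

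For (b), I decompose by whether $(i,t)$ is early or late. In either case, the inductive hypothesis together with the identity $\Pr[F_{i,t}] = 1 - \sum_{t'<t}\bigl(\Pr[(i,t')\in\mathcal{A}_1] + \Pr[(i,t')\in\mathcal{A}_2]\bigr)$ gives $|\Pr[F_{i,t}] - (1-(0.5+\kappa)y_{i,t})| \le \epsilon(t-1)/T$. For early $(i,t)$, the allocation probability equals $\Pr[F_{i,t}] \cdot x_{i,t} \cdot \alpha_{i,t}$ with $\alpha_{i,t} \le 1$, deviating from $(0.5+\kappa) x_{i,t}$ by at most $(\epsilon(t-1)/T) x_{i,t} \le (\epsilon t/T) x_{i,t}$. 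For late $(i,t)$, the first-proposal term $\Pr[F_{i,t}] \cdot x_{i,t}$ deviates from $(1-(0.5+\kappa)y_{i,t}) x_{i,t}$ by at most $(\epsilon(t-1)/T) x_{i,t}$, while the second-proposal term $x_{i,t} \cdot (\sigma_{i,t}/\hat\sigma_{i,t}) \cdot ((0.5+\kappa) y_{i,t} - (0.5-\kappa))$ deviates from its ideal by at most $(\epsilon/(200T)) \cdot x_{i,t} \cdot 2\kappa$ thanks to the Chernoff conditioning; the two ideal parts sum to exactly $(0.5+\kappa) x_{i,t}$, and the total deviation is at most $(\epsilon t/T) x_{i,t}$, closing the induction.
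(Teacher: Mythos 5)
Your overall structure matches the paper's: induction on $t$, conditioning on the good event from the Chernoff bound, and closing the induction by tracking the accumulated multiplicative error in $\Pr[F_{i,t}]$ and $\sigma_{i,t}/\hat\sigma_{i,t}$. Your part (b) arithmetic is correct (and even slightly cleaner than the paper's, which conservatively uses $\epsilon_t$ where $\epsilon_{t-1}$ would do); the key observation that the second-proposal deviation is bounded by $(\epsilon/200T)\cdot 2\kappa \le \epsilon/T$ is exactly what closes the induction.

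However, part (a) is left as a sketch, and there are two concrete issues you should address. First, the claimed perturbation of the correlation inequality is not $O(\epsilon/T)$: the error accumulates over all $t' < t$, so the perturbed version of \Cref{corollary:bound_correlation_unrestricted} incurs a multiplicative factor $\zeta_{\kappa,\epsilon_t}$ that is $\gamma_\kappa(1 + O(\epsilon_t))$ with $\epsilon_t = \epsilon t/T$, which can be as large as $O(\epsilon)$. This is still controllable for small $\epsilon$, but it is not a per-round error. Second, and more importantly, the target lower bound $\sigma_{i,t} \ge \kappa$ is strong enough to invoke \Cref{obs:chernoff}, but it is \emph{not} strong enough to rule out the $\min(1,\cdot)$ cap on $\hat\beta_{i,t}$: if $\hat\sigma_{i,t}$ were as small as $\kappa(1-\epsilon/400T)$, then $((0.5+\kappa)y_{i,t} - (0.5-\kappa))/\hat\sigma_{i,t}$ could approach $2$. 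What the paper actually shows, and what you need, is the tighter bound $\sigma_{i,t} \ge 2\kappa + 0.1\epsilon_t$, after which $\hat\beta_{i,t} \le 2\kappa/(2\kappa + 0.1\epsilon_t)\cdot(1+\epsilon/200T) \le 1$. Establishing this tighter bound requires re-running the chain \Cref{lem:rhoitlowerbound}, \Cref{lemma:expected_At_with_condition}, \Cref{corollary:bound_correlation_unrestricted} with the perturbed constants, and the final step is a delicate single-variable inequality in $\epsilon$ at $\kappa = \approxconstant - \epsilon$ that the paper verifies numerically; your heuristic that ``the inequalities loosen for smaller $\kappa$'' points in the right direction but does not substitute for that verification. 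You correctly flag this as the main obstacle, so the gap is acknowledged, but as written the proof is incomplete precisely there.
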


Note that once we have \Cref{lemma:samplingapprox}, it is immediate to bound the gain of \Cref{allocationalgwithsampling}. In particular, the social welfare achieved by \Cref{allocationalgwithsampling} is with probability at least $1 - 2nT \cdot \exp(-100nT)$ lower-bounded by 
\begin{align*}
\sum_t \sum_i (0.5 + \kappa - \epsilon \cdot t/T) \cdot x_{i,t} \cdot v_{i,t} &\ge \sum_t \sum_i (0.5 + \kappa - \epsilon) \cdot x_{i,t} \cdot v_{i,t} \\
&= (0.5115 - 2 \epsilon) \cdot \text{OPT}\eqref{LP} \\
&\ge (0.5115 - 2 \epsilon) \cdot \opton .
\end{align*}
Note that for a realization of \Cref{allocationalgwithsampling}, we can estimate its gain within a small multiplicative error factor by simulating it over polynomially-many independently sampled arrival sequences. Thus, this guarantee can be obtained with high probability, and it only remains to prove \Cref{lemma:samplingapprox}.

\begin{proof}[Proof of \Cref{lemma:samplingapprox}.]
    By induction on $t$. We consider only the case where the lemma's statement holds for all $\{1, 2, \ldots, t-1\}$, and note this is with probability at least $1 - 2n (t-1) \cdot \exp(-100nT)$ by the inductive hypothesis. Note that for any $i$ such that $(i,t)$ is early, we are done by \Cref{lemma:approx_early_edges}.
    
    For convenience of notation, let $\epsilon_t := \epsilon \cdot t/T$ denote the error accumulated up to time $t$. Using this notation, we can apply the inductive hypothesis to bound 
    \begin{align}
    \Pr[F_{i,t}] \in [ 1 - (0.5 + \kappa + \epsilon_t) \cdot y_{i,t}, 1 -  (0.5 + \kappa - \epsilon_t) \cdot y_{i,t}] .
    \end{align}
    
     Hence the probability late $(i,t)$ is allocated as a first pick satisfies $$\Pr[(i,t) \in \mathcal{A}_1]  = x_{i,t} \cdot \Pr[F_{i,t}] \in [ x_{i,t} - x_{i,t} \cdot (0.5 + \kappa + \epsilon_t ) \cdot y_{i,t}, x_{i,t} -  x_{i,t} \cdot (0.5 + \kappa - \epsilon_t ) \cdot y_{i,t}]$$ where we used the induction hypothesis for bounding $\Pr[F_{i,t}]$. 
     
    By Equation~\eqref{eqn:rhotimesbeta} the probability $(i,t)$ is allocated as a second pick is given by 
    \begin{align}
    \Pr[(i,t) \in \mathcal{A}_2] = x_{i,t} \cdot \sigma_{i,t} \cdot \hat{\beta}_{i,t}. \label{eqn:samplingsecondpick}
    \end{align}
    
    As before, we aim to show that $\sigma_{i,t}$ is bounded away from $0$. 
    Note that analogously to \Cref{eqn:rhoitlowerbound}, we have 
    \begin{align}
    \sigma_{i,t} &\ge \Pr[F_{i,t}] \cdot \left( \tau - \frac{\E[A_t \mid t \text{ arrived}, F_{i,t}]}{c_t}    \right) \nonumber \\
    &\ge \left( 1 - ((0.5 + \kappa + \epsilon_t) \cdot y_{i,t}) \right)  \cdot \left( \tau - \frac{\E[A_t \mid t \text{ arrived}, F_{i,t}]}{c_t}    \right). \label{sigmaitlowerbound} 
    \end{align}
    To bound the conditional expectation $\E[A_t \mid t \text{ arrived}, F_{i,t}]$, we will (as before) upper bound the joint probability $\Pr[F_{i,t} \wedge F_{j,t}]$, analogously to \Cref{corollary:bound_correlation_unrestricted}. Here, the main contribution is from \Cref{claim:claim_for_corrbound}; the probability mass from late edges does not greatly affect it for small $\kappa$, even when taking into account the possible error introduced by sampling. As our algorithm is unchanged along early edges, the proof from the body of the paper goes through in a very similar fashion, which we formalize below. 

    We assume that both $y_{i,t} > \tau$ and $y_{j,t} > \tau$ (if neither, or just one of these inequalities holds, the proof proceeds nearly identically with better bounds).  Let $t^i$ denote the latest resource in $[T]$ such that $y_{i, t^i-1} \le \tau$ and $y_{i, t^i} > \tau$ and similarly let $t^j$ denote the latest resource in $[T]$ such that $y_{j,t^j-1} \le \tau$ and $y_{i,t^j} > \tau$. Let $A_i$ denote the event that $i$ is allocated to some arrival in $[t^i, t-1]$ and let $A_j$ denote the event that $j$ is allocated to some arrival in $[t^j, t-1]$. Using the hypothesis that $\Pr[(i,t') \in \mathcal{A}_1] + \Pr[(i,t') \in \mathcal{A}_2] \le (0.5 + \kappa + \epsilon_{t'}) \cdot x_{i,t'}$ for all $t' < t$, we have 
    \begin{align*}
    \Pr[A_i] & \leq \sum_{t' \in [t^i, t-1]} (0.5 + \kappa + \epsilon_{t'}) \cdot x_{i, t'} \\
    &\le (0.5 + \kappa + \epsilon_t) \cdot (y_{i,t} - y_{i,t^i}) \\
    &\le (0.5 + \kappa + \epsilon_t) \cdot (1 - \tau) \\
    & = (0.5 + \kappa + \epsilon_t ) \cdot \frac{2 \kappa}{0.5+ \kappa} \\
    & = 2 \kappa + \epsilon_t \cdot \frac{2 \kappa}{0.5+\kappa} 
    \end{align*} 
    An analogous upper bound holds for $\Pr[A_j]$. For convenience, let us define $\etac := \nicefrac{2 \kappa}{0.5+\kappa}$. With this, we can bound 
	\begin{align*}
		\Pr[F_{i,t} \wedge F_{j,t}] &\le \Pr[F_{i,t^i} \wedge F_{j,t^j}] \\
		&\le  \gammac \cdot \Pr[F_{i,t^i}] \cdot \Pr[F_{j,t^j}]  \hspace{5.5cm} \text{(\Cref{lem:corrbound})}\\
		&= \gammac \cdot \left( \Pr[F_{i,t}] + \Pr[A_i] \right) \cdot \left( \Pr[F_{j,t}] + \Pr[A_j] \right) \\
        &\le \gammac \cdot \left( \Pr[F_{i,t}] + 2 \kappa + \epsilon_t \etac \right) \cdot \left( \Pr[F_{j,t}] + 2 \kappa + \epsilon_t \etac \right) \\
        & \le \zeta_{\kappa, \epsilon_t} \cdot \Pr[F_{i,t}] \cdot \Pr[F_{j,t}]
    \end{align*}
    where in the last inequality, we first use a lower bound on $\Pr[F_{i,t}]$ and $\Pr[F_{j,t}]$ of $0.5 - \kappa - \epsilon_t$ and defined $$\zeta_{\kappa, \epsilon_t} := \gamma_{\kappa} \cdot \left( 1 + \frac{2(2\kappa + \epsilon_t \etac)}{0.5 - \kappa - \epsilon_t} + \frac{(2 \kappa + \epsilon_t \etac )^2}{(0.5 - \kappa - \epsilon_t)^2} \right) .$$ 

    Now, following the calculation of \Cref{eqn:AtconditionedtarrivedFit}, we have 
    \begin{align*}
    \E[A_t \mid t \text{ arrived}, F_{i,t}]   &= \frac{x_{i,t}}{p_t} + \sum_{j \neq i} \frac{\Pr[F_{i,t} \wedge F_{j,t}]}{\Pr[F_{i,t}]} \cdot \frac{x_{j,t}}{p_t} \cdot \alpha_{j,t} \\
    &\le \frac{x_{i,t}}{p_t} + \sum_{j \neq i} \zeta_{\kappa, \epsilon_t} \cdot \Pr[F_{j,t}] \cdot \frac{x_{j,t}}{p_t} \cdot \alpha_{j,t} \\
    &\le \frac{x_{i,t}}{p_t} + \zeta_{\kappa, \epsilon_t} \cdot (0.5 + \kappa + 2 \epsilon_t) \cdot c_t. 
    \end{align*}
    For the final inequality, we are using $\Pr[F_{j,t}] \le 1 - (0.5 + \kappa - \epsilon_t) \cdot y_{i,t}$ by our hypothesis  and substituting ${\alpha_{j,t} \le \frac{0.5 + \kappa}{1 - (0.5 + \kappa) y_{i,t}}}.$
    Using that $\nicefrac{x_{i,t}}{p_t} \le 1 - \tau$ as $(i,t)$ is late, we have 
    \begin{align}\label{eqn:sample_bound_At_conditioned}
        \E[A_t \mid t \text{ arrived}, F_{i,t}] \le \left( 1 - \tau + \zeta_{\kappa, \epsilon_t} \cdot (0.5 + \kappa + 2 \epsilon_t ) \right) \cdot c_t
    \end{align}
    as in \Cref{lemma:expected_At_with_condition}.

    Now, starting from \Cref{sigmaitlowerbound} and using \Cref{eqn:sample_bound_At_conditioned}, we note 
    \begin{align*}
    \sigma_{i,t} &\ge \left( 1 - ((0.5 + \kappa + \epsilon_t) \cdot y_{i,t}) \right)  \cdot \left( 2 \tau - 1 - \zeta_{\kappa, \epsilon_t} \cdot (0.5 + \kappa + 2 \epsilon_t ) )\right) \\
    & \ge \left( 0.5 - \kappa - \epsilon_t \right)  \cdot \left( 2 \tau - 1 - \zeta_{\kappa, \epsilon_t} \cdot (0.5 + \kappa + 2 \epsilon_t ) )\right) \\
    &\ge (0.5 - \kappa - \epsilon_t ) \cdot \left( \frac{2 \kappa}{0.5 - \kappa} + \epsilon_t \right) && \text{for } \epsilon_t \le 0.0001 \\
    &\ge 2 \kappa  + 0.1\epsilon_t  ,
    \end{align*}
    where the second inequality uses $y_{i,t} \le 1$ and the last inequality is a straightforward calculation for sufficiently small $\epsilon$. The third inequality is calculation-heavy and holds only for small $\epsilon$ and $\kappa \le \approxconstant - \epsilon$, and requires some slightly tedious calculations. For example, we can upper bound $\zeta_{\kappa, \epsilon_t}$ by noting that for $\epsilon_t$ sufficiently small $\frac{2(2\kappa + \epsilon_t \eta_k)}{0.5 - \kappa - \epsilon_t} \le \frac{2(2 \kappa)}{0.5 - \kappa} + 0.5\epsilon_t$ and $\frac{(2 \kappa + \epsilon_t \etac )^2}{(0.5 - \kappa - \epsilon_t)^2} \le \frac{(2 \kappa)^2}{(0.5 - \kappa)^2} + 0.1\epsilon_t$. 

    Thus 
    \begin{align*}
    \zeta_{\kappa, \epsilon_t} \cdot (0.5 + \kappa + 2 \epsilon_t) &\le \gamma_{\kappa} \cdot \left( 1 + \frac{2(2 \kappa)}{0.5 - \kappa} + \frac{(2 \kappa)^2}{(0.5 - \kappa)^2} + 0.6 \epsilon_t \right) \cdot (0.5 + \kappa + 2 \epsilon_t) \\
    &\le \gamma_{\kappa} \cdot \left( 1 + \frac{2(2 \kappa)}{0.5 - \kappa} + \frac{(2 \kappa)^2}{(0.5 - \kappa)^2}   \right) \cdot (0.5 + \kappa ) +  \epsilon_t \cdot (\dagger) + 1.2 \cdot \gamma_{\kappa} \cdot \epsilon_t^2
    \end{align*}
    for $(\dagger) := \gamma_{\kappa} \cdot \left( 0.6(0.5 + \kappa) + 2 \left(1 + \frac{2(2 \kappa)}{0.5 - \kappa} + \frac{(2 \kappa)^2}{(0.5 - \kappa)^2} \right) \right) < 4.$ We loosely bound $4 \epsilon_t + 4 \epsilon_t^2 \le 5 \epsilon_t$ for $\epsilon$ sufficiently small. So we just need to show  $2\tau - 1 -  \gamma_{\kappa} \cdot \left( 1 + \frac{2(2 \kappa)}{0.5 - \kappa} + \frac{(2 \kappa)^2}{(0.5 - \kappa)^2}   \right) \cdot (0.5 + \kappa) - 5\epsilon_t \ge \frac{2 \kappa}{0.5 - \kappa} + \epsilon_t.$ Using that $\epsilon_t \le \epsilon$ it suffices to show $6 \epsilon \le 2\tau - 1 -  \gamma_{\kappa} \cdot \left( 1 + \frac{2(2 \kappa)}{0.5 - \kappa} + \frac{(2 \kappa)^2}{(0.5 - \kappa)^2}   \right) \cdot (0.5 + \kappa ) - \frac{2 \kappa}{0.5 - \kappa}.$ Recalling that $\kappa := \approxconstant -\epsilon$, we note this reduces to a single-variable inequality in only $\epsilon$. This is not easy to show directly, as it crucially is true for the magic constant $\approxconstant$, but can readily be shown by computer verification. Indeed, the RHS and LHS are easily seen to be 100-Lipschitz as functions of $\epsilon \in [0, 0.1]$, say, so we confirm the RHS is at least $10^{-5}$ larger than the LHS on a grid of $10^6$ points on $[0,0.1]$. 
    
    Hence, we get $\sigma_{i,t}$ is bounded away from $0$ and can apply \Cref{obs:chernoff}: for any fixed $i$ such that $(i,t)$ is late, we have with probability at least $1 - 2 \cdot \exp(-100nT)$ that $\sigma_{i,t}/ \hat{\sigma}_{i,t} \in [1 - \epsilon/200T, 1 + \epsilon / 200T].$ Note that  in this case we have
    \begin{align*}
    \frac{1}{\hat{\sigma}_{i,t}} &\le \frac{1 + \epsilon / 200T}{\sigma_{i,t}} \\
    &\le \frac{1 + \epsilon / 200T}{2 \kappa + 0.1\epsilon_t} .
    \end{align*}

    Recall $\hat{\beta}_{i,t} := \min \Big(1, \left( (0.5 + \kappa) \cdot y_{i,t} - (0.5 - \kappa) \right) \cdot \frac{1}{\hat{\sigma}_{i,t}} \Big).$ Note 
    \begin{align*}
    \left( (0.5 + \kappa) \cdot y_{i,t} - (0.5 - \kappa) \right) \cdot \frac{1}{\hat{\sigma}_{i,t}} \le 2 \kappa \cdot  ( 2 \kappa + 0.1 \epsilon_t)^{-1} \cdot (1 + \epsilon / 200T)  \le 1
    \end{align*}
    where the final (loose) inequality follows as $T \ge 1$ and $\kappa \le 0.4$. This implies
    \begin{align*}
    \Pr[(i,t) \in \mathcal{A}_2] &= x_{i,t} \cdot \sigma_{i,t} \cdot \hat{\beta}_{i,t} &&\text{(\Cref{eqn:samplingsecondpick})} \\
    &= x_{i,t} \cdot \sigma_{i,t} \cdot \left( (0.5 + \kappa) \cdot y_{i,t} - (0.5 - \kappa) \right) \cdot \frac{1}{\hat{\sigma}_{i,t}} \\
    &\ge x_{i,t} \cdot (1 - \epsilon / 200T) \cdot \left( (0.5 + \kappa) \cdot y_{i,t} - (0.5 - \kappa) \right)   
    \end{align*}
    and similarly $$\Pr[(i,t) \in \mathcal{A}_2] \le x_{i,t} \cdot (1 + \epsilon / 200T) \cdot \left( (0.5 + \kappa) \cdot y_{i,t} - (0.5 - \kappa) \right)  .$$

Then, we have 
\begin{align*}
\Pr[(i,t) \in \mathcal{A}_1] &+ \Pr[(i,t) \in \mathcal{A}_2] \\
&\ge x_{i,t} \cdot (1 - (0.5 + \kappa + \epsilon_t) \cdot y_{i,t} + (1 - \nicefrac{\epsilon}{200T}) \cdot ((0.5 + \kappa) y_{i,t} - (0.5 - \kappa)) \\
& = x_{i,t} \cdot \left(  1 + y_{i,t} \left( -0.5 - \kappa - \epsilon_t + (1 - \nicefrac{\epsilon}{200T}) \cdot (0.5 + \kappa) \right) - (1 - \nicefrac{\epsilon}{200T}) \cdot (0.5-\kappa) \right) \\
& \geq x_{i,t} \cdot \left(  1 + \left( -0.5 - \kappa - \epsilon_t + (1 - \nicefrac{\epsilon}{200T}) \cdot (0.5 + \kappa) \right) - (1 - \nicefrac{\epsilon}{200T}) \cdot (0.5-\kappa) \right) ,
\end{align*}
where the last inequality uses that the coefficient of $y_{i,t}$ above is $-0.5 - \kappa - \epsilon_t + (1 - \nicefrac{\epsilon}{200T} )(0.5 + \kappa)$, which is non-positive. Hence we can bound 
\begin{align*}
\Pr[(i,t) \in \mathcal{A}_1] + \Pr[(i,t) \in \mathcal{A}_2] 
&\ge x_{i,t} \cdot (1 - (0.5 + \kappa + \epsilon_t)   + (1 - \epsilon / 200T) \cdot 2 \kappa ) \\
&= x_{i,t} \cdot \left( 0.5 + \kappa  - \epsilon_t - \frac{\epsilon}{100T} \cdot \kappa \right). \\
&\ge x_{i,t} \cdot \left( 0.5 + \kappa  -\epsilon_{t+1} \right).
\end{align*} We also have the analogous upper bound 
\begin{align*}
\Pr[(i,t) \in \mathcal{A}_1] + \Pr[(i,t) \in \mathcal{A}_2] \le x_{i,t} \cdot \left( 0.5 + \kappa + \epsilon_{t+1} \right).
\end{align*}
By the union bound, with probability at least $1 - 2n \cdot \exp(-100nT)$ these two bounds hold for all $i$ with $(i,t)$ late. Via the inductive hypothesis, our starting assumption occurred with probability at least $1 - 2n(t-1) \cdot \exp(-100nT)$. Hence, by a final application of the union bound, we have that our desired property for arrivals $\{1, 2, \ldots, t\}$ holds with probability at least $1 - 2nt \cdot \exp(-100nT).$
    
\end{proof}
\section{Conclusion and Future Directions}
\label{sec:conclusion}

We gave the first algorithm achieving an approximation ratio strictly better than $\nicefrac{1}{2}$ for capacitated online resource allocation, when comparing to the (computationally inefficient) optimum online algorithm. Our algorithm crucially limited the (necessary) positive correlation between offline users, and analyzed this via an inductive bound depending on the total LP flow sent to an individual user. This challenge does not arise in competitive analysis, and lends credence to the value of the optimum online as a complementary benchmark to the prophet.

Numerous directions for future research are suggested by our work. Can our guarantee of $0.5 + \kappa$ for $\kappa = \approxconstant$ be improved, perhaps by rounding stronger LPs? Is there a better tradeoff possible between the amount of positive correlation we introduce for early arrivals and the approximation ratio possible on late ones? 

Finally, we believe the techniques developed for handling positive correlation may prove useful for future generalizations. The prophet inequalities literature has studied more general settings than capacitated allocation where the tight $\nicefrac{1}{2}$-guarantee is known \cite{feldman2015combinatorial, dutting2020prophet}, and our work gives some evidence that it is possible to get an improved approximation ratio against the online benchmark for these problems as well. 
\newpage
\bibliographystyle{alpha}
\bibliography{abb,a,ultimate}
\newpage
\appendix
\section{Informative Examples and Observations} \label{app:informative-examples}

In this section, we give some examples and observations which might help to gain a deeper understanding of the problem. 

\subsection{The Generalization of \texorpdfstring{\cite{braverman2022max}}{} Fails}

Given the attention previously dedicated to the unit-capacity case, we first ask how these algorithms perform for the capacitated problem. Previous works for matching have all used the LP relaxation \eqref{LP} with $c_t = 1$, in the special case where each success probability $q_{i,t}$ equals 1. In the simplest case where every resource $t$ either (i) arrives with a fixed capacity and values, with probability $p_t$ or (ii) does not arrive, with probability $1-p_t$, the algorithm works in the following way: in the case that resource $t$ arrives, every available user $i$ sends a proposal to $t$ with probability $$\frac{ x_{i,t} }{ p_t \cdot \left( 1 - \sum_{t' < t} x_{i, t'} \right) },$$ an expression that is at most 1 by \eqref{LP} Constraint~\eqref{eqn:PPSWConstraint}. The resource is matched to the proposing user with highest value $v_{i,t}$. \cite{braverman2022max} show that this algorithm gives a $(1-1/e)$-approximation against \eqref{LP}, and hence also the optimum online benchmark. 

To account for capacities, we might naturally generalize this algorithm to match an arriving resource $t$ to the top $c_t$ proposing users. Surprisingly, this small modification drastically changes the algorithm's performance. 

\bdmzeroapprox*

\begin{proof}

Take some $n$ such that $n > \frac{2}{\epsilon}$, and consider an instance with $n$ users and two resources. The first resource has a capacity of $n$ (i.e., values are additive over all users), arrives with probability $1-1/n$ and values are $1$ for each user individually. The second resource is unit-capacity, arrives with probability 1, and values are $n^2$ for each user individually. All allocations are successful with probability 1. 

The unique optimal solution to \eqref{LP} sets $x_{i,1} = 1-1/n$ for every pair $(i,1)$ incident to the first resource, and sets $x_{i,2} = 1/n$ for every pair $(i,2)$ incident to the second resource. Thus, when running (the natural generalization of) \cite{braverman2022max}, every user proposes to the first resource if it arrives, and hence with probability $1-1/n$ all users are assigned in the first timestep. If the first resource does not arrive, exactly one user is allocated to the second unit-capacity resource. Hence the expected gain of the algorithm is $\left( 1 - \frac{1}{n} \right) \cdot n + \frac{1}{n} \cdot n^2 = 2n-1$. However clearly for this instance $\opton \ge n^2$. 
\end{proof}

\subsection{Positive Correlation is Required}

Next, we argue that we \emph{need} to have positive correlation for general capacitated resource allocation.

\positivecorrelationrequired*

\begin{proof}
Let $F_{i,t}$ denote an indicator for user $i$ being free just before the arrival of resource $t$. 
Consider resource $t$ with capacity two arriving with probability $\epsilon$ which is adjacent to two users $\{i,j\}$ with unit values. Imagine the LP sets a value of $\epsilon$ on each edge. To achieve an approximation factor of $(0.5 + \kappa)$ against LP, we are required to have that that the expected number of users assigned to $t$
is at least $(0.5 + \kappa) \cdot 2\epsilon$. Equivalently, we must have $$\Pr[F_{i,t+1}] + \Pr[F_{j,t+1}] < 2- (0.5 + \kappa) \cdot 2\epsilon$$ implying $$\Pr[F_{i,t+1}] \cdot \Pr[F_{j,t+1}] < (1 - (0.5 + \kappa) \cdot \epsilon)^2 = 1 - (1 + 2 \kappa) \epsilon + O(\epsilon^2).$$ However, because $i$ and $j$ can only be matched if $t$ arrives, we have 
\begin{align*}
\Pr[F_{i,t+1} \wedge F_{j,t+1}] & \ge 1- \epsilon > \Pr[F_{i,t+1}] \cdot \Pr[F_{j,t+1}]   ,
\end{align*} where the final inequality holds for sufficiently small $\epsilon$. 
\end{proof}

\subsection{On the Gap of \texorpdfstring{\eqref{LP}}{}}

\begin{exm}
    There exists an instance of online capacitated allocation where $$\frac{\opton}{\textup{OPT}\eqref{LP}} \le 0.75.$$
\end{exm}
\begin{proof}

Consider an instance with two offline users, and two stochastic arrivals. The first resource has capacity 2, and arrives with probability $\nicefrac{1}{2}$; the second resource has capacity 1 and arrives with probability 1. Both resources have a value of 1 for each user; every edge is successful with probability 1.  

The optimum online algorithm achieves a value of 2 if the first user arrives, and a value of 1 otherwise, hence achieving $1.5$ in expectation. However, a feasible solution to \eqref{LP} sets $x_{i,t} = \nicefrac{1}{2}$ for every edge $(i,t)$, hence achieving a value of 2. 
\end{proof}

\subsection[A Bound Depending on min c\_t]{A Bound Depending on $\min_t c_t$.}
\label{sec:bound_depending_kt}

As mentioned in \Cref{sec:approx_late_edges}, the bound following \Cref{eqn:bound_conditional_At} in the proof of \Cref{lemma:expected_At_with_condition} is not tight if all $c_t$ are strictly greater than one. Still, even though this step looks quite lossy at first glance, we are not losing much in our analysis by replacing $\min_t c_t$ with one. To see this, consider 
replacing the last inequality in the proof of \Cref{lemma:expected_At_with_condition} with a bound depending on $\min_t c_t$. Doing so, we get 
\begin{align}\label{eqn:bound_depending_ct}
	\E[A_t \mid t \text{ arrived}, F_{i,t}] &\le 1 - \tau  + \Deltac \cdot (0.5 + \kappa) \cdot c_t \nonumber \\
	&= \left( \frac{1 - \tau}{c_t}  + \Deltac \cdot (0.5 + \kappa) \right) \cdot c_t \nonumber \\ 
	& \le \left( \frac{1 - \tau}{\min_{t'} c_{t'}}  + \Deltac \cdot (0.5 + \kappa) \right) \cdot c_t  .
\end{align}

As a consequence, in order to show the desired lower bound on $\rho_{i,t}$, we first can use the same reasoning as we used in order to derive \Cref{eqn:rho_lower_bound}, but use Inequality~\eqref{eqn:bound_depending_ct} instead:
\begin{align*}
    \rho_{i,t} \ge (1 - (0.5 + \kappa) \cdot y_{i,t}) \cdot \left( \tau - \left( \frac{1-\tau}{\min_{t'} c_{t'}} + \Deltac \cdot(0.5 + \kappa) \right) \right)  .
\end{align*}
Thus, the right-hand side needs to be at least as large as $(0.5 + \kappa) y_{i,t} - (0.5 - \kappa)$. In other words, we are required to show that
\begin{align*}
    (1 - (0.5 + \kappa) \cdot y_{i,t}) \cdot \left( \tau - \left( \frac{1-\tau}{\min_{t'} c_{t'}} + \Deltac \cdot(0.5 + \kappa) \right) \right) \geq (0.5 + \kappa) y_{i,t} - (0.5 - \kappa)  .
\end{align*}
Hence we can take any $\kappa$ such that
\begin{align}\label{eqn:inequality_for_c}
    \tau - \left( \frac{1-\tau}{\min_{t'} c_{t'}} + \Deltac \cdot(0.5 + \kappa) \right) \geq \frac{2 \kappa}{0.5-\kappa}  .
\end{align}

As a consequence, we can now solve \Cref{eqn:inequality_for_c} for $\kappa$ in order to improve upon the constant of $\approxconstant$ which we used initially, as a function of $\min_t c_t$. In \Cref{tab:improved_c_values_depending_kt}, we state these constants for $\min_t c_t \in \{ 2,\dots,9 \} $, demonstrating that there is little loss in our analysis of \Cref{allocationalgexact} when replacing $\min_t c_t$ with $1$. 

\begin{table}[h]
    \centering
    \begin{tabular}{c|c|c|c|c|c|c|c|c|c }
        \textbf{$\min_t c_t$} & 1 & 2 & 3 & 4 & 5 & 6 & 7 & 8 & 9 \\  \hline
        \textbf{$\kappa$} & 0.0115 & 0.0126 & 0.0131 & 0.0133 & 0.0134 & 0.0135 & 0.01362 & 0.01367 & 0.01371 \\
    \end{tabular}
    \vspace{0.3em}
    \caption{Values of $\kappa$ depending on $\min_t c_t$}
    \label{tab:improved_c_values_depending_kt}
\end{table}

\section{Deferred Proofs}
\label{app:deferredproofs}

In this section, we provide proofs which were deferred from the main body. 

\subsection{Proof of \Cref{thm:beat_half_for_comb_auc}}
\label{app:beat_half_comb_auc}

\beathalfforcombauc*
\begin{proof}
    We apply Theorem 19 of \cite{banihashem2024power}, as our problem of capacitated resource allocation can be viewed exactly as what they call a \emph{prophet inequalities problem}. Using their notation, we take $\mathcal{A}^{\text{inp}}$ to be \Cref{allocationalggeneral}, with expected social welfare $\E [v(\mathcal{A}^{\text{inp}})]$. Note that \Cref{allocationalggeneral} is what \cite{banihashem2024power} call ``past-valuation-independent,'' as its allocation decision for buyer $t$ depends only on the set of available items, the arriving valuation/capacity $v_t(\cdot)$, and the LP solution calculated from knowledge of the input distributions. Note also that for each buyer $t$, the outcome space (what \cite{banihashem2024power} refer to as ``$X_t$'') is of size at most $\binom{n}{c_t} = \text{poly}(n)$ because $c_t$ is upper bounded by a constant. Finally, although our distribution over $v_t(\cdot)$ is not continuous, it is not hard to satisfy this assumption by adding a small amount of noise or a tiebreaking coordinate (as mentioned in \cite{banihashem2024power}). 

    Hence, there is a pricing based algorithm $\mathcal{A}^{\text{out}}$ which uses $\text{poly}(T, \binom{n}{\max_t c_t}, \nicefrac{1}{\epsilon})$ many samples, runs in time $\text{poly}(T, \binom{n}{\max_t c_t}, \nicefrac{1}{\epsilon})$ and whose expected social welfare satisfies $$  \E [\mathcal{A}^{\text{out}}] \geq (1- \epsilon) \cdot \E [\mathcal{A}^{\text{in}}].$$ 
\end{proof}

\subsection{Proof of \texorpdfstring{\Cref{observation:LP_relax_OPT_on}}{}}
\label{app:LP_relax_OPT}

\observationlprelaxopton*

\begin{proof}
    Define an indicator random variable $X_{i,t}$ for every pair $(i,t)$, which is one if and only if the optimum online algorithm allocates user $i$ to resource $t$. In addition, let $Q_{i,t}$ be the indicator which is one if the assignment of the pair $(i,t)$ was successful; i.e. the independent Bernoulli coin flip with probability $q_{i,t}$ comes up heads.
    
    Denote by $x^\ast_{i,t} = \E[X_{i,t}]$. First, note that the welfare achieved by the optimum online algorithm is $$\opton = \E \left[ \sum_{i,t} v_{i,t} X_{i,t} Q_{i,t} \right] = \sum_{i,t} v_{i,t} \cdot x^\ast_{i,t} \cdot q_{i,t},$$ coinciding with the objective of \eqref{LP}. Here the expectation is over the randomness in $X_{i,t}$ as well as the success probabilities for $(i,t)$, and we crucially use that the successful realization of $(i,t)$ is independent of our decision to allocate along $(i,t)$. 
    
    Also, observe that for any resource $t$, we have $\sum_i X_{i,t} = 0$ if the resource does not arrive, and $\sum_i X_{i,t} \leq c_t$ if the resource arrives, as any algorithm is allowed to allocate at most $c_t$ users to resource $t$ if the resource arrives. Hence $$\sum_i x^\ast_{i,t}   = \E \left[ \sum_i X_{i,t} \right] = \Pr \left[ t \text{ arrives} \right] \cdot \E \left[ \sum_i X_{i,t} \growingmid t \text{ arrives} \right] \leq p_t \cdot c_t.$$ 

    Finally, note that if resource $t$ arrives, the optimum online algorithm can only allocate user $i$ if it is available. For user $i$ being available, it had not to be allocated to some previous resource $t' < t$ whose independent coin flip $Q_{i,t'}$ was successful as well. 
    Crucially, for any online algorithm, the event that user $i$ is available at time $t$ is independent of the arrival of resource $t$ (this does not hold for an offline algorithm). Hence, we observe 
    \begin{align*}
        x^\ast_{i,t} & = \E [ X_{i,t} ]  = \Pr \left[ t \text{ arrives} \right] \cdot \E \left[ X_{i,t} \growingmid t \text{ arrives} \right] \\ & \leq p_t \cdot \E \left[ 1 - \sum_{t' < t} X_{i,t'} Q_{i,t'} \growingmid t \text{ arrives} \right] \\ & = p_t \cdot \E \left[ 1 - \sum_{t' < t} X_{i,t'} Q_{i,t'} \right]  =  p_t \cdot \left( 1 - \sum_{t' < t} x^\ast_{i,t'} \cdot q_{i,t'} \right). 
    \end{align*}

    As a consequence, $\{ x_{i,t}^\ast \}_{i,t}$ is a feasible solution to \eqref{LP} and hence, $\text{OPT}\eqref{LP} \geq \opton$.
\end{proof}

\subsection{Proof of \texorpdfstring{\Cref{claim:inductive_step}}{}}
\label{app:proof_inductive_step}

\claiminductivestep*

\begin{proof}
    Plugging in the definition of $f(z) = 1 + z \cdot \left( \frac{(0.5+\kappa)^2}{1 - z \cdot (0.5 + \kappa)} \right)$, the claim is equivalent to 
    \begin{align*}
        \left( 1 + \frac{(0.5+\kappa)^2 y}{1 - (0.5+\kappa)y} \right) \left( 1 + \frac{(0.5+\kappa)^2 x }{1 - (0.5+\kappa) (x+y)} \right) \le 1 + \frac{(0.5+\kappa)^2 (x+y)}{1 - (0.5+\kappa)(x+y)} .
    \end{align*}
    Multiplying out the left-hand side and subtracting $1 + \frac{(0.5+\kappa)^2 x }{1 - (0.5+\kappa) (x+y)}$ on both sides, this is equivalent to
    \begin{align*}
        \frac{(0.5+\kappa)^2 y}{1 - (0.5+\kappa)y} + \frac{(0.5+\kappa)^2 y}{1 - (0.5+\kappa)y} \cdot \frac{(0.5+\kappa)^2 x }{1 - (0.5+\kappa) (x+y)} \le \frac{(0.5+\kappa)^2 y}{1 - (0.5+\kappa)(x+y)} .
    \end{align*}
    If $y = 0$, the claim is trivially true. If $y > 0$, we can divide both sides by $(0.5+\kappa)^2 y$ to get 
    \begin{align*}
        \frac{1}{1 - (0.5+\kappa)y} + \frac{1}{1 - (0.5+\kappa)y} \cdot \frac{(0.5+\kappa)^2 x }{1 - (0.5+\kappa) (x+y)} \le \frac{1}{1 - (0.5+\kappa)(x+y)} .
    \end{align*}
    Multiplying both sides by $(1 - (0.5+\kappa)y)\cdot (1 - (0.5+\kappa) (x+y))$, we get
    \begin{align*}
        1 - (0.5+\kappa) (x+y) + (0.5+\kappa)^2 x \le 1 - (0.5+\kappa)y .
    \end{align*}
    Subtracting $1 - (0.5+\kappa)y$ on both sides, we finally end up with
    \begin{align*}
        - (0.5+\kappa) x + (0.5+\kappa)^2 x \le 0 
    \end{align*}
    which is clear. 
\end{proof}

\section{Beyond Bernoulli Distributions}
\label{app:beyond_bernoulli}

When not restricting the model to Bernoulli arrivals, for every round $t$, there is a known distribution $\{p_{t,j}\}_j$ over valuation vectors $\{v_{i,t,j}\}_{i}$ and a capacity $c_{t,j}$. Upon the arrival of resource $t$, it samples one index $j \in \{1,\dots,m\}$ with probability $p_{t,j}$\footnote{We assume without loss of generality that all resource share the same space of valuation vectors and capacities, and we can set $p_{t,j} = 0$ if realization $j$ is not feasible for resource $t$. Also, we assume that resources always arrive by adding a valuation vector containing only zeros with the probability of resource $t$ not arriving.}, and realizes capacity $c_{t,j}$ and values $\{ v_{i,t,j} \}_i$ over users. For the ease of exposition, we discuss general arrivals in the case that each success probability $q_{i,t,j} = 1$, and describe the changes needed to handle arbitrary success probabilities $q_{i,t,j} \in [0,1]$ in \Cref{app:stochasticrewards}.

\paragraph{Generalized LP} We generalize \ref{LP} as follows. 
\begin{align}
	\nonumber  \max \  &  \sum_{i,t,j} x_{i,t,j} \cdot v_{i,t,j} && \tag{General-LP\textsubscript{on}} \label{generalLP} \\
    \text{s.t. }& \sum_t \sum_j x_{i,t,j} \leq 1 && \text{for all } i \in I \label{eqn:generalalloconce} \\
	&\sum_{i} x_{i,t,j} \le p_{t,j} \cdot c_{t,j} && \text{for all } t \in [T], j \in [m] \\
	& 0 \le x_{i,t,j} \le p_{t,j} \cdot \left( 1 - \sum_{t' < t} \sum_{j'} x_{i, t', j'} \right) && \text{for all } i \in I, t \in [T], j \in [m]\label{eqn:generalPPSWConstraint}
 \end{align}

In an equivalent manner to \Cref{observation:LP_relax_OPT_on}, we can argue that also for general distributions, $\OPT\eqref{generalLP} \geq \opton$, i.e. \ref{generalLP} is a relaxation of the optimum online algorithm. 

\paragraph{Generalized Algorithm.} In order to round any fractional LP solution to an integral one in an online fashion, we extend our \Cref{allocationalgexact} as follows: In round $t$, we see the realization of index $j$. We replace all previous LP variables with the ones from the generalized LP for index $j$ and run the slightly modified \Cref{allocationalggeneral}.

\begin{algorithm}[H]
	\caption{}
	\label{allocationalggeneral}
	\begin{algorithmic}[1]
		\State $\kappa \gets \approxconstant$ 
		\State Solve \eqref{generalLP} for $\{x_{i,t,j} \}$ 
		\For{each time $t$}
		\State Observe index $j$ sampled from $(p_{t,j})_j$
		\State Define users $\FPtj := \textsf{PS}( ( x_{i,t,j}/p_{t,j} )_{i \in I})$ 
		\For{each user $i \in \FPtj$}
		\If{$i$ is available}
		\State Allocate $i$ to $t$ with probability $\alpha_{i,t} := \min \left(1, \frac{0.5 + \kappa }{1 - (0.5 + \kappa) \cdot \sum_{t' < t} \sum_{j'} x_{i,t',j'}} \right)$
		\EndIf 
		
		\EndFor
		
		\State Let $A_{t,j} \gets \text{number of users allocated to } t \text{ with sampled index } j \text{ thus far}$ \label{line:sample_defAtj_general}
		\State Define users $\SPtj:= \textsf{PS} \left( \left( \left(1 - \frac{A_{t,j}}{c_{t,j}} \right) \cdot x_{i,t,j}/p_{t,j} \right)_{i \in I} \right)$ 
		\For{each user $i \in \SPtj$ with $\alpha_{i,t} = 1$}
		\If{$i$ is available} 
		\State Compute $\rho_{i,t,j} := \E \left[ \mathbbm{1}[i \text{ available after \Cref{line:sample_defAtj_general}}] \cdot \left( 1 - \frac{A_{t,j}}{c_{t,j}} \right) \mid t \text{ sampled index } j \right] $ \label{line:beyondberncomputerho}
        \State $\beta_{i,t,j} \gets \min \Big(1, \left( (0.5 + \kappa) \cdot \sum_{t' < t} \sum_j x_{i,t',j}- (0.5 - \kappa) \right) \cdot \frac{1}{\rho_{i,t,j}} \Big).$
		\State Allocate $i$ to $t$ with prob. ${\beta_{i,t,j}}$
		\EndIf
		\EndFor
		
		\EndFor
	\end{algorithmic}
\end{algorithm}	

As in our Bernoulli case, observe that we choose  ${\beta_{i,t,j}}$ in a way so that the following holds: $\Pr[(i,t) \text{ assigned for sampled index } j ] = (0.5+\kappa) \cdot x_{i,t,j}$. 
Also, note that this algorithm can be implemented in polynomial time in the number of resources and users and the size of the support of the distributions. Concerning the computation of $\rho_{i,t,j}$, we can observe that for our choice of $\kappa = \approxconstant$, the generalized analysis also shows that any $\rho_{i,t,j}$ is lower bounded by a constant; equivalently to the Bernoullli case. This can be used to estimate $\rho_{i,t,j}$ via samples with a multiplicative error as small as desired, implying a $(0.5 + \kappa - \epsilon)$-approximate algorithm, following the logic of \Cref{app:sample_based_algo}.

\paragraph{Generalized Analysis.} 
In order to prove the generalization of \Cref{theorem:main_theorem}, the major work is to change the syntax of the lemmas on the way. We do not give details for all lemmas but rather provide the key steps on what to change and how to overcome obstacles on the way.

First, we extend and change several definitions such as $y_{i,t} := \sum_{t' < t} \sum_j x_{i,t',j}$ or $\mathcal{A}_1^j$, $\mathcal{A}_2^j$ as the set of assignments $(i,t)$ if the realized index is $j$ via a first or second proposal. The lemmas, observations and statements which referred to ``$t$ arriving" are now with respect to the event ``$t$ realizes index $j$". For example, when talking about assigning $i$ to $t$ via a first proposal, we replace this by saying that we assign $i$ to $t$ via a first proposal when $t$ realized the valuation vector with index $j$. 

The proofs for the analysis of early pairs directly carry over after adapting the syntax. For late pairs, the generalization of the proof of \Cref{lemma:approx_late_edges} (i) is also straightforward, as is the combination of both analyses at the end. 

We need to take some care in generalizing the proof of \Cref{lemma:approx_late_edges} (ii). The majority of the steps can be extended straightforwardly via syntactic generalization from \Cref{sec:analysis} (or \Cref{app:sample_based_algo} with an estimate of the expectation in \Cref{line:beyondberncomputerho}). In contrast, the proof of generalized versions of the correlation bound from \Cref{subsection:correlation_bound}, and in particular \Cref{claim:claim_for_corrbound} need some short updates. Note however that as \Cref{claim:claim_for_corrbound} only concerns early pairs, it is not affected by the updates for a sample-based algorithm as in \Cref{app:sample_based_algo}.

To see why \Cref{claim:claim_for_corrbound} also holds in the more general variant, we go through its proof steps one-by-one. Concerning the generalization of Step~\ref{induction:step_2} we note that the probability of both users being free after time $t+1$ can still be decomposed as the product of the probability of both being free before times the conditional probability of assigning neither via a first proposal (as in \Cref{equation:induction_step}). Still, we are required to sum the latter conditional probabilities for all possible realizations of $j$. 
Doing so, we first follow Steps~\ref{induction:step_2} and \ref{induction:step_3} from the Bernoulli case.
During Step~\ref{induction:step_4}, we need to show that for two distinct users $i, i'$ and resource $t$, the following inequality holds: 
\begin{align}\label{eqn:inequality_general_distributions}
    &\alpha_{i,t} \alpha_{i',t} \Pr[F_{i,t}] \Pr[F_{i',t}] \left( \sum_j \frac{x_{i,t,j} x_{i',t,j}}{p_{t,j}} - \left( \sum_j x_{i,t,j} \right) \left( \sum_j x_{i',t,j} \right) \right) \\ & \nonumber \hspace{0.5cm } \leq \Pr[F_{i,t}] \Pr[F_{i',t}]  (0.5 + \kappa) \left(1 - \alpha_{i',t} \sum_j x_{i',t,j} \right) \alpha_{i,t} \left( \sum_j x_{i,t,j} \right) .
\end{align}
In order to argue that this inequality is indeed true, we depart from the proof of the Bernoulli case by controlling the term $\sum_j \frac{x_{i,t,j} x_{i',t,j}}{p_{t,j}}$ via the online constraint for the user $i'$. By Constraint~\eqref{eqn:generalPPSWConstraint}, we know that $$\frac{ x_{i',t,j}}{p_{t,j}} \leq 1 - y_{i',t} .$$ Using this, we can bound $$\sum_j \frac{x_{i,t,j} x_{i',t,j}}{p_{t,j}} \le \left( 1 - y_{i',t} \right) \sum_j x_{i,t,j} . $$ Plugging this into the left-hand side of \Cref{eqn:inequality_general_distributions} and rearranging terms, we can conclude in a similar way as we did using \Cref{fact:bound_alpha_via_online} in the Bernoulli case. Afterwards, Step~\ref{induction:step_5} of the correlation bound can again proceed via syntactic generalization which concludes the proof for general distributions.

\section{Stochastic Rewards} \label{app:stochasticrewards}

In \Cref{sec:analysis} we assumed for convenience that every pair $(i,t)$ had a success probability $q_{i,t} = 1$. This was mainly for convenience of notation, as the guarantees for our algorithm carry over to the case of arbitrary success probabilities $q_{i,t} \in [0,1]$. The changes can furthermore be adapted to our sample-based algorithm (as in \Cref{app:sample_based_algo}) and algorithm for non-Bernoulli arrivals (as in \Cref{app:beyond_bernoulli}), although for simplicity we start by extending the algorithm for Bernoulli arrivals without samples. 

We recall that we say $i$ is \emph{allocated} to $t$ if it is one of the at most $c_t$ items which we attempt to assign to $t$, and we say it is \emph{successfully allocated} to $t$ if and only if it is allocated and the independent success indicator $\text{Ber}(q_{i,t})$ comes up heads. Note that if for every $(i,t)$ we have that $i$ is allocated to $t$ with probability $(0.5 + \kappa) \cdot x_{i,t}$, then because of the independence of the success indicators we have that the expected welfare contribution of $(i,t)$ is $(0.5 + \kappa ) \cdot x_{i,t} \cdot q_{i,t}$ and hence we achieve a $(0.5 + \kappa)$-approximation to \eqref{LP}. 

If we naturally update our definition of $y_{i,t} := \sum_{t' < t} x_{i,t'} \cdot q_{i,t'}$ (instead of $\sum_{t' < t} x_{i,t'}$), many of the changes required to the analysis are syntactic. We inductively show that the probability $(i,t)$ is allocated is $(0.5 + \kappa) \cdot x_{i,t}$, and hence have as part of the inductive hypothesis that $\Pr[F_{i,t}] = 1 - (0.5 + \kappa) \cdot y_{i,t}$. Thus, the probability an early $(i,t)$ is allocated is precisely $$p_t \cdot \Pr[F_{i,t}] \cdot \frac{0.5 + \kappa}{1 - (0.5 + \kappa) y_{i,t}} = (0.5 + \kappa ) \cdot x_{i,t}.$$ The analysis for late pairs also generalizes syntactically, with the caveat that we must take care to consider how the independent $\text{Ber}(q_{i,t})$ affect the correlation bound of \Cref{corollary:bound_correlation_unrestricted}. Intuitively, as these Bernoullis are independent of our proposals and history, they should not contribute to worse positive correlation. This is formalized below. 

We first consider the proof of \Cref{claim:claim_for_corrbound}. Our original proof (the grey line below) used the bound \color{gray}
$$\Pr[F_{i,t+1} \wedge F_{j,t+1}] \le \Pr[F_{i,t} \wedge F_{j,t}] \cdot \left( 1 - x_{i,t} \cdot \alpha_{i,t} - x_{j,t} \cdot \alpha_{j,t} + \frac{x_{i,t} \cdot x_{j,t}}{p_t}\cdot \alpha_{i,t} \cdot \alpha_{j,t} \right).$$ \color{black}
In the new setting, with the independence of successful matches, we have instead 
$$\Pr[F_{i,t+1} \wedge F_{j,t+1}] \le \Pr[F_{i,t} \wedge F_{j,t}] \cdot \left( 1 - x_{i,t}  \alpha_{i,t}  q_{i,t} - x_{j,t}  \alpha_{j,t}  q_{j,t} + \frac{x_{i,t} q_{i,t} \cdot x_{j,t} q_{j,t} }{p_t}\cdot \alpha_{i,t} \cdot \alpha_{j,t} \right).$$ Hence, we will define $\tilde{x}_{i,t} := x_{i,t} \cdot q_{i,t}$ and $\tilde{x}_{j,t} := x_{j,t} \cdot q_{j,t}$. As $\Pr[F_{i,t+1}]/\Pr[F_{i,t}] =  1 - x_{i,t} \alpha_{i,t} q_{i,t}  =  1 - \tilde{x}_{i,t} \cdot \alpha_{i,t} $ the proof proceeds identically with this syntactic change, and implies $$\Pr[F_{i,t+1} \wedge F_{j,t+1}] \le \Pr[F_{i,t+1}] \cdot \Pr[F_{j,t+1}] \cdot f(y_{i,t} + \tilde{x}_{i,t}) = \Pr[F_{i,t+1}] \cdot \Pr[F_{j,t+1}] \cdot f(y_{i,t+1}).$$ With this change in place the proof of \Cref{corollary:bound_correlation_unrestricted} can be modified syntatically with the new definition of $y_{i,t}$. Indeed, the only property we need is that is that $A_i$ should now denote the event that $i$ is \emph{successfully} allocated to an arrival in $[t^i, t-1]$ (and similarly for $j$). Then,we compute $$\Pr[A_i] = \sum_{t' \in [t^i, t-1]} (0.5 + \kappa) \cdot x_{i,t'} \cdot q_{i,t'} \le 2 \kappa $$ where we use the updated definition of $y_{i,t}$. 

We also can readily integrate these changes in our (sampling-based) algorithm for arrivals from general distributions. In particular, we have the following LP relaxation and algorithm. 

\begin{align}
	\nonumber  \max \  &  \sum_{i,t,j} x_{i,t,j} \cdot q_{i,t,j} \cdot v_{i,t,j} && \tag{General-LP\textsubscript{on}-Stochastic} \label{generalLPstochastic} \\
    \text{s.t. } & \sum_t \sum_j x_{i,t,j} \cdot q_{i,t,j} \leq 1 && \text{for all } i \in I \label{eqn:generalalloconcestoch} \\
	&\sum_{i} x_{i,t,j} \le p_{t,j} \cdot c_{t,j} && \text{for all } t \in [T], j \in [m] \\
	& 0 \le x_{i,t,j} \le p_{t,j} \cdot \left( 1 - \sum_{t' < t} \sum_{j'} x_{i, t', j'} \cdot q_{i, t', j'} \right) && \text{for all } i \in I, t \in [T], j \in [m]
 \end{align}

\begin{algorithm}[H]
	\caption{}
	\label{allocationalggeneralstochastic}
	\begin{algorithmic}[1]
		\State $\kappa \gets \approxconstant$ 
		\State Solve \eqref{generalLPstochastic} for $\{x_{i,t,j} \}$ 
		\For{each time $t$}
		\State Observe index $j$ sampled from $(p_{t,j})_j$
		\State Define users $\FPtj := \textsf{PS}( ( x_{i,t,j}/p_{t,j} )_{i \in I})$ 
		\For{each user $i \in \FPtj$}
		\If{$i$ is available}
		\State Allocate $i$ to $t$ with probability $\alpha_{i,t} := \min \left(1, \frac{0.5 + \kappa }{1 - (0.5 + \kappa) \cdot \sum_{t' < t} \sum_{j'} x_{i,t',j'} \cdot q_{i,t',j'}} \right)$
		\EndIf 
		
		\EndFor
		
		\State Let $A_{t,j} \gets \text{number of users allocated to } t \text{ with sampled index } j \text{ thus far}$ \label{line:sample_defAtj_stochastic}
		\State Define users $\SPtj:= \textsf{PS} \left( \left( \left(1 - \frac{A_{t,j}}{c_{t,j}} \right) \cdot x_{i,t,j}/p_{t,j} \right)_{i \in I} \right)$ 
		\For{each user $i \in \SPtj$ with $\alpha_{i,t} = 1$}
		\If{$i$ is available} 
		\State Compute $\rho_{i,t,j} := \E \left[ \mathbbm{1}[i \text{ available after \Cref{line:sample_defAtj_stochastic}}] \cdot \left( 1 - \frac{A_{t,j}}{c_{t,j}} \right) \mid t \text{ sampled index } j \right] $ 
        \State $\beta_{i,t,j} \gets \min \Big(1, \left( (0.5 + \kappa) \cdot \sum_{t' < t} \sum_j x_{i,t',j} \cdot q_{i,t',j} - (0.5 - \kappa) \right) \cdot \frac{1}{\rho_{i,t,j}} \Big).$
		\State Allocate $i$ to $t$ with prob. ${\beta_{i,t,j}}$
		\EndIf
		\EndFor
		
		\EndFor
	\end{algorithmic}

\end{algorithm}	

To analyze the algorithm, we can now generalize $y_{i,t} := \sum_{t' < t} \sum_{j'} x_{i,t',j'} \cdot q_{i, t', j'}$, so that $\alpha_{i,t} = \min \left(1, \frac{0.5 + \kappa }{1 - (0.5 + \kappa) \cdot  y_{i,t} } \right)$ and $\beta_{i,t,j} = \min \Big(1, \left( (0.5 + \kappa) \cdot y_{i,t} - (0.5 - \kappa) \right) \cdot \frac{1}{\rho_{i,t,j}} \Big)$. Similarly, we can define $\tilde{x}_{i,t,j} := x_{i,t,j} \cdot q_{i,t,j}$. Using $y_{i,t}$ and $\tilde{x}_{i,t,j}$, the arguments of \Cref{app:beyond_bernoulli} now generalize syntatically, as described above for the Bernoulli case. The stochastic rewards do not change the argument from \Cref{app:beyond_bernoulli} that $\rho_{i,t,j}$ is bounded away from $0$ by a constant, and hence can be computed efficiently within a multiplicative error factor when running the polynomial-time sample-based algorithm. 

\end{document}